\newtheorem{theorem}{Theorem}[section]
\newtheorem{lemma}[theorem]{Lemma}
\title{The single-face ideal orientation problem in planar graphs}
\author{Yipu Wang}
\begin{document}

\maketitle

\begin{abstract}
    We consider the ideal orientation problem in planar graphs. In this problem, we are given an undirected graph $G$ with positive edge lengths and $k$ pairs of distinct vertices $(s_1, t_1), \dots, (s_k, t_k)$ called terminals, and we want to assign an orientation to each edge such that for all $i$ the distance from $s_i$ to $t_i$ is preserved or report that no such orientation exists. We show that the problem is NP-hard in planar graphs. On the other hand, we show that the problem is polynomial-time solvable in planar graphs when $k$ is fixed, the vertices $s_1, t_1, \dots, s_k, t_k$ are all on the same face, and no two of terminal pairs cross (a pair $(s_i, t_i)$ crosses $(s_j, t_j)$ if the cyclic order of the vertices is $s_i,s_j,t_i,t_j$). For serial instances, we give a simpler and faster algorithm running in $O(n \log n)$ time, even if $k$ is part of the input. (An instance is serial if the terminals appear in cyclic order $u_1, v_1, \dots, u_k, v_k$, where for each $i$ we have either $(u_i, v_i) = (s_i, t_i)$ or $(u_i, v_i) = (t_i, s_i)$.) Finally, we consider a generalization of the problem in which the sum of the distances from $s_i$ to $t_i$ is to be minimized; in this case we give an algorithm for serial instances running in $O(kn^5)$ time.
\end{abstract}

\section{Introduction}
Let $G$ be an undirected graph, and let $(s_1, t_1), \dots, (s_k, t_k)$ be $k$ pairs of vertices in $G$.
An {\em orientation} of $G$ is a directed graph that is formed by assigning a direction to each edge in $G$.
In the {\em orientation problem}, we want to find an orientation of $G$ such that for all $i$, $s_i$ can reach $t_i$.
If for all $i$ we further require that the distance from $s_i$ to $t_i$ be preserved, then we get the {\em ideal orientation} problem. For some graphs $G$, an ideal orientation may not exist, and in such cases we may want to minimize the sum of the distances from $s_i$ to $t_i$; this gets us the {\em $k$-min-sum orientation problem}. 
Minimizing the longest distance gets us the {\em $k$-min-max orientation problem}, while minimizing the shortest distance gets us the {\em $k$-min-min orientation problem}.

Ito et al.~\cite{IMOTU09} suggest the following application of the orientation problem. Suppose we have to assign one-way restrictions to aisles in, say, an industrial factory, while maintaining reachability between several sites. This corresponds to the orientation problem. We may also want to maintain the distances of routes between the sites in order to keep transit time low and productivity high; this corresponds to the ideal orientation problem.

The orientation problem was first studied by Hassin and Megiddo~\cite{HM89}, and they gave the following algorithm that works in general graphs. Without loss of generality, assume that $G$ is connected. First, compute the bridges of $G$. (A bridge is an edge whose removal would disconnect $G$). For each $i$, pick an arbitrary path from $s_i$ to $t_i$, and orient the bridges on this path in the direction that they appear on this path. If a bridge is forced to be oriented in both directions, then no orientation preserving reachability exists. Otherwise, such an orientation does exist: in the rest of $G$ each component is a 2-connected component and can be oriented to be strongly connected, by Robbins' theorem~\cite{R39}. 

By contrast, much less is known about the ideal orientation problem and generalizations like the $k$-min-sum orientation problem. Hassin and Megiddo showed that the ideal orientation problem is polynomially-time solvable when $k=2$ but is NP-hard for general $k$. Eilam-Tzoreff~\cite{E-T98} extended Hassin and Megiddo's algorithm when $k=2$ to find an ideal orientation minimizing the number of shared arcs in the paths realizing the distances in $H$. She also solved the generalization when $k=2$ and we only require the shorter distance in $H$ to be a distance in $G$. The complexity of the ideal orientation problem for fixed $k > 2$ remains open.

Fenner, Lachish, and Popa~\cite{FLP13} considered the min-sum orientation problem in general graphs when $k=2$. They give a PTAS and reduce the 2-min-sum orientation problem to the 2-min-sum edge-disjoint paths problem. (In the 2-min-sum edge-disjoint paths problem, we need to find edge-disjoint paths from $s_1$ to $t_1$ and from $s_2$ to $t_2$ of minimum total length.) It remains unknown whether the 2-min-sum orientation problem or the 2-min-sum edge-disjoint paths problem can be solved in polynomial time.

Ito et al.~\cite{IMOTU09} considered the $k$-min-sum and $k$-min-max orientation problems. They proved that both problems are NP-hard in planar graphs, and that the $k$-min-sum orientation problem is solvable in $O(nk^2)$ time if $G$ is a cactus graph and $O(n + k^2)$ time if $G$ is a cycle. They showed that the $k$-min-max orientation problem is NP-hard in cacti, even when $k=2$, but solvable in cycles in $O(n + k^2)$ time. For the $k$-min-max orientation problem, they also give a 2-approximation in cacti and a fully polynomial-time approximation scheme for fixed $k$ in cacti. It remains an open question whether $k$-min-sum or $k$-min-max orientation problems can be solved or approximated in classes of graphs more general than cacti.

In this paper we present four results, three of which deal with the ideal orientation problem and one of which deals with the $k$-min-sum problem.
First, we solve the ideal orientation problem for planar instances for {\em serial} instances, even if $k$ is part of the input. An instance of any orientation problem is {\em serial} if the terminals are all on a single face in cyclic order $u_1, v_1, \dots, u_k, v_k$, where for each $i$ we have either $(u_i, v_i) = (s_i,t_i)$ or $(u_i,v_i) = (t_i, s_i)$. See Figure~\ref{F:serial}. The algorithm is simple and relies on the fact that we can assume that the paths realizing the $s_i$-to-$t_i$ distances are pairwise non-crossing.

\begin{figure}
    \centering
    	\includegraphics[scale = 0.25]{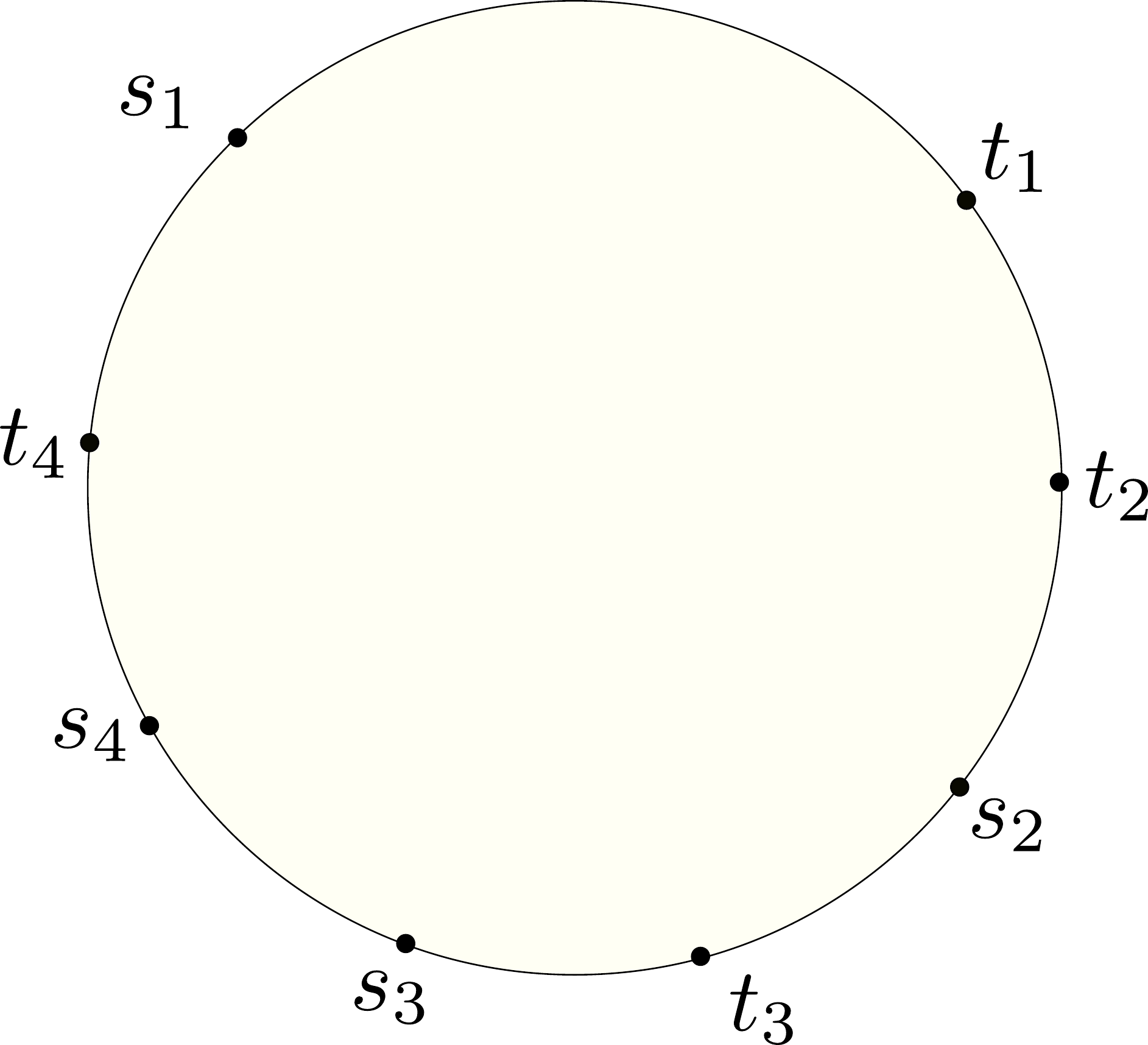} 
    \caption{a serial instance where $k=4$}
    \label{F:serial}
    \end{figure}

\begin{theorem}
    Any serial instance of the ideal orientation problem can be solved in $O(n \log n)$ time.
\end{theorem}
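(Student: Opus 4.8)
The plan is to exploit the non-crossing structure of a serial instance. First I would establish the key structural lemma: if an ideal orientation exists, then there exist shortest paths $P_1, \dots, P_k$ with $P_i$ from $s_i$ to $t_i$ that are pairwise non-crossing in the planar embedding. The idea is a standard uncrossing argument: if $P_i$ and $P_j$ cross at a vertex (or along a shared subpath), one can swap the tails after the crossing point; because the instance is serial, the endpoints on the common face are arranged as $u_1, v_1, \dots, u_k, v_k$, so after swapping we still get a valid $s_i$-to-$t_i$ and $s_j$-to-$t_j$ pair, and the total length is unchanged, so each swapped path is still shortest. Iterating (with an appropriate potential to guarantee termination) yields a pairwise non-crossing family.

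Next I would argue that the non-crossing family of shortest paths can be computed directly, without knowing the orientation in advance, by a single multiple-source shortest path computation (or $k$ independent shortest-path computations) in the undirected graph $G$, taking $O(n \log n)$ time via Dijkstra, or better using planarity; then one greedily/leftmost-selects the paths so they are mutually non-crossing. Once the paths $P_1, \dots, P_k$ are fixed and non-crossing, every edge lying on some $P_i$ is forced to be oriented in the direction it is traversed by that $P_i$; the non-crossing property guarantees this is consistent, i.e.\ no edge is forced in both directions and no vertex has an ill-defined orientation (a shared edge between $P_i$ and $P_j$ is traversed the same way by both, again because of seriality). Conversely, I must check that orienting the remaining edges of $G$ arbitrarily — say, making each remaining $2$-edge-connected block strongly connected by Robbins' theorem, and orienting bridges outward — cannot create a shortcut: any new $s_i$-to-$t_i$ path can only be longer or equal, since $P_i$ already realizes the undirected distance and edge lengths are positive, so the oriented distance equals the undirected distance. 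Hence the orientation built from the forced edges (completed arbitrarily) is ideal, and if the forced orientation is inconsistent then no ideal orientation exists by the structural lemma.

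Finally I would assemble the running time: computing shortest path distances and a non-crossing family of shortest-path trees in a planar graph takes $O(n \log n)$ (plain Dijkstra suffices for the stated bound), extracting the $k$ paths and checking consistency of forced orientations is $O(n)$, and completing the orientation on the rest of $G$ is $O(n)$; so the total is $O(n \log n)$, independent of $k$.

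The main obstacle I anticipate is the uncrossing lemma and, relatedly, pinning down exactly why seriality (and the non-crossing condition it entails) makes the forced orientations globally consistent — in particular handling the case where two shortest paths share not just a vertex but an entire subpath, and the case where a single path $P_i$ is not a simple path after uncrossing (which one must rule out, or repair, since edge lengths are positive so shortest paths may be taken simple). Getting the potential-function / termination argument for the iterative uncrossing right, and making the "leftmost shortest path" selection precise enough that non-crossing is automatic, is where the real care is needed.
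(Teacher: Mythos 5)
There is a genuine gap at the heart of your proposal: you never justify why a canonical (``leftmost''/greedy) choice of shortest paths succeeds whenever \emph{any} solution does. Your uncrossing lemma only shows that \emph{some} non-crossing family of shortest paths solves the instance; it does not show that the particular family your algorithm selects is such a solution, so your final claim ``if the forced orientation is inconsistent then no ideal orientation exists by the structural lemma'' does not follow. The paper's proof supplies exactly this missing step: it first shows (via the envelope construction $L(i,\Pi)$) that a solution can be assumed non-crossing \emph{and still non-conflicting}, and then proves by an exchange argument that if any solution exists, the family consisting of the \emph{outermost} shortest path for each pair is a solution; only then does ``compute the canonical paths and check for conflicts'' become a correct algorithm. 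Relatedly, your consistency claim is too strong: in a serial instance two pairs that are both clockwise (or both counterclockwise) are \emph{opposite}, and two non-crossing shortest paths for such pairs can still share an edge traversed in opposite directions (non-crossing does not imply non-conflicting). Seriality guarantees same-direction sharing only for a clockwise path against a counterclockwise one (the parallel case, Lemma~\ref{L:codirectional}); for opposite pairs a valid solution must be edge-disjoint (Lemma~\ref{L:opposite}), and whether such a choice of shortest paths exists is precisely the nontrivial part that the outermost-path lemma resolves.

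There is also a quantitative problem with your running time. Running $k$ independent Dijkstra computations is $O(kn\log n)$, not $O(n\log n)$, and extracting the $k$ paths explicitly can cost $\Theta(kn)=\Theta(n^2)$ in total, so ``extracting the paths and checking consistency is $O(n)$'' is not right. The paper gets $O(n\log n)$ only by using Klein's multiple-source shortest path algorithm to produce an \emph{implicit} representation of the outermost paths (and notes that an explicit orientation takes $O(n^2)$). Your completion step (orienting the remaining edges arbitrarily cannot create a shortcut, since any directed path in the orientation is also a walk in $G$) is fine and matches the paper's reformulation in terms of non-conflicting shortest paths.
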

The algorithm uses Klein's MSSP algorithm~\cite{K05}, which computes an implicit representation of the solution. If an explicit orientation is desired, then a solution takes $O(n^2)$ time to compute.

Second, we solve the ideal orientation problem in planar graphs for a fixed number of terminals when all terminals are on a single face and no terminal pairs cross. Two pairs of terminals $(s_i, t_i)$ and $(s_j, t_j)$ cross if all four terminals are on a common face and the cyclic order of the terminals is $s_i, s_j, t_i, t_j$. The algorithm relies on an algorithm of Schrijver that finds partially vertex-disjoint paths in directed planar graphs~\cite{S15}.

\begin{theorem}
    If $k$ is fixed and all terminals are on the outer face and no terminals cross, then we can solve the ideal orientation problem in polynomial time.
\end{theorem}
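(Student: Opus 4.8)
The plan is to reduce the ideal orientation problem, under the stated hypotheses (fixed $k$, all terminals on the outer face, no two terminal pairs crossing), to a partially-vertex-disjoint paths problem in a directed planar graph, which can be solved by Schrijver's algorithm~\cite{S15}. First I would observe that since no terminal pairs cross, one can fix a consistent ``side'' structure: the non-crossing condition on the outer face means the pairs $(s_i,t_i)$ can be organized so that the shortest-path region of pair $i$ is nested or parallel with that of pair $j$, never interleaved. As in the serial case, the key structural fact I would prove is that there is an ideal orientation if and only if there are paths $P_1,\dots,P_k$ in $G$, with $P_i$ a shortest $s_i$--$t_i$ path, that are \emph{pairwise non-crossing} as curves in the plane; the orientation is then obtained by orienting each $P_i$ from $s_i$ to $t_i$ and extending arbitrarily (orienting each remaining $2$-edge-connected piece to be strongly connected, as in Robbins/Hassin--Megiddo), the non-crossing condition guaranteeing the orientations of the $P_i$ do not conflict on shared edges. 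The ``only if'' direction uses an uncrossing argument: given any ideal orientation, take shortest $s_i$--$t_i$ dipaths in it and uncross them pair by pair, using that swapping subpaths at a crossing preserves being a shortest path and cannot create a directed conflict.

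Next I would turn ``pairwise non-crossing shortest paths'' into a disjointness problem on an auxiliary directed graph. The idea is to replace each edge of $G$ by a small gadget (a pair of parallel arcs, or a subdivided edge with a vertex that can be ``shared'') so that two paths may traverse a common edge of $G$ but in a controlled way, while forbidding the topological crossing pattern. Concretely, because the terminals are on the outer face and the pairs do not cross, the allowed homotopy classes of the path system are restricted; I would build a directed planar graph $H$ (of size polynomial in $n$, with the polynomial's degree depending on $k$) whose vertex set encodes, for each original vertex, which subset of the $k$ paths passes through it, together with arcs encoding the shortest-path/distance constraints (e.g. layering $H$ by distance from $s_i$, or intersecting with shortest-path DAGs), such that a valid system of non-crossing shortest paths in $G$ corresponds exactly to a family of paths in $H$ that are vertex-disjoint on a prescribed subset of the vertices and may overlap elsewhere. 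Schrijver's theorem then solves this partially-vertex-disjoint paths problem in directed planar graphs in polynomial time for fixed $k$.

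The main obstacle I expect is the gadget construction in the second step: encoding exactly the right notion of ``may share an edge but may not cross'' for $k$ paths simultaneously, while keeping the graph directed and planar and keeping its size polynomial for fixed $k$. One has to be careful that (a) the direction forced on a shared edge by one path is consistent with every other path using it — this is where the non-crossing hypothesis and the common-face hypothesis are essential, since they pin down the cyclic order in which paths enter and leave a shared segment — and (b) Schrijver's model of ``partially disjoint'' paths matches the gadget's sharing pattern; if it does not match directly, an intermediate reduction (splitting vertices into in/out copies, or introducing per-subset-of-paths copies) will be needed. A secondary technical point is ensuring each $P_i$ is a genuine shortest path and not merely an $s_i$--$t_i$ path: this I would handle by precomputing, for each $i$, the shortest-path DAG rooted at $s_i$ (using Klein's MSSP algorithm~\cite{K05}, as in Theorem 1.1) and restricting the $i$-th path in $H$ to arcs of that DAG. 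Once both steps are in place, the running time is dominated by Schrijver's algorithm on $H$, giving a polynomial bound with exponent depending on the fixed value of $k$.
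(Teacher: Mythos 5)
There is a genuine gap, and it is in the step you call the ``key structural fact.'' You claim that an ideal orientation exists if and only if there exist pairwise \emph{non-crossing} shortest paths $P_1,\dots,P_k$, and you propose to prove the ``only if'' direction by an uncrossing/exchange argument. This claim is false for one-face non-crossing instances: the paper exhibits an instance (Figure~\ref{F:eww}a) in which the \emph{unique} solution has two paths that cross. Non-crossability is special to the serial case (and, more generally, to \emph{parallel} pairs, via Lemma~\ref{L:uncross-codirectional}); for \emph{opposite} pairs the uncrossing swap does not go through. Concretely, two opposite paths cross an even number of times, so you must swap the subpaths between two consecutive crossings $x_p,x_{p+1}$; but those two subpaths are traversed in opposite directions, so the swap forces you to \emph{reverse} a subpath of a shortest path, which can create conflicts with, and new crossings against, the other $k-2$ paths. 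The paper's Lemma~\ref{L:fact-two} shows this exchange is only guaranteed to work for a bigon that no other path ``partially overlaps,'' which is exactly why crossings cannot always be eliminated.

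Because of this, your reduction target is wrong: searching only over non-crossing systems would return ``no solution'' on solvable instances. The missing ideas are (i) a quantitative substitute for non-crossability --- the paper proves that some optimal solution has only $O(k)$ crossings per pair of paths, hence $O(k^3)$ crossing points in total (Lemma~\ref{L:few-crossings}, via Lemmas~\ref{L:fact-one} and~\ref{L:fact-two}); and (ii) an enumeration step that guesses the crossing points and the resulting overlay graph, after which the \emph{subpaths between consecutive crossing points} are pairwise non-crossing and (for opposite pairs) edge-disjoint, and one can run the PNEPP-to-PVPP reduction and Schrijver's algorithm on those subpaths. Your later ingredients --- restricting each path to its shortest-path DAG, the grid/vertex-splitting gadget to turn ``non-crossing and edge-disjoint'' into ``vertex-disjoint,'' and the observation that for shortest paths non-crossing implies non-conflicting (parallel pairs by Lemma~\ref{L:codirectional}, opposite non-crossing pairs by edge-disjointness) --- all match the paper and are fine, but they sit on top of a false premise. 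As written, the proposal does not yield a correct algorithm without the crossing bound and the guessing stage.
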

It is likely that the algorithm of Theorem 1.2 can be generalized to the case where for each $i$, $s_i$ and $t_i$ appear on a common face $F_i$ (the faces $F_1, \dots, F_k$ need not be distinct), and the terminals are still non-crossing; however, the author has not yet verified the details. 
The restriction that the terminals be non-crossing may seem arbitrary, but can be motivated in the following way. Define the {\em demand graph} $G_D$ to be the graph with the same vertices as $G$ but with an edge $\{s_i, t_i\}$ for each $i$. Define $G + G_D$ to be the graph with the same vertices as $G$ (or $G_D$) and whose edge set is $E(G) \cup E(G_D)$. The case of non-crossing terminals is then exactly the case where $G + G_D$ is planar.

Third, we show that the ideal orientation problem is NP-hard in planar graphs. The reduction is from planar 3-SAT and is inspired by reductions by Middendorf and Pfeiffer~\cite{MP} and by Eilam-Tzoreff~\cite{E-T98}, who showed that finding disjoint paths and disjoint shortest paths are NP-hard in planar graphs.
Since the min-sum, min-max, and min-min orientation problem are all generalizations of the ideal orientation problem, this reduction shows that the min-sum, min-max, and min-min problems are also NP-hard. This is stronger than Ito et al.'s result because the ideal orientation problem is a special case of the $k$-min-sum orientation problem.
\begin{theorem}
    If $k$ is part of the input, then the ideal orientation problem is NP-hard in unweighted planar graphs.
\end{theorem}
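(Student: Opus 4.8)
The plan is to reduce from planar 3-SAT, which is NP-hard. Given a formula $\phi$ with variables $x_1, \dots, x_n$ and clauses $C_1, \dots, C_m$, together with a planar embedding of its variable--clause incidence graph, we would construct an unweighted planar graph $G$ with $k = n + m$ terminal pairs such that $G$ admits an ideal orientation if and only if $\phi$ is satisfiable. The graph is assembled from one gadget per variable and one per clause, glued together along the given planar embedding so that $G$ is planar; positive edge lengths are simulated in the usual way by subdividing edges, and since every weight we use is $O(m)$ the final unweighted graph has polynomial size. So the restriction to unweighted graphs is not an essential obstacle once a weighted construction is in hand.

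For each variable $x_i$ we would introduce a terminal pair $(a_i, b_i)$ joined by a long two-lane ``corridor'' arranged so that the only shortest $a_i$--$b_i$ paths in $G$ are a \emph{true route} $P_i^{\top}$ running along one lane and a \emph{false route} $P_i^{\bot}$ running along the other, the two routes being edge-disjoint and of equal length, with the corridor serving as the bottleneck that every shortest $a_i$--$b_i$ path must traverse. In any orientation preserving $d(a_i, b_i)$, at least one of $P_i^{\top}, P_i^{\bot}$ is directed from $a_i$ to $b_i$, which fixes the orientations of the edges on that lane; which lane is realized is what encodes the truth value of $x_i$.

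For each clause $C_j = (\ell_1 \vee \ell_2 \vee \ell_3)$ we would introduce a terminal pair $(c_j, d_j)$ whose only shortest $c_j$--$d_j$ paths pass through exactly one of three equally short ``gates'' $g_{j,1}, g_{j,2}, g_{j,3}$. Gate $g_{j,r}$ would share a single edge $e_{j,r}$ with the corridor of the variable underlying $\ell_r$ --- with its false lane if $\ell_r$ is a positive literal, its true lane if $\ell_r$ is negative --- positioned so that the shortest $c_j$--$d_j$ path through $g_{j,r}$ needs $e_{j,r}$ oriented \emph{opposite} to the direction forced by that lane. Then, if the variable is set so that $\ell_r$ is false, the realized lane forces $e_{j,r}$ the wrong way and blocks $g_{j,r}$; if $\ell_r$ is true, that lane is unrealized, $e_{j,r}$ is free, and the gate can open. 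This is exactly where orientation, rather than mere disjointness, does the work, and it is the analogue in our setting of the conflicts exploited in the disjoint-paths hardness reductions of Middendorf--Pfeiffer and Eilam-Tzoreff. Note also that realizing \emph{both} lanes of a variable forces every incident $e_{j,r}$ the wrong way at once, closing all its gates, so without loss of generality an ideal orientation realizes exactly one lane per variable, i.e.\ genuinely picks a truth value.

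Finally we would prove equivalence. Given a satisfying assignment, direct each variable's chosen lane forward, route each clause path through a gate of a satisfied literal, and orient the remaining edges arbitrarily; since orienting edges can only lengthen distances and a directed shortest path survives for every terminal pair, this is an ideal orientation. Conversely, an ideal orientation preserves every $d(a_i, b_i)$, hence realizes a lane per variable (exactly one, by the reduction above) and so determines a truth assignment, and it preserves every $d(c_j, d_j)$, forcing each clause path through an open gate, whose variable is set to make the corresponding literal true; thus $\phi$ is satisfied. The bulk of the argument --- and the step I expect to be hardest --- is the quantitative design of corridors and gates guaranteeing that (i) for every terminal pair the only shortest $s_i$--$t_i$ paths in $G$ are the intended ones, with no shortcut through a distant part of $G$ and no unintended third route (a delicate length-balancing exercise in the unweighted graph, carried out by subdivision); (ii) distinct gadgets interact only through the designated shared edges $e_{j,r}$, so directing one terminal pair's path never destroys another's shortest path except by closing an intended gate; and (iii) the whole assembly embeds in the plane, which comes from laying the gadgets out along the given embedding of $\phi$, with the variable gadgets designed so that positive and negative literals can tap the appropriate lane without crossings.
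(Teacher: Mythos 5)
Your high-level strategy is the same as the paper's: reduce from planar 3-SAT, encode each variable by a gadget with two equally short ``lanes'' whose realized orientation fixes a truth value, and encode each clause by a gadget whose shortest paths must use a shared edge that a falsifying lane orientation blocks. But the proposal stops exactly where the paper's proof actually lives: you never construct the corridors and gates, and you yourself identify the ``quantitative design'' as the hardest step. In this reduction the gadgets are not a routine detail --- the paper's Section 6 consists almost entirely of exhibiting a concrete clause gadget (three terminal pairs with $d(s_C,t_C)=d(s_C'',t_C'')=3$, $d(s_C',t_C')=4$, verified to have the property that non-conflicting shortest paths must use at least one of the three shared edges in the forward direction, yet exist even when two of those edges are forbidden) and concrete variable gadgets admitting exactly two consistent orientations of their shared edges, and of checking these properties. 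Without those constructions, your requirements (i)--(iii) are assumptions, not conclusions, so the argument as written has a genuine gap.

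Two of the glossed-over points are genuinely delicate. First, planarity of the ``tapping'': with a two-lane corridor, a clause gadget placed on one side of the corridor can naturally share an edge only with the lane facing it, yet whether it must tap the true or the false lane is dictated by the sign of the literal, not by the geometry of the embedding. The paper resolves precisely this tension by distinguishing forward/backward, true/false, and clockwise/counterclockwise orientations of the shared edges and by choosing among different variable-gadget variants according to how many incident shared edges have their true orientation equal to the counterclockwise one; your sentence ``designed so that positive and negative literals can tap the appropriate lane without crossings'' asserts the outcome of that case analysis without supplying it. (Relatedly, the paper first normalizes the formula so that every variable occurs in exactly three clauses, via the Middendorf--Pfeiffer substitution; otherwise a single corridor must serve unboundedly many occurrences.) Second, you must rule out unintended shortest paths that leave a gadget, cut through a neighboring corridor or clause gadget, and return; this is a property of the specific gadget lengths and of how gadgets are glued, and it is not obtained merely by subdividing weighted edges. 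So the approach is the right one, but the proof is incomplete until the gadgets on which the equivalence rests are actually built and verified.
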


Fourth, we solve the $k$-min-sum orientation problem for serial instances. To do this, we classify each terminal pair as clockwise or counterclockwise, and we break up the instance into two sub-instances, one of which consists only of clockwise pairs and the other of which consists only of counterclockwise pairs. It turns out that solving each sub-instance reduces to solving serial instances of a shortest vertex-disjoint paths problem, which can be done using an algorithm of Borradaile, Nayyeri, and Zafarani~\cite{BNZ15}. Finally, after solving the two sub-instances independently, we show that the two sub-solutions can be easily combined to solve the original instance.

\begin{theorem}
    Any serial instance of the $k$-min-sum orientation problem can be solved in $O(kn^5)$ time.
\end{theorem}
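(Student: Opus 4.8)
The plan is to follow the strategy sketched in the introduction. Throughout, fix a planar embedding with all terminals on the outer face in clockwise cyclic order $u_1,v_1,\dots,u_k,v_k$, where $\{s_i,t_i\}=\{u_i,v_i\}$ for each $i$. I would first argue that the optimum equals the minimum total length of a family of directed paths $P_1,\dots,P_k$, with $P_i$ from $s_i$ to $t_i$, that is \emph{conflict-free} in the sense that no edge is traversed by two members of the family in opposite directions --- and that this minimum is $+\infty$ precisely when the Hassin--Megiddo reachability test~\cite{HM89} fails, a test we run first and which, on failure, lets us output an arbitrary orientation. One direction is clear (take shortest directed $s_i$-to-$t_i$ paths in a given orientation); conversely, a conflict-free family extends to an orientation in which, for each $i$, the distance from $s_i$ to $t_i$ is at most the length of $P_i$. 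Then, exactly as in the proof of Theorem~1.1 and using that the demand pairs of a serial instance are pairwise non-crossing, I would show by an uncrossing argument that some minimum conflict-free family is also \emph{non-crossing}: its paths may overlap along subpaths but do not cross transversally. The point requiring care is that an uncrossing move only reuses directed edges already present in the family, so it cannot create a new opposite traversal.

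\smallskip\noindent\emph{Splitting the demands by direction.} Call pair $i$ \emph{clockwise} if $(s_i,t_i)=(u_i,v_i)$ and \emph{counterclockwise} otherwise, and let $I^{+}$ and $I^{-}$ be the corresponding index sets. The structural heart of the proof is a pair of claims about a non-crossing conflict-free family $\{P_i\}$: (a) no two paths whose indices lie in the same class share an edge; and (b) after possibly replacing $\{P_i\}_{i\in I^{+}}$ and $\{P_i\}_{i\in I^{-}}$ by other minimum families for the two classes, no clockwise path shares an edge with a counterclockwise path. For (a), if two same-class paths shared an edge then --- since a conflict-free family must traverse it in the same direction --- tracing the two paths through their maximal common subpath out to their endpoints, whose cyclic position on the outer face is pinned down by the serial structure, would force a transversal crossing, a contradiction. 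Claim (b) is where I expect the real work: on a bridge it holds because the reachability test succeeded and a bridge lying on a walk between the two terminals of a pair is forced in a direction determined only by which side of the bridge each terminal lies on, and one checks case by case that a clockwise pair and a counterclockwise pair always agree there; on an edge interior to a $2$-connected component I would argue by a planarity/rerouting exchange that any conflict can be pushed out of the component without lengthening either sub-solution.

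\smallskip\noindent\emph{Solving each class and recombining.} By (a), the restriction of a minimum non-crossing conflict-free family to $I^{+}$ consists of pairwise edge-disjoint directed paths realizing the clockwise pairs; a standard vertex-splitting transformation converts this into an instance of the shortest vertex-disjoint paths problem, and since the clockwise terminals still appear in serial cyclic order it is a \emph{serial} instance of that problem, solvable by the algorithm of Borradaile--Nayyeri--Zafarani~\cite{BNZ15}; likewise for $I^{-}$. Hence $\mathrm{OPT}\ge\mathrm{OPT}^{+}+\mathrm{OPT}^{-}$ by restriction, while by (b) optimal solutions of the two sub-instances can be chosen so that their union is conflict-free, giving $\mathrm{OPT}\le\mathrm{OPT}^{+}+\mathrm{OPT}^{-}$ and, upon taking that union, the desired orientation. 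The algorithm thus runs the reachability test, solves the two serial shortest vertex-disjoint paths instances, and returns the union; its running time is dominated by the two calls to the Borradaile--Nayyeri--Zafarani algorithm, for a total of $O(kn^{5})$.

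\smallskip\noindent\emph{Main obstacle.} I expect claim (b) --- that the separately optimized clockwise and counterclockwise sub-solutions can be made edge-disjoint from one another, so that their union is conflict-free --- to be the hardest step, since it needs an exchange argument that stays inside a single $2$-connected component while simultaneously respecting planarity, the serial order of the terminals, and the orientations forced on bridges. Checking that the uncrossing in the first step never introduces an opposite traversal is a second, milder, point of care.
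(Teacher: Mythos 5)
Your Phase-1 architecture matches the paper: classify pairs as clockwise or counterclockwise, observe that in a non-crossing optimum two same-class (opposite) paths must be edge-disjoint, solve each class as a serial min-sum non-crossing edge-disjoint paths instance by splitting vertices (the paper replaces each vertex by a zero-length cycle) and calling Borradaile--Nayyeri--Zafarani, for $O(kn^5)$ total. The genuine gap is in the recombination step. The paper never tries to \emph{choose} the two optimal sub-solutions compatibly; it takes arbitrary optima $\Pi$ for the two classes and then merges them by taking the lower envelopes $L(1,\Pi),\dots,L(k,\Pi)$ of Section~\ref{S:serial-ideal}, which by Lemmas~\ref{L:envelope-nonconflicting} and~\ref{L:envelope-length} kills every cross-class conflict without increasing total length; the same envelope construction (plus the isolation lemma) is also how the paper proves the non-crossing lemma (Lemma~\ref{L:no-crossing-sum}) that yields $\mathrm{OPT}\ge\mathrm{OPT}^{+}+\mathrm{OPT}^{-}$. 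Your claim (b) replaces all of this, and in the strong form you state it is false: a clockwise and a counterclockwise pair are \emph{parallel}, and they may be forced through a common bottleneck edge (traversed in the same direction), so cross-class edge-disjointness cannot be arranged. The weaker statement you actually need --- that some pair of optimal sub-solutions has a conflict-free union --- is true, but the only argument available is essentially the envelope exchange you are missing; the bridge/2-connected-component rerouting you sketch is not developed and carries the whole weight of the theorem.

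Moreover, even granting claim (b), your algorithm as described (run the reachability test, solve the two sub-instances, ``return the union'') is not correct: the BNZ algorithm returns an \emph{arbitrary} optimum for each class, and the union of two arbitrary optima can conflict on edges shared by a clockwise and a counterclockwise path; you have no mechanism to find the compatible pair whose existence (b) would guarantee. The paper's Phase~2 envelope post-processing is exactly this missing algorithmic step. A secondary soft spot: your uncrossing step ``exactly as in the proof of Theorem~1.1'' does not transfer literally, because the local subpath-exchange arguments used elsewhere in the paper rely on swapped subpaths being shortest (hence equal-length), which is unavailable in the min-sum setting; here too the paper argues globally via envelopes rather than by local uncrossing.
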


This paper is organized as follows. In section 2, we present definitions and a subroutine, and we reformulate the orientation problems in terms of {\em non-conflicting paths}. We will then use these reformulations in the rest of the paper. In section 3, we prove various structural results that we use in the rest of the paper. In section 4 we prove Theorem 1.1, in section 5 we prove Theorem 1.2, in section 6 we prove Theorem 1.3, and in section 7 we prove Theorem 1.4. The appendix contains details of proofs omitted in the main paper.

\section{Preliminaries}
Throughout this paper, $G$ is a simple undirected plane graph, each edge $e \in E(G)$ has a positive length $\ell(e) > 0$, and $(s_1, t_1), \dots, (s_k, t_k)$ are $k$ pairs of vertices in $G$.
The vertices $s_1, t_1, \dots, s_k, t_k$ are called {\em terminals}, and we assume that the $2k$ terminals are all distinct. (If two terminals, say $s_i$ and $s_j$, are not distinct, then we add new terminals $s_i'$ and $s_j'$ that will be terminals instead of $s_i$ and $s_j$, respectively, and we add new arcs $s_i's_i$ and $s_j's_j$. If $s_i$ and $s_j$ were on a common face then we can ensure $s_i'$ and $s_j'$ still are.) For most of this paper we will assume that all terminals are on a common face, which we assume is the outer face.
Let $n$ be the number of vertices of $G$, so that $G$ has $O(n)$ edges.
An {\em orientation} of $G$ is a directed graph $G'$ that is formed by replacing each edge $\{u,v\} \in E(G)$ with exactly one of the arcs $uv$ or $vu$. 
We write $\partial G$ to denote the outer face or boundary of $G$, and we write $\deg(v)$ to denote the degree of a vertex $v$.
If four terminals $s_i, t_i, s_j, t_j$ are on a common face, then we say that the terminal pairs {\em cross} if their cyclic order (either clockwise or counterclockwise) on the face is $s_i, s_j, t_i, t_j$; otherwise, the terminals are {\em non-crossing}.

A {\em directed walk} $P$ in $G$ is a sequence of arcs $(u_0, v_0), \dots, (u_p, v_p)$ such that $v_{i+1} = u_i$ for all $i \in \{0, \dots p-1\}$.
In a slight abuse of terminology, we will say that the walk $P$ {\em uses} the (undirected) edges $\{u_0, v_0\}, \dots, \{u_p, v_p\}$. The directed walk $P$ is in $G$ if $\{u_i, v_i\}$ is an edge in $G$ for all $i \in \{0, \dots p-1\}$.
If in addition $u_0, v_0, \dots v_p$ are distinct, then $P$ is a {\em directed path} in $G$.
The {\em reverse} of the walk $P$ is the directed walk $(v_p, u_p), \dots, (v_0, u_0)$ and is denoted by $rev(P)$.
If we let $e_i = \{u_i, v_i\}$ for all $i \in \{0, \dots, p-1\}$, then the length of $P$, denoted by $\ell(P)$, is $\sum_{i = 1}^{c-1} \ell(e_i)$.
The distance from a vertex $u$ to a vertex $v$ in a graph $G$ is the length of a shortest walk from $u$ to $v$ and is denoted by $d_G(u,v)$; this walk will be simple path. If $u$ appears before vertex $v$ on the walk $P$, then we write $u \prec_P v$ and use $P[u,v]$ to denote the subwalk of $P$ from $u$ to $v$; we will only use this notation when there is no risk of ambiguity. Two walks {\em touch} or {\em meet} if they share at least one vertex, and two regions {\em touch} or {\em meet} if their closures share at least one vertex. The concatenation of two walks $P$ and $Q$ is denoted $P \circ Q$.
We will sometimes treat paths as sets of edges.
In an abuse of terminology, we say that two directed paths are edge-disjoint if their underlying undirected paths are edge-disjoint (we assume each arc $uv$ is embedded together with its reverse $vu$). 

For any orientation $G'$ of $G$, let $d'(u,v)$ be the distance from $u$ to $v$ in $H$. In the {\em ideal orientation problem}, we want to find an orientation $G'$ of $G$ such that for all $i$, $d(s_i,t_i) = d'(s_i,t_i).$ 
It is possible to reformulate the ideal orientation problem in terms of finding {\em non-conflicting} shortest paths; we will use this reformulation in the rest of the paper.
Two directed walks $P$ and $Q$ in $G$ {\em conflict} if there is an edge $\{u,v\}$ in $G$ such that $uv$ is an arc in $P$ and $vu$ is an arc in $Q$.
Two walks are {\em non-conflicting} if they do not conflict.
The ideal orientation problem then asks us to find pairwise non-conflicting directed walks $P_1, \dots, P_k$ such that $P_i$ is a shortest path from $s_i$ to $t_i$ for all $i \in \{1, \dots k\}$.
We call the set of such paths a {\em solution} to the instance.
For some graphs $G$, a solution may not exist, and we may be interested in relaxing the requirement that each path in the solution be a shortest path. This motivates us to define the {\em $k$-min-sum orientation problem}, in which the input is the same as the input to the ideal orientation problem, and we still want to find paths $P_1, \dots, P_k$ such that $P_i$ connects $s_i$ to $t_i$, but now our goal is to minimize the sum of the lengths of the paths $P_i$ instead of insisting that each $P_i$ be a shortest path. Clearly, the $k$-min-sum orientation problem is at least as hard as the ideal orientation problem.

Our algorithms search for pairwise non-conflicting directed walks that are shortest paths connecting corresponding terminals, rather than explicitly seeking simple paths. Because all edge lengths are positive, the set of shortest walks will end up consisting of simple paths.
Note that given a directed walk $P$ that conflicts with itself, we can repeatedly remove directed cycles from $P$ to obtain a simple directed path $P'$ such that $P'$ has the same starting and ending vertices as $P$, $P'$ is no longer than $P$,  and $P'$ does not conflict with itself.
Thus we do not have to worry about directed walks conflicting with themselves.

We assume without loss of generality that the paths in any solution do not use edges on the outer face.  If necessary to enforce this assumption, we can connect the terminals using an outer cycle of $2k$ infinite-weight edges.

\subsection{Partially edge-disjoint non-crossing paths}\label{S:PNEPP}
Our algorithm for Theorem 2 ultimately involves two reductions. First, we reduce the ideal orientation problem (when terminals lie on a single face and are non-crossing) to a problem we call the {\em partially non-crossing edge-disjoint paths problem (PNEPP)}. This first reduction is described in Section~\ref{S:single-face}. Second, we reduce PNEPP to the {\em partially vertex-disjoint paths problem (PVPP)}. The goal of this subsection is to define both PNEPP and PVPP and to describe the second reduction. 

In PVPP, we are given a directed planar graph $H$, vertices $u_1, v_1, \dots, u_h, v_h$; subgraphs $H_1, \dots, H_h$ of $H$; and a set $S$ of pairs $\{i,j\}$ from $\{1, \dots, h\}$. We wish to find directed paths $Q_1, \dots, Q_h$ such that 
\begin{itemize}
    \item $Q_i$ connects $u_i$ to $v_i$ for all $i$,
    \item $Q_i$ is in $H_i$ for all $i$, and
    \item for all $i,j$, if $\{i,j\} \in S$ then $Q_i$ and $Q_j$ are vertex-disjoint. 
\end{itemize} 
Note that we do not require the paths to be shortest paths; in fact the graph $H$ is unweighted. Schrijver~\cite{S15} solved the partially vertex-disjoint paths problem for fixed $h$ in polynomial time. He does not state the running time of the algorithm, but it appears to be $(\text{poly}(|V(H)|))^{h^2}$.

PNEPP is the same as PVPP except that if $\{i,j\} \in S$, then we require the directed paths $Q_i$ and $Q_j$ to be non-crossing edge-disjoint paths instead of vertex-disjoint paths. (Recall that by ``edge-disjont'' we mean that if $Q_i$ uses $e$ then $Q_i$ can use neither $e$ nor $rev(e)$.)

Now we describe the reduction from PNEPP to PVPP.
Suppose we are given an instance $H$ of PNEPP with terminal pairs $(u_1,v_1), \dots, (u_h, v_h)$ and a set $S$ of pairs of indices of terminals, and subgraphs $H_1, \dots, H_h$. We construct an instance $H'$ of PVPP by replacing each non-terminal vertex $v$ with an $2h \times 2h$ grid $g_v$ of bidirected edges, where $n = |V(G)|$. (The grid can be made smaller, say $p_v \times p_v$ where $p_v = \max\{k, \deg(v)\}$, but this suffices for our purposes.) Every arc that was incident to vertex $v$ in $H$ is instead incident to a vertex on the boundary of $g_v$; furthermore, we can make it so that no two arcs in $H$ share endpoints in $H'$. See Figure~\ref{F:grid-reduction}. The subgraphs $H_1, \dots, H_h$ and the terminals $s_1, t_1, \dots, s_h, t_h$ are the same in $H'$ and $H$. To show that this reduction is correct we need to prove the following lemma:

\begin{figure}
    \centering
    	\begin{tabular}{crcrcr}
    	\includegraphics[scale = 0.35]{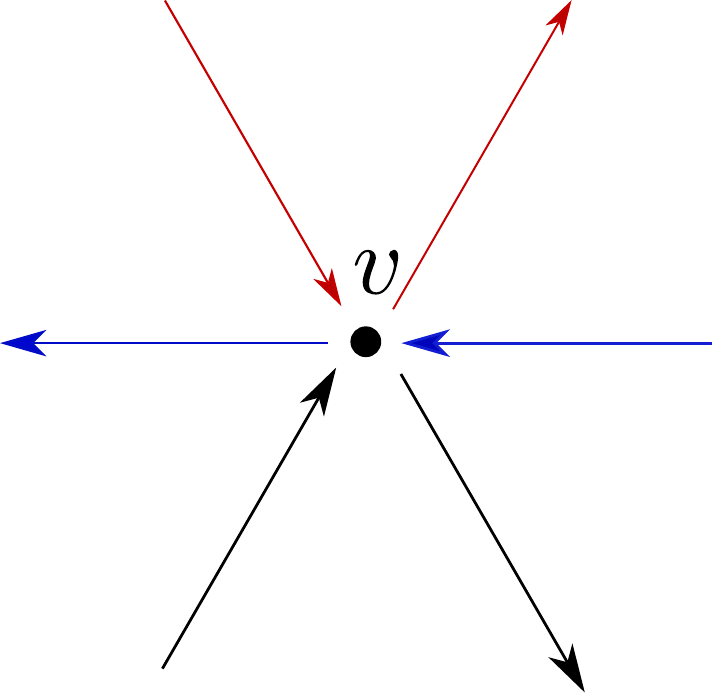} & \hspace{-0.25in} (a)
    	&
    	\includegraphics[scale = 0.35]{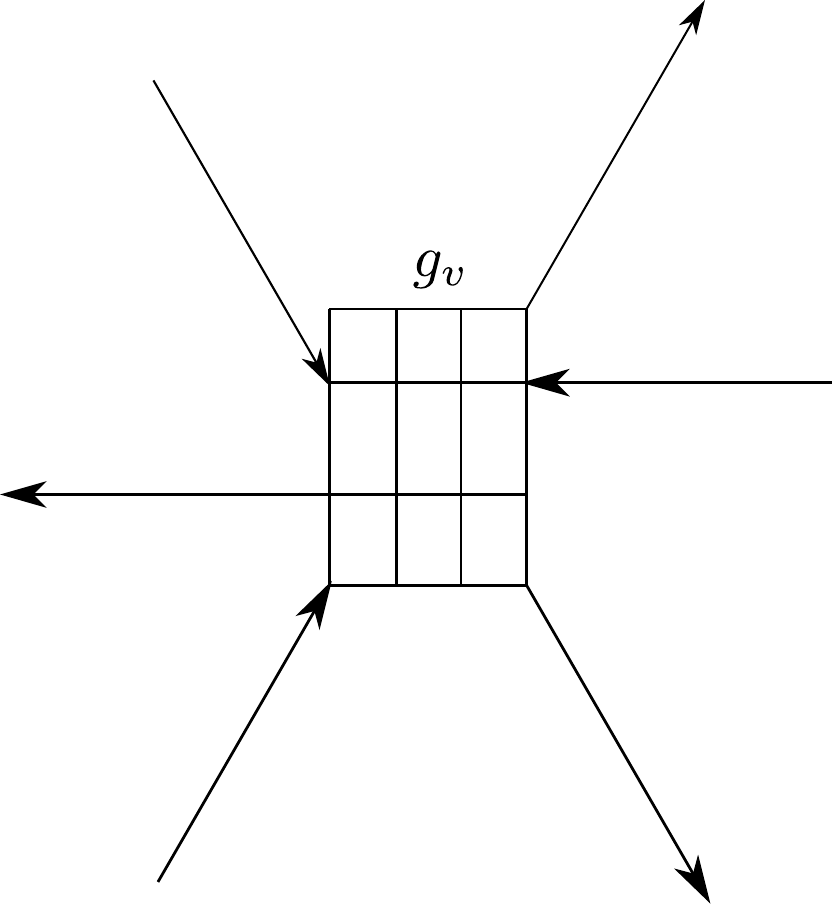} & \hspace{-0.25in} (b)&
    	\includegraphics[scale = 0.35]{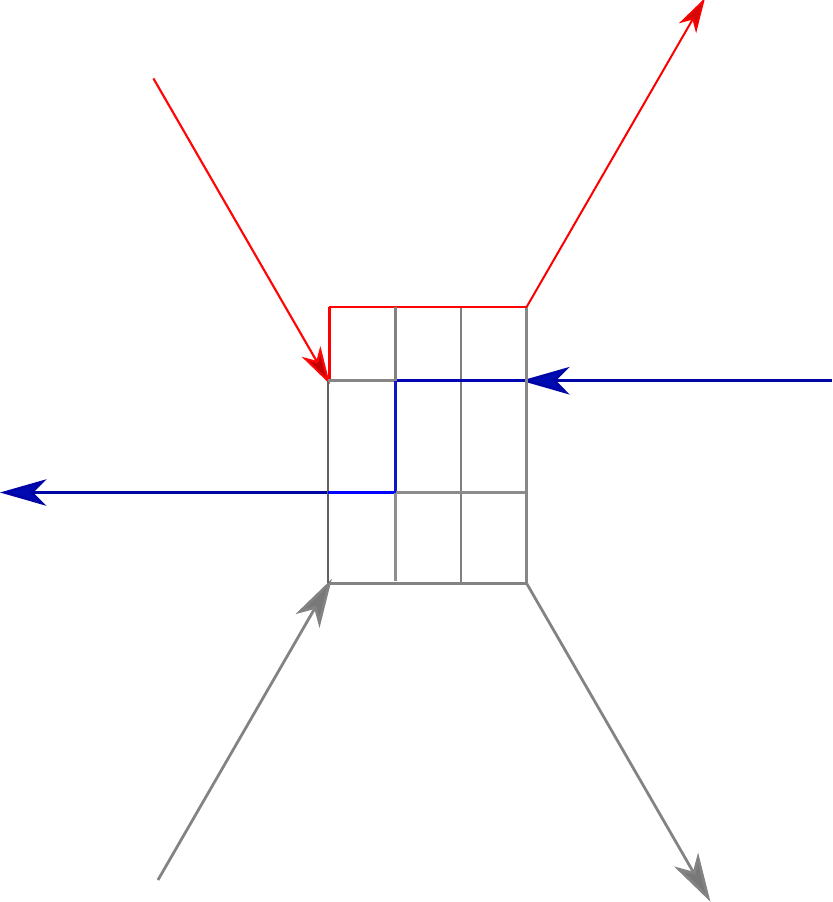} & \hspace{-0.25in} (c)
    \end{tabular}
    \caption{(a) a red and blue path going through a vertex $v$ in $G$ (b) corresponding grid $g_v$ in $H$ with $k = 2$. (c) routing the red and blue paths through $g_v$, in the proof of Lemma~\ref{L:grid-correctness}}
    \label{F:grid-reduction}
    \end{figure}

\begin{lemma}\label{L:grid-correctness}
    The following two statements are equivalent:
    \begin{enumerate}
        \item In $G$, there exist paths $P_1, \dots, P_h$ such that $P_i$ connects $u_i$ to $v_i$, $P_i$ is in $H_i$ for all $i$, and if $\{i,j\} \in S$ then $P_i$ and $P_j$ are non-crossing and edge-disjoint.
        \item In $H$, there exist paths $Q_1, \dots, Q_h$ such that $Q_i$ connects $u_i$ to $v_i$, $Q_i$ is in $H_i$ for all $i$, and if $\{i,j\} \in S$ then $Q_i$ and $Q_j$ are vertex-disjoint.
    \end{enumerate}
\end{lemma}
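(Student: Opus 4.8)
The plan is to prove both directions of the equivalence by showing how to transform a solution in one graph into a solution in the other, working locally at each non-terminal vertex $v$ where several of the paths pass through. The two directions are not symmetric: going from edge-disjoint non-crossing paths in $G$ to vertex-disjoint paths in $H$ requires us to ``spread out'' the paths across the grid $g_v$, while going from vertex-disjoint paths in $H$ back to $G$ requires us to ``collapse'' the grid back to a point and check that the resulting walks are edge-disjoint and non-crossing.

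\textbf{From (1) to (2).} First I would fix a vertex $v$ of $G$ and consider the (at most $h$) paths among $P_1, \dots, P_h$ that pass through $v$. Because the relevant pairs are non-crossing and edge-disjoint at $v$, the arcs they use enter and leave $v$ in a cyclic order around $v$ that can be matched up without crossings: each path $P_i$ uses an in-arc and an out-arc at $v$, and the pairing of in-arcs to out-arcs induced by the paths is non-crossing in the rotation system at $v$. In the grid $g_v$, the endpoints of these arcs sit on the boundary of a $2h \times 2h$ grid of bidirected edges, in the same cyclic order. A non-crossing matching of boundary terminals of a sufficiently large grid can always be realized by vertex-disjoint paths inside the grid — this is a standard grid-routing fact, and $2h \times 2h$ is comfortably large enough for at most $h$ such connections. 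I would route each $P_i$ through $g_v$ along its assigned grid path; doing this independently at every vertex $v$ yields walks $Q_1, \dots, Q_h$ in $H$. Each $Q_i$ stays inside $H_i$ (the subgraphs are unchanged and $g_v$ replaces $v$ wherever it occurred), $Q_i$ still connects $u_i$ to $v_i$, and for $\{i,j\} \in S$ the paths $Q_i$ and $Q_j$ are vertex-disjoint: inside each grid they were routed disjointly, the arcs between grids are not shared because we arranged that no two arcs of $H$ share endpoints in $H'$, and at terminal vertices they were already disjoint. (If a $Q_i$ self-intersects we shortcut it to a simple path, which only helps.)

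\textbf{From (2) to (1).} Here I would contract each grid $g_v$ back to the single vertex $v$. A path $Q_i$ in $H$ induces a walk $P_i$ in $G$: whenever $Q_i$ enters $g_v$ on some boundary vertex and later leaves it, we replace the portion of $Q_i$ inside $g_v$ by the vertex $v$, keeping the in-arc and out-arc (which correspond to genuine arcs of $G$). This gives a walk from $u_i$ to $v_i$ in $H_i$; as noted in the preliminaries we may delete directed cycles to get a simple path no longer than the walk that still does not self-conflict. For $\{i,j\} \in S$: since $Q_i,Q_j$ were vertex-disjoint, in particular no arc of $H$ (hence no arc of $G$) is used by both, so $P_i$ and $P_j$ are edge-disjoint; and since $Q_i, Q_j$ entered and left each $g_v$ through disjoint vertex-disjoint grid paths, the cyclic order in which their arcs attach around $v$ forces the induced pairings at $v$ to be non-crossing — two vertex-disjoint paths in a planar grid cannot ``interleave'' their boundary endpoints — so $P_i$ and $P_j$ are non-crossing at every vertex, hence non-crossing.

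\textbf{Main obstacle.} The delicate point, in both directions, is the precise correspondence between ``non-crossing pair of walks in $G$ at a vertex $v$'' and ``vertex-disjoint pair of paths in the grid $g_v$'' — i.e., making rigorous that a cyclic pairing of boundary terminals of the grid is realizable by vertex-disjoint grid paths if and only if it is non-crossing, and that this matches the planar notion of crossing used for $P_i, P_j$. I expect the bulk of the work to be in setting up the rotation-system bookkeeping around $v$ carefully (which arcs are ``in'' vs ``out'', how a path that visits $v$ more than once is handled, and confirming $2h \times 2h$ suffices), rather than in any hard combinatorial lemma; the underlying grid-routing fact is elementary. A secondary technical nuisance is ensuring the $H_i$-membership constraint interacts correctly with the grid substitution, but since the grids are inserted uniformly and the $H_i$ are defined on the same (substituted) vertex set, this should be immediate from the construction.
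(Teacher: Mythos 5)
Your proposal is correct and follows essentially the same route as the paper: the same split into inter-grid arcs plus intra-grid routing in the forward direction, and the same projection/contraction argument (with the single-connecting-edge observation giving edge-disjointness) in the backward direction. The ``standard grid-routing fact'' you defer is exactly what the paper supplies explicitly, via a recursion that repeatedly routes a boundary-consecutive terminal pair along the grid boundary and argues the recursion survives because the outerplanarity index of the $2h \times 2h$ grid is at least the number of pairs.
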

\begin{proof}
    \underline{$\Rightarrow:$} Suppose that non-crossing partially edge-disjoint paths $P_1, \dots, P_h$ exist in $G$. We construct paths $Q_1, \dots, Q_h$ as follows. For any arc $e$ in $P_i$, we add $e$ to $Q_i$. This defines the portions of the paths $Q_1, \dots, Q_h$ outside the grids $g_v$; these portions are vertex-disjoint because by construction the endpoints of $G$ are all distinct. 
    
    To find the portions of $Q_1, \dots, Q_h$ inside a single grid $g_v$, we need to solve the following problem. Suppose $k'$ of the paths $P_1, \dots, P_h$ went through $v$ in $G$. Re-index the paths such that $P_1, \dots, P_{k'}$ go through $v$ and $P_{k'+1}, \dots, P_h$ do not. 
    We are given a subgraph $g$ of the $n \times n$ bidirected grid with $k'$ pairs of non-crossing terminals $(w_1, x_1), \dots, (w_{k'}, x_{k'})$ on the boundary of $g$, and we want to find pairwise vertex-disjoint paths in $g$ such that the $i$-th path $\pi_i$ connects $w_i$ to $x_i$. To solve this problem, we route the paths one by one as follows. List the terminals $w_1, x_1, \dots, w_{k'}, x_{k'}$ in cyclic order around the outer face of $g$; there must be some $i$ such that the two vertices $w_i$ and $x_i$ appear consecutively in this list. Terminals $w_i$ and $x_i$ split the boundary of $g$ into two segments; we let $\pi_i$ be the portion of the boundary that does not contain any other terminals. Remove the vertices of $\pi_i$ from $g$ and recursively compute the other paths $\pi_1, \dots, \pi_{i-1}, \pi_{i+1}, \dots, \pi_{h'}$.
    
    Routing $\pi_i$ is possible as long as $g$ is connected. Each time we recurse, the outerplanarity index of the $g$ goes down by at most 1. Initially, $g$ is the $2h \times 2h$ grid, so the outerplanarity index of $g$ starts at $h \geq k'$.
    Thus our recursive algorithm is able to connect all the pairs $(x_1, w_1), \dots, (x_{h'}, w_{h'})$.
    
    \underline{$\Leftarrow:$} Suppose partially vertex-disjoint paths $Q_1, \dots, Q_h$ exist in $H$. Trivially, the paths $Q_1, \dots, Q_h$ are non-crossing partially edge-disjoint too. Each path $P_i$ can be defined to be the ``projection'' of $Q_i$ into $G$ in the obvious way: an arc $e$ of $G$ is in $P_i$ if and only if $e$ was in the original path $Q_i$. 
    
    The paths $P_1, \dots, P_k$ are non-crossing because the paths $Q_1, \dots, Q_k$ are non-crossing. We now show that the paths $P_1, \dots, P_k$ are pairwise edge-disjoint. Suppose for the sake of argument that $P_i$ and $P_j$ share an arc $uv$. Arc $uv$ is in the original graph $G$, so it must connect the grid $g_u$ to the grid $g_v$. Since there is only one edge in $H$ that connects $g_u$ to $g_v$, this means that $Q_i$ and $Q_j$ both use this arc, and so are not vertex-disjoint.
\end{proof}
The reduction clearly runs in polynomial time.

\section{Structure}
Let $a, b, c$, and  $d$ be four vertices on the outer face of $G$. 
Let $P$ be a directed walk from $a$ to $b$ and let $Q$ be a directed walk from $c$ to $d$.
Walks $P$ and $Q$ are {\em opposite} if the cyclic order of their four endpoints around $\partial G$ is $a, b, c, d$. $P$ and $Q$ are {\em parallel} if the order is $a, b, d, c$, and we denote this by $P \sim Q$. We define each path to be parallel to itself. Note that if $P$ is parallel to $Q$, then $Q$ is parallel to $P$. In addition, if $P$ is parallel to $Q$ and $Q$ is parallel to a directed walk $R$, then $P$ is parallel to $R$. Thus $\sim$ is an equivalence relation.
We have the following two lemmas.

\begin{lemma}\label{L:opposite}
Suppose $G$ has positive edge weights, and suppose $P$ and $Q$ are opposite non-conflicting shortest paths. If a vertex $x$ precedes a vertex $y$ on $P$, then $x$ does not precede $y$ in $Q$. In particular, $P$ and $Q$ are edge-disjoint. 
\end{lemma}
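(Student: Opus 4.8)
The plan is to argue by contradiction. Suppose some vertex $x$ precedes a vertex $y$ both on $P$ and on $Q$, and among all such pairs choose one with $d(x,y)$ as small as possible; then $\ell(P[x,y]) = \ell(Q[x,y]) = d(x,y)$, since a subpath of a shortest path is itself a shortest path (this is where positivity of the weights enters). The last sentence of the lemma will then follow: if $P$ and $Q$ shared an edge $\{u,v\}$ and $P$ traversed it as $uv$, the ordering statement would give $v \prec_Q u$, so $Q$ would traverse it as $vu$, contradicting that $P$ and $Q$ are non-conflicting.

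First I would pin down the local picture around $x$ and $y$. Minimality forces $P[x,y]$ and $Q[x,y]$ to be internally vertex-disjoint, since a common internal vertex $z$ would give a pair $(x,z)$ with $x \prec_P z$, $x \prec_Q z$ and $d(x,z) < d(x,y)$. Using the non-conflicting hypothesis one checks that $P[x,y]$ and $Q[x,y]$ share no edge either, unless they coincide; so either $P[x,y] = Q[x,y]$ is a single shared edge oriented $x \to y$ (a degenerate case, which I would dispatch separately by inspecting the rotations at $x$ and $y$), or $P[x,y]$ and $Q[x,y]$ are distinct internally disjoint paths whose union is a Jordan curve $J$ enclosing an open disk $B$. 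The crucial structural fact — call it the two-sided distance argument — is that none of the four tails $P[a,x]$, $P[y,b]$, $Q[c,x]$, $Q[y,d]$ meets $J$ except at its endpoint on $J$, and hence none of them enters $B$. For instance, if $P[a,x]$ reached an interior vertex $v$ of $Q[x,y]$, then $v$ would lie before $x$ on $P$ but strictly between $x$ and $y$ on $Q$; computing $d(v,y)$ along $P$ yields $\ell(P[v,x]) + \ell(P[x,y]) > d(x,y)$, whereas computing it along $Q$ yields $\ell(Q[v,y]) < \ell(Q[x,y]) = d(x,y)$, which is impossible. The remaining three cases are symmetric.

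Then the topology would finish the argument. Deleting $B$ from the disk bounded by $\partial G$ leaves an annulus whose inner boundary is $J$, split by $x$ and $y$ into the arc $P[x,y]$ and the arc $Q[x,y]$, and whose outer boundary is $\partial G$ with $a,b,c,d$ in this cyclic order. The tails $P[a,x]$ and $P[y,b]$ are two disjoint arcs in this annulus joining the inner boundary (at $x$, resp. $y$) to the outer boundary (at $a$, resp. $b$); cutting the annulus along them splits it into two disks, one of which, $R$, has the arc $Q[x,y]$ of $J$ on its boundary, so that $Q$ lies entirely in $\overline{R}$. Going around $\partial R$ one reads off $a, x, y, b$ and then $c,d$ in one of two orders. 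Since $Q = Q[c,x] \circ Q[x,y] \circ Q[y,d]$ is a simple path in the disk $\overline{R}$ that leaves $x$ along $\partial R$ toward $y$, the subpath $Q[c,x]$ is a chord of $\overline{R}$ and $Q[y,d]$ must lie on the same side of that chord as $y$; a short case check shows this is consistent precisely when $c,d$ occur in the order giving cyclic order $a,b,d,c$ on $\partial G$ (the parallel case), whereas the opposite order $a,b,c,d$ forces $Q[c,x]$ and $Q[y,d]$ — two disjoint parts of the simple path $Q$ — to cross, a contradiction.

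The hard part will be the step ``$Q$ lies entirely in $\overline{R}$'': $P[a,x]$ and $Q[c,x]$ (and likewise $P[y,b]$ and $Q[y,d]$) can still share vertices, so $Q$ could in principle cross the cutting arcs $P[a,x]$ and $P[y,b]$, and one must first clean this up — for example by truncating the tails at their last (resp. first) common vertex and re-routing, or by replacing the crossing count with a Jordan-curve parity argument — and check that the cyclic order of $a,b,c,d$ is unaffected. The minimal-pair reduction, the two-sided distance argument, and the annulus case analysis are otherwise straightforward, and the two-sided distance argument is the conceptual core: it is exactly the place where one uses that the weights are positive and that $P$ and $Q$ are shortest paths.
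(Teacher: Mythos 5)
Your distance computations are sound and are exactly the engine the paper uses (subpaths of shortest paths are shortest, and positivity gives the strict inequality), and the reduction of edge-disjointness to the ordering statement is fine. But the proposal does not actually complete the one step where the hypothesis ``opposite'' is used, and that is the heart of the lemma. You yourself flag it: the claim that $Q$ lies in $\overline{R}$ after cutting the annulus along $P[a,x]$ and $P[y,b]$ is false as stated, because nothing proven so far prevents $Q$'s tails from meeting (indeed sharing vertices and even codirected edges with) $P$'s tails, so $Q$ can weave across the cutting arcs; the fixes you name (truncate-and-reroute, parity of crossings) are not carried out, and each needs real care (e.g.\ after rerouting $Q[c,x]$ along $P[a,x]$ the resulting curve must be made simple, and tails may also revisit $\partial G$ at interior vertices, so ``cutting the annulus'' does not cleanly produce two disks). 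Until that is done, the argument never derives a contradiction with the cyclic order $a,b,c,d$, so the proof is incomplete precisely at its conclusion.

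The degenerate case is a second, related gap: when the minimal pair gives $P[x,y]=Q[x,y]$ equal to a single edge traversed $x\to y$, ``inspecting the rotations at $x$ and $y$'' cannot dispatch it, because that local configuration is perfectly consistent (it actually occurs for parallel paths); ruling it out again requires the global cyclic-order argument, e.g.\ by treating the shared edge as a zero-width bigon and running the same unfinished topological step. For comparison, the paper avoids the minimal-pair/bigon machinery altogether: from $x\prec y$ on both paths and the interleaving of $a,b,c,d$ it invokes the Jordan curve theorem once to produce a third common vertex $z$ with either $z\prec_Q x$ and $y\prec_P z$, or $y\prec_Q z$ and $z\prec_P x$, and then gets a contradiction from exactly the kind of two-sided length computation you wrote down. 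So your metric half matches the paper; what is missing is the topological forcing that the paper's Jordan-curve step supplies, and that is where your write-up currently has a genuine hole.
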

\begin{proof}
    Suppose for the sake of argument that $P$ and $Q$ are opposite non-conflicting shortest paths, and vertex $x$ precedes vertex $y$ on both $P$ and $Q$.
    By the Jordan curve theorem, there exists a vertex $z$ on $P \cap Q$ such that either $z$ precedes $x$ on $Q$ and $y$ precedes $z$ on $P$, or $y$ precedes $z$ on $Q$ and $z$ precedes $x$ on $P$. Suppose the first case holds. See Figure~\ref{F:structure}a. 
    Since $P$ and $Q$ are shortest paths, we have 
    \begin{align*}
        &\ell(P[z,x]) = \ell(Q[x,z]) = \ell(Q[x,y]) + \ell(Q[y,z]) \text{ and}\\
        &\ell(P[z,x]) + \ell(P[x,y]) = \ell(P[z,y]) = \ell(Q[y,z]).
    \end{align*} 
    This is impossible because $\ell(P[x,y]) = \ell(Q[x,y]) > 0$.
    
    Now suppose the second case holds. See Figure~\ref{F:structure}b. Similar to the previous case, we have
    \begin{align*}
        &\ell(P[y,z]) = \ell(Q[z,y]) = \ell(Q[z,x]) + \ell(Q[x,y]) \text{ and}\\
        &\ell(P[x,y] + \ell(P[y,z]) = \ell(P[x,z]) = \ell(Q[z,x]),
    \end{align*} 
    which is impossible because $\ell(Q[x,y]) = \ell(P[x,y]) > 0$.

\end{proof}

    \begin{figure}
    \centering
    	\begin{tabular}{cr@{\qquad}cr@{\qquad}cr}
    	\includegraphics[scale = 0.35]{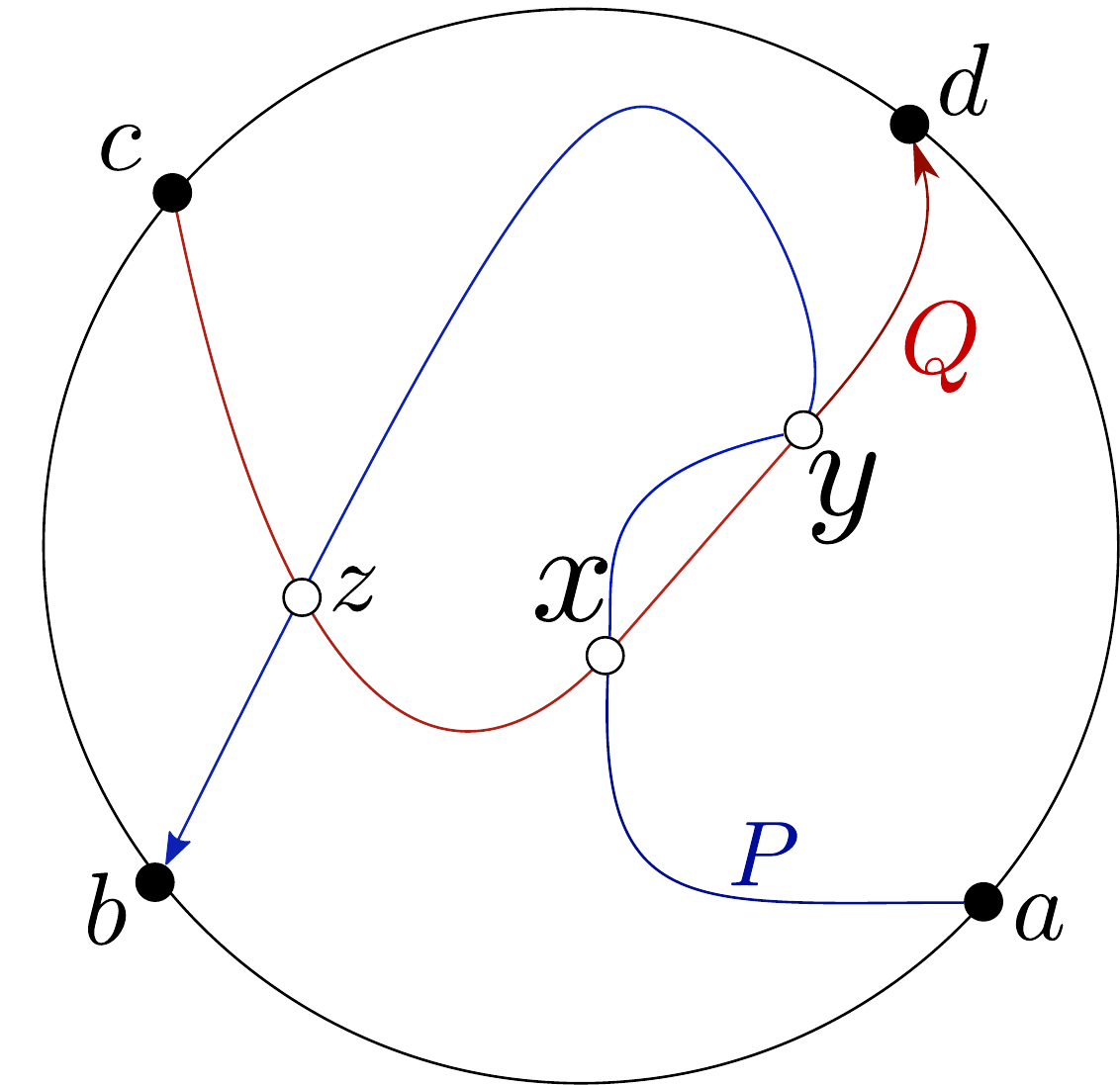} & \hspace{-0.25in} (a)
    	&
    	\includegraphics[scale = 0.35]{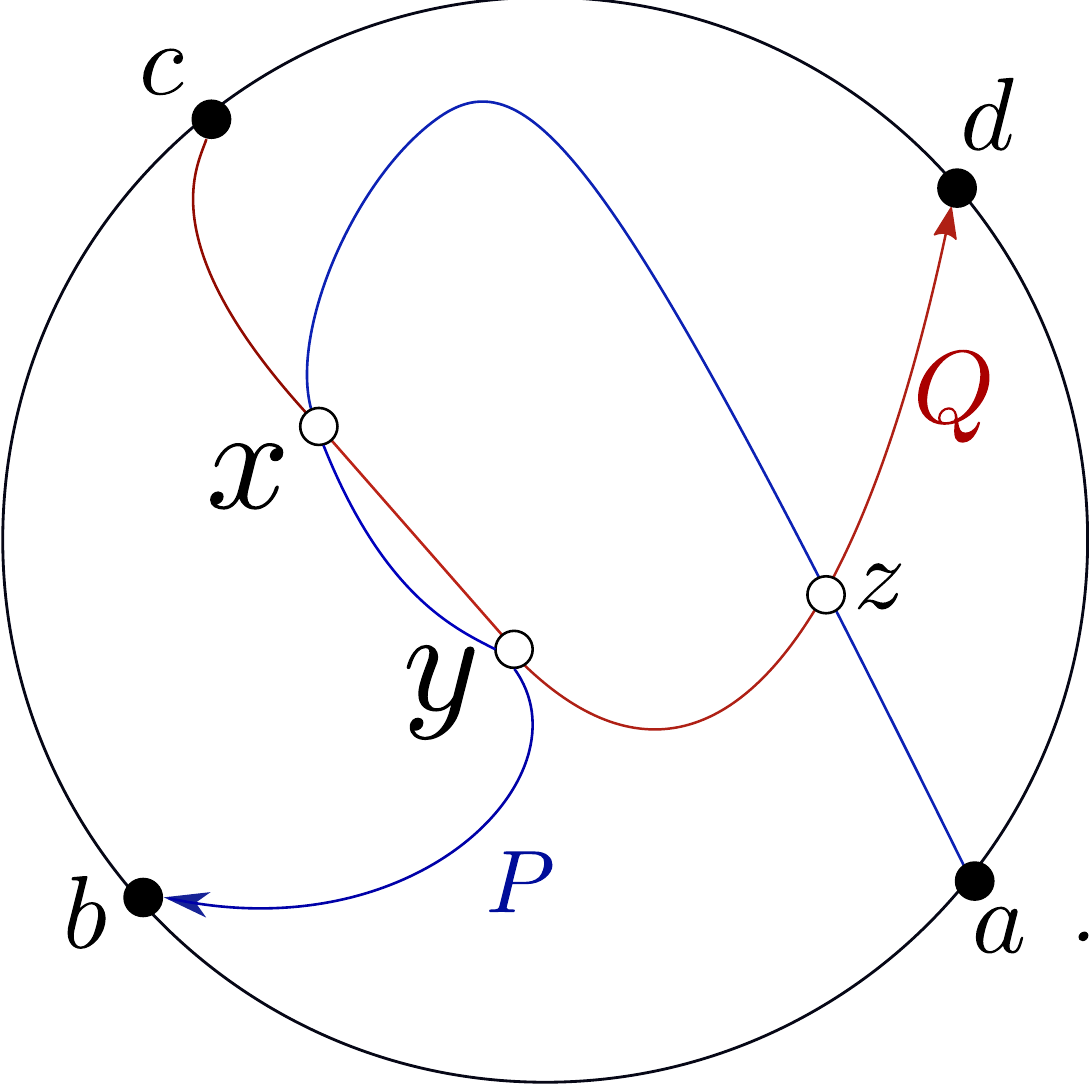} & \hspace{-0.25in} (b)
    \end{tabular}
    \caption{Impossible configurations in the proof of Lemma~\ref{L:opposite}. The blue path is $P$ and the red path is $Q$ (a) $z \prec_Q x$ and $y \prec_P z$ (b) $z \prec_P x$ and $y \prec_Q z$}
    \label{F:structure}
    \end{figure}

\begin{lemma}\label{L:codirectional}
    Suppose $G$ has positive edge weights, and suppose $P$ and $Q$ are parallel shortest paths. If a vertex $x$ precedes a vertex $y$ in $P$, then $y$ does not precede $x$ in $Q$. In particular, $P$ and $Q$ do not conflict. 
\end{lemma}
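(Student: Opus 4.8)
The plan is to deduce the ``non-conflicting'' conclusion from the ordering conclusion, and to prove the ordering conclusion by combining the triangle inequality with a ``quadrangle'' inequality for shortest-path distances between vertices of a common face. For the deduction: if $P$ and $Q$ conflicted on an edge $\{u,v\}$, with $uv$ an arc of $P$ and $vu$ an arc of $Q$, then $u \prec_P v$ while $v \prec_Q u$, and applying the ordering claim with $x=u$ and $y=v$ gives a contradiction. So it suffices to show that whenever $x \prec_P y$ (with $x,y$ also on $Q$) we have $x \prec_Q y$.

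Suppose instead $x \prec_P y$ and $y \prec_Q x$; then $x \neq y$ and, since edge weights are positive, $d(x,y) > 0$. Because $P$ is a shortest $a$--$b$ path meeting $x$ before $y$, and subpaths of shortest paths are shortest, $d(a,b) = d(a,x) + d(x,y) + d(y,b)$; similarly, because $Q$ is a shortest $c$--$d$ path meeting $y$ before $x$, $d(c,d) = d(c,y) + d(y,x) + d(x,d)$. The triangle inequality gives $d(a,d) \le d(a,x) + d(x,d)$ and $d(b,c) \le d(b,y) + d(y,c)$. Now I would use that, because $a,b,c,d$ lie on $\partial G$ in the cyclic order $a,b,d,c$ (by definition of $P \sim Q$), the pairs $\{a,d\}$ and $\{b,c\}$ interleave along $\partial G$, so a shortest $a$--$d$ path and a shortest $b$--$c$ path must meet at some vertex $v$; splitting both at $v$ and regrouping with the triangle inequality yields $d(a,d) + d(b,c) \ge d(a,b) + d(c,d)$. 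Chaining these,
\[
d(a,x) + d(x,d) + d(b,y) + d(y,c) \ \ge\ d(a,d) + d(b,c) \ \ge\ d(a,b) + d(c,d),
\]
and expanding the right-hand side via the two displayed identities and cancelling the common terms leaves $0 \ge d(x,y) + d(y,x) = 2\,d(x,y)$, contradicting $d(x,y) > 0$.

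The only step that is not pure bookkeeping is the quadrangle inequality $d(a,d)+d(b,c) \ge d(a,b)+d(c,d)$, so that is where I expect to spend the care: one must argue that a shortest $a$--$d$ path and a shortest $b$--$c$ path genuinely cross at a vertex (a Jordan-curve argument of the same flavour as the proof of Lemma~\ref{L:opposite}), which is clean once we may assume shortest paths avoid the outer face --- and that assumption is free after surrounding the terminals by an outer cycle of infinite-weight edges, exactly as set up in Section~2. Everything else is an application of the triangle inequality and the fact that subpaths of shortest paths are shortest.
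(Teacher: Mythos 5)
Your proof is correct, but it takes a genuinely different route from the paper's. The paper disposes of Lemma~\ref{L:codirectional} in one line: reversing $Q$ turns the parallel configuration $a,b,d,c$ into an opposite configuration, so the ordering claim is Lemma~\ref{L:opposite} applied to $P$ and $rev(Q)$, with ``$x$ does not precede $y$ on $rev(Q)$'' translating back to ``$y$ does not precede $x$ on $Q$'' (and the non-conflicting conclusion then follows just as in your first paragraph). You instead reprove the ordering claim from scratch: assuming $x \prec_P y$ and $y \prec_Q x$, you use subpath optimality to write $d(a,b)=d(a,x)+d(x,y)+d(y,b)$ and $d(c,d)=d(c,y)+d(y,x)+d(x,d)$, and you combine these with the triangle inequality and the quadrangle inequality $d(a,d)+d(b,c)\ge d(a,b)+d(c,d)$, the latter coming from the fact that a shortest $a$--$d$ path and a shortest $b$--$c$ path must share a vertex because $\{a,d\}$ and $\{b,c\}$ interleave on $\partial G$; the cancellation $0 \ge 2\,d(x,y) > 0$ then closes the argument. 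That crossing step is the only geometric input, it is the same flavour of Jordan-curve reasoning the paper itself uses inside Lemma~\ref{L:opposite}, and it is legitimate here (under the for-contradiction hypothesis all six vertices lie in one component, and the infinite-weight outer cycle of Section~2 keeps the relevant shortest paths off $\partial G$), so there is no gap. Comparing the two: the paper's reduction is shorter and reuses prior work --- note it only needs the ordering half of Lemma~\ref{L:opposite}, whose proof never uses the non-conflicting hypothesis, since $P$ and $rev(Q)$ need not be non-conflicting when $P$ and $Q$ share arcs in the same direction --- while your argument is self-contained, never needs $P$ and $Q$ themselves to cross, and isolates a reusable quadrangle inequality for interleaving terminal pairs, at the price of introducing the auxiliary $a$--$d$ and $b$--$c$ shortest paths and one extra crossing argument.
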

\begin{proof}
    This is just Lemma~\ref{L:opposite} with $Q$ replaced by $rev(Q)$.
\end{proof}

This lemma immediately suggests an algorithm for a special case of the ideal orientation problem. Suppose $G$ is an instance where the terminals all appear on the outer face in clockwise order $s_1, \dots, s_k, t_k, \dots, t_1$.
Lemma~\ref{L:codirectional} implies that the shortest paths from $s_i$ to $t_i$ are non-conflicting, so we just need to find a shortest path from $s_i$ to $t_i$ for all $i$. Steiger~\cite{S17} showed how to find a representation of these paths in $O(n \log \log k)$ time.

The following two lemmas are trivial when shortest paths are unique. In the ideal orientation problem, we cannot assume that shortest paths are unique because then the problem becomes trivial: just find the (unique) shortest paths and check if they conflict.

\begin{lemma}\label{L:uncross-codirectional}
    Let $G$ be any planar instance of the ideal orientation problem with terminal pairs $(s_1, t_1)$, $\dots$, $(s_k, t_k)$. If a solution $\mathcal{P}$ exists, then a solution $\mathcal{P}'$ exists in which for every pair of parallel paths $P_i$ and $P_j$ in $\mathcal{P}$, 
 $P_i$ and $P_j$ are non-crossing. 
\end{lemma}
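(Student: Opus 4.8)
The plan is to take a solution $\mathcal P = \{P_1, \dots, P_k\}$ and repeatedly ``uncross'' pairs of parallel paths, replacing crossing pairs with non-crossing ones of the same lengths, until no parallel pair crosses. The basic uncrossing move is standard: if $P_i$ and $P_j$ are parallel shortest paths that cross, pick a vertex $z$ where they meet and ``swap tails''. More precisely, since $P_i$ runs from $s_i$ to $t_i$ and $P_j$ from $s_j$ to $t_j$ with cyclic boundary order $s_i, t_i, t_j, s_j$ (the parallel configuration), if they share a vertex $z$ we can form $P_i' = P_i[s_i, z] \circ P_j[z, t_j]$ and $P_j' = P_j[s_j, z] \circ P_i[z, t_i]$. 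The key point is that because both are shortest paths and they agree in ``direction'' at $z$, the four subpaths have lengths satisfying $\ell(P_i[s_i,z]) + \ell(P_i[z,t_i]) = d(s_i,t_i)$ and similarly for $j$, and $P_i'$, $P_j'$ are still shortest paths from $s_i$ to $t_i$ and $s_j$ to $t_j$ respectively — this uses that $d(s_i, z) = \ell(P_i[s_i,z])$ etc. (subpaths of shortest paths are shortest). One has to be slightly careful that the swapped objects are still paths and not walks, but as noted in the preliminaries we may freely shortcut any directed cycles, so it suffices to keep everything non-conflicting.

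The first key step is to verify that the swap at an appropriately chosen crossing point strictly reduces some nonnegative integer potential — the natural choice is the total number of crossings among parallel pairs in the solution, or (to handle pairs that touch without crossing and avoid introducing new crossings) a lexicographically-ordered potential such as $\sum_{i<j} |P_i \cap P_j|$ counting shared vertices, together with the crossing count. I would pick $z$ to be, say, the first crossing point of $P_i$ along $P_j$, or more carefully a crossing point that is ``innermost'' so that the swap removes that crossing without creating new ones between $P_i', P_j'$; a picture makes this clear. The second key step is to check that the swap does not create conflicts, neither between $P_i'$ and $P_j'$ nor between these and the other paths $P_\ell$: the new paths use only edges that were already used (in the same direction) by $P_i$ or $P_j$, so no new directed edge is introduced, hence no new conflict with any $P_\ell$, and Lemma~\ref{L:codirectional} applied to $P_i, P_j$ (which are parallel) guarantees $P_i$ and $P_j$ traverse shared subpaths in a compatible way so that $P_i'$ and $P_j'$ do not conflict with each other either.

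The main obstacle I anticipate is bookkeeping around multiplicities and the exact choice of crossing point: two parallel shortest paths may overlap along long shared segments and cross several times, and a naive swap could leave the crossing count unchanged or even, if done at the wrong vertex, fail to terminate. The clean way around this is to formalize ``crossing'' via the rotation system at shared vertices and choose $z$ so that, after swapping, $P_i'$ and $P_j'$ are non-crossing with respect to each other (two parallel paths with no crossing among themselves, sharing a connected — or empty — intersection), while arguing via the Jordan curve theorem exactly as in Lemma~\ref{L:opposite} that every other parallel pair's crossing count is nonincreasing. Once the potential is shown to drop, finiteness is immediate since there are $\binom{k}{2}$ pairs and each has $O(n^2)$ crossings, so the process terminates with the desired solution $\mathcal P'$. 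I would present the argument mostly through the figure and the length identities, relegating the rotation-system details to the appendix as the paper does elsewhere.
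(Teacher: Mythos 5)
There is a genuine flaw in the core move. Your ``swap tails'' step replaces $P_i$ and $P_j$ by $P_i' = P_i[s_i,z] \circ P_j[z,t_j]$ and $P_j' = P_j[s_j,z] \circ P_i[z,t_i]$, which are paths from $s_i$ to $t_j$ and from $s_j$ to $t_i$. In the ideal orientation problem the pairing is fixed: the $i$-th path must be a shortest path from $s_i$ to $t_i$. So after the swap the collection no longer solves the instance at all, and your subsequent claim that $P_i'$, $P_j'$ are ``still shortest paths from $s_i$ to $t_i$ and $s_j$ to $t_j$'' contradicts your own definition of the swapped paths. The suffix-exchange move is the right tool when the source--sink matching is flexible (e.g.\ in flow or linkage arguments), but not here. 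The length identity you invoke, $\ell(P_i[s_i,z]) + \ell(P_i[z,t_i]) = d(s_i,t_i)$, only restates that $z$ lies on the shortest path $P_i$; it does not certify anything about the swapped paths.

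The paper's proof (following Liang and Lu) uses an endpoint-preserving reroute instead: within one equivalence class of parallel paths, ordered so the terminals read $s_1,\dots,s_h,t_h,\dots,t_1$ clockwise, it takes the \emph{first} vertex $x$ and the \emph{last} vertex $y$ at which $P_i$ meets $P_{i-1}$ and sets $P_i' = P_i[s_i,x] \circ P_{i-1}[x,y] \circ P_i[y,t_i]$. This keeps the endpoints $(s_i,t_i)$; it is still a shortest path because $P_i[x,y]$ and $P_{i-1}[x,y]$ are shortest paths between the same two vertices; it creates no new conflicts because $P_i'$ uses only arcs already used by $P_i$ or $P_{i-1}$; and Lemma~\ref{L:codirectional} is what guarantees that $x$ precedes $y$ on $P_{i-1}$ as well, so the subpath $P_{i-1}[x,y]$ is traversed in a consistent direction. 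Note also that rerouting along the entire stretch between the first and last intersection vertices is what makes $P_i'$ non-crossing with $P_{i-1}$; an exchange localized at a single crossing vertex $z$, even if repaired to preserve endpoints, would not obviously do this, which is exactly the termination worry you flag but do not resolve. If you replace your swap by this reroute and run the induction within each parallel equivalence class, the rest of your outline (no new arcs, hence no new conflicts; finiteness) goes through.
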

\begin{proof}
    This was proved by Liang and Lu~\cite{LL17}. For details see Appendix~\ref{A:LL17}.
\end{proof}

\begin{lemma}\label{L:connected-intersection}
    Let $G$ be any planar instance of the ideal orientation problem with terminal pairs $(s_1, t_1)$, $\dots$, $(s_k, t_k)$ and such that parallel paths do not cross. If a solution $\mathcal{P}$ exists, then a solution $\mathcal{P}'$ exists in which for every pair of parallel paths $P_i$ and $P_j$ in $\mathcal{P}$, $P_i \cap P_j$ is connected.
    Furthermore, we can enforce the second property such that the number of crossings between paths of $\mathcal{P}'$ is no more than the number of crossings between paths of $\mathcal{P}$. 
\end{lemma}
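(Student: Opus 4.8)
The plan is to start from the given solution $\mathcal{P}$ and repeatedly \emph{uncross} a pair of parallel paths whose intersection is disconnected by a local rerouting (``swap'') step, until no such pair remains; a potential‑function argument will force termination, and the swap will be chosen so that the number of crossings never increases. Note that Lemma~\ref{L:uncross-codirectional} already puts us in the situation where parallel paths are non‑crossing, which is the standing hypothesis of the lemma, so throughout we may assume every pair of parallel paths in the current solution is non‑crossing.

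Fix parallel paths $P_i,P_j$ in the current solution for which $P_i\cap P_j$ has at least two connected components. Then there are two \emph{consecutive} maximal common subpaths: $P_i$ and $P_j$ agree on a subpath from $a$ to $b$ and on a subpath from $c$ to $d$, while $P_i[b,c]$ and $P_j[b,c]$ meet only at $b$ and $c$. By Lemma~\ref{L:codirectional} the vertices $a,b,c,d$ appear in this order on both $P_i$ and $P_j$, and since $P_i$ and $P_j$ are non‑crossing the two subpaths $P_i[b,c]$ and $P_j[b,c]$ bound a disk region $R$. Because both are shortest $b$‑to‑$c$ paths, they have equal length $d_G(b,c)$; hence replacing $P_i[b,c]$ by $P_j[b,c]$ inside $P_i$ gives a walk $P_i'$ from $s_i$ to $t_i$ of length exactly $d_G(s_i,t_i)$, which — all edge lengths being positive — is automatically a simple shortest path. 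Moreover $P_i'$ uses only arcs of the old $P_i$ and old $P_j$, each in the direction it was already used (again Lemma~\ref{L:codirectional}), so $P_i'$ does not conflict with $P_j$ or with any other path. Thus $(\mathcal{P}\setminus\{P_i\})\cup\{P_i'\}$ is again a solution, and $P_i'\cap P_j$ has strictly fewer components than $P_i\cap P_j$.

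To control crossings, observe that the symmetric swap — replacing $P_j[b,c]$ by $P_i[b,c]$ inside $P_j$ — is equally valid, and the two swaps change the total number of crossings by exactly opposite amounts: each swap alters only the crossings of the rerouted path with the other paths, the change being $\pm\bigl(\mathrm{cr}(P_j[b,c],P_m)-\mathrm{cr}(P_i[b,c],P_m)\bigr)$ summed over the remaining paths $P_m$, while the pair $\{P_i,P_j\}$ itself stays non‑crossing in both cases. Hence at least one of the two swaps does not increase the total number of crossings, and we always perform that one. For termination I would fix a solution $\mathcal{P}'$ minimizing lexicographically: first the number of crossings (bounded by that of $\mathcal{P}$), then $\sum_m|E(P_m)|$, then $\sum_{m<m'}(\text{number of components of }P_m\cap P_{m'})$ over parallel pairs. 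An uncrossing swap applied to $\mathcal{P}'$, with the direction chosen as above, would contradict the minimality of one of these three quantities; this simultaneously yields the ``crossings do not increase'' clause.

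The crux — and the main obstacle — is the third tier of the potential: uncrossing $\{P_i,P_j\}$ can in principle split the intersection of the rerouted path with some \emph{other} path $P_m$, so one must show that, with the right choice of which disconnected pair and which pair of consecutive common subpaths to uncross (e.g.\ taking the region $R$ innermost, containing no further uncrossable configuration), the component count strictly drops. The work is to control how the other paths — themselves non‑conflicting shortest paths, with parallel ones non‑crossing — can lie inside $R$, and to check the crossing bookkeeping at the junction vertices $b$ and $c$; once that is done, the induction closes.
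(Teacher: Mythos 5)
Your overall strategy is the paper's: repeatedly exchange one of the two bounding subpaths of a ``bigon'' formed by a parallel pair whose intersection is disconnected, choosing the exchange direction so that crossings do not increase, and argue termination by a potential. But there is a genuine gap, and it is exactly the one you flag yourself at the end: your termination argument does not close. The swap you perform keeps the crossing number from increasing (tier one), but it need not decrease it; the second tier $\sum_m|E(P_m)|$ is not controlled either (the two bounding subpaths have equal \emph{length}, not equal edge count, so this quantity can move either way); and the third tier (total number of components of parallel intersections) can genuinely increase, because rerouting the single path $P_i$ onto $P_j[b,c]$ can split its intersection with some intermediate parallel path $P_m$ lying in the closed region $R$ (which is possible when $P_i$ and $P_j$ are not adjacent in the nesting order of their equivalence class). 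So the lexicographically minimal solution you fix need not be immune to your swap, and no contradiction is obtained. The remark ``take $R$ innermost'' points in the right direction but is not developed, and the crossing bookkeeping at the junction vertices $b,c$ (on which your antisymmetry claim $\Delta_1=-\Delta_2$ rests) is also left unchecked.

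The paper closes these holes with three specific choices. First, it only uncrosses a pair that is \emph{adjacent} in the nested order of its parallel equivalence class ($P_i$ and $P_{i+1}$), so that no parallel path enters the interior of the bigon bounded by $P_i[x,y]$ and $P_{i+1}[x,y]$; only opposite paths can enter, and the exchange direction is chosen by comparing how many opposite paths the two bounding segments cross. Second, the exchange is applied \emph{simultaneously} to every parallel path that contains the bounding segment being abandoned, not to a single path; this is what guarantees that parallel paths stay non-crossing and that the modification is coherent across the class. Third, termination is measured not by intersection components but by the number of regions into which the parallel paths of the class subdivide the interior of $G$: each group exchange eliminates the bigon face, so this quantity strictly decreases, and when no exchange is possible every parallel intersection is a single subpath. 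If you adopt these three ingredients (adjacent pair, group exchange, region-count potential), your argument becomes the paper's proof; as written, the induction does not terminate and the lemma is not yet established.
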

\begin{proof}

	Suppose we have a solution $\mathcal{P}$ in which parallel paths do not cross. We show how to make the second property holds while ensuring that parallel paths remain non-crossing and the total number of crossings between opposite paths does not increase.

	Re-index the terminal pairs such that $\mathcal{P} = \{P_1, \dots, P_h\}$ are pairwise non-crossing parallel paths and form an equivalence class.
 	It is straightforward to verify that the clockwise order of the terminals is then $s_1, \dots, s_h, t_h, \dots t_1$.

    Suppose $P_i \cap P_{i+1}$ consists of at least two paths, where $i < h$. There exist vertices $x$ and $y$ on $P_i \cap P_{i+1}$ such that $P_i[x,y]$ and $P_{i+1}[x,y]$ intersect only at $x$ and $y$. Note that no paths in $\mathcal{P}$ enter the interior of $P_i[x,y] \cup P_{i+1}[x,y]$.
    There are two cases
    \begin{enumerate}
        \item Suppose $P_i[x,y]$ crosses at least as many opposite paths than $P_{i+1}[x,y]$. Then we can perform an exchange as follows. Suppose $P_j, \dots, P_i$ are the paths that contain $P_i[x,y]$. Define $P_p' = P_p[s_p,x] \circ P_{i+1}[x,y] \circ P_p[y,t_p]$ for all $p \in [j,i]$, and let $\mathcal{P}' = \mathcal{P} \setminus \{P_j, \dots, P_i\} \cup \{P_j', \dots, P_i'\}$. Since $P_{i+1}[x,y]$ is a shortest path that does not conflict with any path in $\mathcal{P}$, $P_p'$ is a shortest path from $s_p$ to $t_p$ that does not conflict with any path in $\mathcal{P}'$ for all $p \in [j,i]$. Since no parallel paths enter the interior of $P_[x,y] \cup P_{i+1}[x,y]$, $P_p'$ does not cross any parallel path for $p \in [j,i]$. Finally, the number of opposite paths that $P_p'$ crosses is at most the number of opposite paths that $P_p$ crosses for $p \in [j,i]$, since $P_{i+1}[x,y]$ crosses at most as many opposite paths as $P_{i}[x,y]$ does.
        \item Suppose $P_i[x,y]$ crosses fewer opposite paths than $P_{i+1}[x,y]$. Then we can exchange $P_{i+1}[x,y]$ for $P_{i}[x,y]$. That is, suppose $P_i, \dots, P_j$ are the paths that contain $P_i[x,y]$. Define $P_p' = P_p[s_p,x] \circ P_{i}[x,y] \circ P_p[y,t_p]$ for all $p \in [i,j]$, and let $\mathcal{P}' = \mathcal{P} \setminus \{P_i, \dots, P_j\} \cup \{P_i', \dots, P_j'\}$. Analogous to the previous case, one can show that the number of crossings does not increase and that parallel paths still do not cross. Furthermore, the resulting solution is still made up of shortest paths and is thus still a solution.
    \end{enumerate}
    As long as there exist two paths whose intersection is not a single subpath, we can perform the exchange.
    Each time we perform the exchange, the number of regions that the parallel paths split the interior of $G$ into decreases, so eventually we will no longer be able to perform the exchange. At this point, every pair of parallel paths has intersection consisting of only one subpath. Repeat for all other equivalence classes.
\end{proof}

\section{Serial case for ideal orientations}\label{S:serial-ideal}
Recall that an instance of the ideal orientation problem is {\em serial} if the terminals all appear on the outer face in clockwise order $u_1, v_1, \dots, u_k, v_k$, where for each $i \in [k]$ we have $(u_i, v_i) = (s_i, t_i)$ or $(u_i, v_i) = (t_i, s_i)$.
For all $i$, if $(u_i, v_i) = (s_i, t_i)$, then we say that $(s_i, t_i)$ and any path from $s_i$ to $t_i$ are {\em clockwise}; otherwise $(s_i, t_i)$ and any path from $s_i$ to $t_i$ are {\em counterclockwise}.
Note that a clockwise and a counterclockwise path are parallel, while two clockwise paths (or two counterclockwise paths) are opposite.
In this section, we describe an algorithm that solves serial instances of the ideal orientation problem in $O(n^2)$ time even when $k$ is part of the input. First we prove the following lemmas:

\subsection{Envelopes}
Suppose $G$ is a serial instance with terminal pairs $(s_1, t_1), \dots, (s_k, t_k)$. Suppose we have a set $\Pi$ of arbitrary directed paths $\pi_1, \dots, \pi_k$ such that $\pi_i$ connects $s_i$ to $t_i$ and no path touches $\partial G$; the paths may intersect arbitrarily. Let $C_i$ be the portion of $\partial G$ that connects $s_i$ to $t_i$ without containing any other terminals.
The paths in $\pi_1, \dots, \pi_k$ divide the interior of $G$ into connected regions.  Let $R_i$ be the unique region with $C_i$ on its boundary. Finally, we define $L(i, \Pi)$ to be the directed path from $s_i$ to $t_i$ whose set of edges is $\partial R_i \setminus C_i$. Intuitively, $L(i,\Pi)$ is the ``lower envelope'' of $\pi_1, \dots, \pi_k$ if we draw $G$ such that $s_i$ and $t_i$ are on the bottom. See Figure~\ref{F:envelopes}

\begin{figure}
    \centering
    \begin{tabular}{cr@{\qquad}cr@{\qquad}cr}
    	\includegraphics[scale = 0.25]{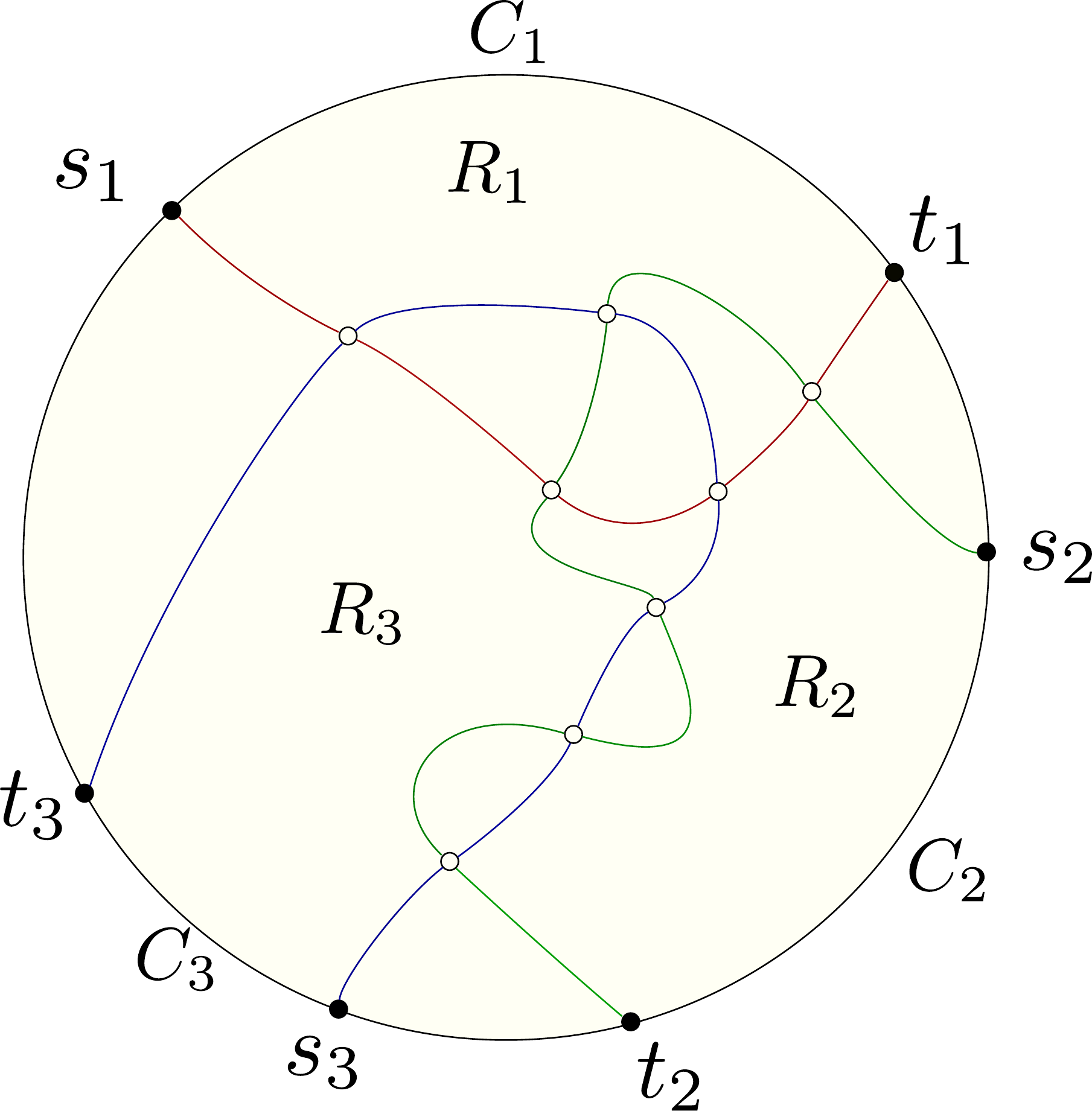} & \hspace{-0.25in} (a)
    	&
    	\includegraphics[scale = 0.25]{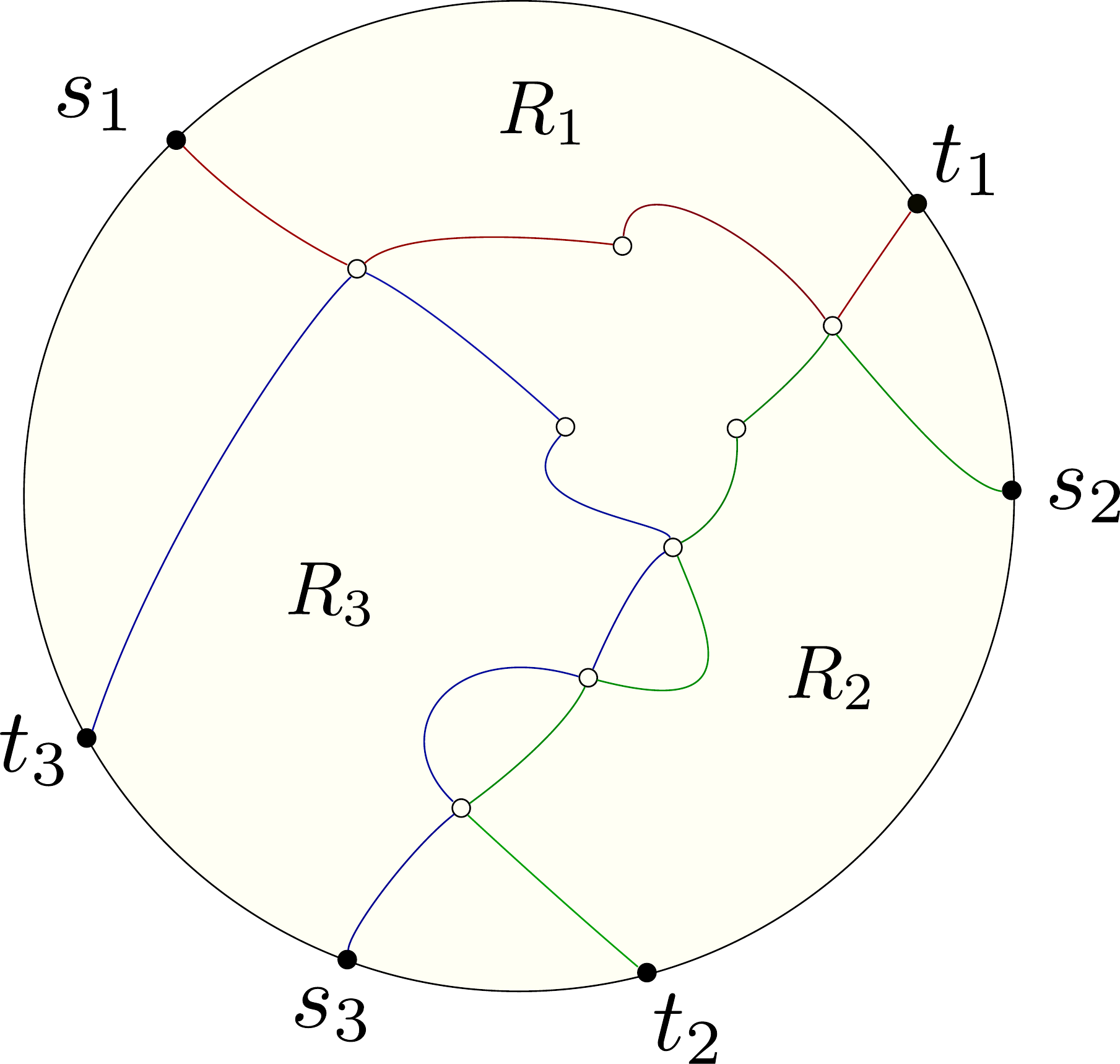} & \hspace{-0.25in} (b)
    \end{tabular}
    \caption{All paths are directed from $s_i$ to $t_i$. (a) We have $\Pi = \{\pi_1, \pi_2, \pi_3\}$, where the red path is $\pi_1$, green path is $\pi_2$, and blue path is $\pi_3$ (b) The red path is $L(1, \Pi)$, green path is $L(2, \Pi)$, and blue path is $L(3, \Pi)$}
    \label{F:envelopes}
    \end{figure}

For any arc $e$ in $L_i$, either $e$ or $rev(e)$ must be an arc in one of $\pi_1, \dots, \pi_k$. Thus $L(i,\Pi)$ does not contain any edges on $\partial G$. Also, the walks $L(1, \Pi), \dots, L(k, \Pi)$ are pairwise non-crossing.

\begin{lemma}\label{L:envelope-nonconflicting}
    Suppose $G$ is serial and $\Pi = \{\pi_1, \dots, \pi_k\}$ is a set of paths such that $\pi_i$ connects terminal $s_i$ to terminal $t_i$ for all $i$. 
    If $\pi_1, \dots, \pi_k$ are pairwise non-conflicting, then $L(1,\Pi), \dots, L(k,\Pi)$ are also pairwise non-conflicting.
\end{lemma}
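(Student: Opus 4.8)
The plan is to argue by contradiction. Suppose some $L(i,\Pi)$ and $L(j,\Pi)$ conflict, say on an edge $e=\{a,b\}$, with $L(i,\Pi)$ using the arc $ab$ and $L(j,\Pi)$ using $ba$. Since each $L(\cdot,\Pi)$ is a simple directed path it cannot conflict with itself, so $i\neq j$. As already observed, $e$ is not on $\partial G$, hence $e$ is incident to exactly two of the regions into which $\Pi$ divides the interior of $G$. Both $R_i$ and $R_j$ are among those two regions (as $e\in\partial R_i$ and $e\in\partial R_j$), and $R_i\neq R_j$: otherwise $\partial R_i$ would contain both $C_i$ and $C_j$, so $L(i,\Pi)=\partial R_i\setminus C_i$ would contain the boundary edges of $C_j$, contradicting that $L(i,\Pi)$ uses no boundary edge. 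Therefore $R_i$ and $R_j$ are precisely the two regions flanking $e$, and they lie on opposite sides of $e$.

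The crucial step is to show that $e$ lies on $\pi_i$ and on $\pi_j$. The path $\pi_i$ is a simple arc from $s_i$ to $t_i$ meeting $\partial G$ only at its endpoints, so it splits $G$ into two pieces, one touching $C_i$ and the other touching $C_i'$; call these the $C_i$-side and the $C_i'$-side of $\pi_i$. Since the instance is serial, $s_i$ and $t_i$ are consecutive on $\partial G$, so $C_i$ contains no terminal besides $s_i,t_i$; hence $C_j$ is disjoint from $C_i$ and, being a connected arc avoiding $s_i$ and $t_i$, lies inside $C_i'$. Consequently $R_i$ (which touches $C_i$) and $R_j$ (which touches $C_j$) lie on opposite sides of $\pi_i$. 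But $R_i$ and $R_j$ are exactly the two regions flanking $e$, so $e$ must lie on the curve $\pi_i$ that separates them. The same argument with $i$ and $j$ swapped shows $e\in\pi_j$. Now $\pi_i$ and $\pi_j$ both use $e$ and are non-conflicting, so they use $e$ in the same direction; denote it $\delta$.

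It remains to see that $L(i,\Pi)$ also uses $e$ in direction $\delta$ (and, symmetrically, so does $L(j,\Pi)$), which contradicts the assumed conflict. The bookkeeping fact I would use is: for each $p$ there is a side --- ``left'' if $(u_p,v_p)=(s_p,t_p)$ and ``right'' otherwise --- such that every simple $s_p$--$t_p$ path in $G$ avoiding $\partial G$, directed from $s_p$ to $t_p$, keeps its $C_p$-side on that side. This follows by comparing the given clockwise order of the terminals on $\partial G$ with the orientation of the disk's boundary (traversing the boundary cycle of a bounded region with the region on its left is counterclockwise, which fixes the direction of the portion along $\partial G$, hence of the rest of the cycle). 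Both $\pi_i$ and $L(i,\Pi)$ are simple $s_i$--$t_i$ paths avoiding $\partial G$. The $C_i$-side of $\pi_i$ contains $R_i$ (shown above), and the $C_i$-side of $L(i,\Pi)$ \emph{is} $R_i$ (since $L(i,\Pi)$ splits $G$ into two pieces and $C_i\subseteq\partial R_i$ forces $R_i$ to be the piece touching $C_i$); so along $e$ the region $R_i$ lies to the distinguished side of each of $\pi_i$ and $L(i,\Pi)$. As $R_i$ occupies one fixed side of $e$, the two paths traverse $e$ in the same direction, namely $\delta$. The same holds for $L(j,\Pi)$, so $L(i,\Pi)$ and $L(j,\Pi)$ use $e$ identically --- the desired contradiction.

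The step I expect to be the main obstacle is the middle one, forcing $e\in\pi_i\cap\pi_j$: this is exactly where seriality enters (without it $C_i$ and $C_j$ need not be nested as required, and $R_i,R_j$ could be separated by some $\pi_m$ with $m\notin\{i,j\}$, which would break the argument). A secondary technical point is the orientation bookkeeping of the last paragraph --- keeping the clockwise-to-side convention consistent between an envelope and the path it borrows the edge $e$ from --- together with the standing assumption (implicit in defining $L(\cdot,\Pi)$ as a \emph{path}) that each $\partial R_p$ is a simple cycle.
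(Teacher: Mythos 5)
Your proof is correct and takes essentially the same route as the paper's: a conflict of the envelopes at an edge $e$ places $R_i$ and $R_j$ on opposite sides of $e$, and a Jordan-curve argument then forces $\pi_i$ and $\pi_j$ to use $e$ themselves in conflicting directions, contradicting the hypothesis. The paper compresses this into one sentence (arguing the contrapositive); your write-up simply makes explicit the incidence step ($e\in\pi_i\cap\pi_j$) and the orientation bookkeeping that the paper leaves implicit.
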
. 
\begin{proof}
    We prove the contrapositive. Suppose $L(i,\Pi)$ and $L(j,\Pi)$ conflict at an edge $e$. Then the regions $R_i$ and $R_j$ touch each other at $e$. Since $\pi_i$ and $\pi_j$ do not use any boundary arcs, the Jordan Curve Theorem implies that $\pi_i$ and $\pi_j$ also conflict at $e$.
\end{proof}

\begin{lemma}\label{L:envelope-length}
    Suppose $G$ is serial and $\Pi = \{\pi_1, \dots, \pi_k\}$ is a set of pairwise non-conflicting paths such that $\pi_i$ connects terminal $s_i$ to terminal $t_i$ for all $i$. 
    Then we have 
    \[
        \sum_{i = 1}^k \ell(\pi_i) \geq \sum_{i=1}^k \ell(L(i, \Pi)).
    \]
    In particular, if $\pi_1, \dots, \pi_k$ are shortest paths, then so are $L(1,\Pi), \dots, L(k,\Pi)$.
\end{lemma}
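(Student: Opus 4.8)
The plan is to charge the total length of the envelopes $L(1,\Pi),\dots,L(k,\Pi)$ against the total length of $\pi_1,\dots,\pi_k$ edge by edge, using the fact that the $\pi_i$ are non-conflicting so that no undirected edge is used in two opposite directions. First I would recall from the discussion preceding the lemma that every arc $e$ of $L(i,\Pi)$ satisfies: either $e$ or $rev(e)$ appears as an arc of some $\pi_m$. I want to turn this into a counting argument, so the first step is to fix orientations: since $G$ is serial, all terminal pairs are mutually parallel or opposite in a controlled way, and the envelopes $L(1,\Pi),\dots,L(k,\Pi)$ are pairwise non-crossing and, by Lemma 4.1 (envelope-nonconflicting), pairwise non-conflicting. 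Hence the undirected edge sets used by the envelopes, together with their orientations, are consistent: no edge is used by two envelopes in opposite directions.

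The key step is a region/boundary bookkeeping argument. For each $i$, $L(i,\Pi) = \partial R_i \setminus C_i$, where $R_i$ is the face of the arrangement of $\pi_1,\dots,\pi_k$ whose boundary contains the boundary arc $C_i$. Consider the planar subdivision induced by the union $\bigcup_m \pi_m$ inside $G$. Each interior edge $e$ of this subdivision borders exactly two faces. I would argue that an undirected edge $e$ is used by the envelope $L(i,\Pi)$ only if $e$ lies on the boundary between $R_i$ and an adjacent face, and that when $e$ is used by two envelopes $L(i,\Pi)$ and $L(j,\Pi)$ it must be the common boundary edge of $R_i$ and $R_j$, so it is used at most twice among all envelopes. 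Now sum lengths:
\[
\sum_{i=1}^k \ell(L(i,\Pi)) \;=\; \sum_{e} \ell(e)\,\big(\text{number of envelopes using } e\big)
\;\le\; \sum_{e \in \bigcup_m \pi_m} 2\,\ell(e)\,?
\]
That crude bound is too weak by a factor of two, so the real work is to show the tighter statement: an edge used twice by the envelopes is used (as an undirected edge) by at least two distinct $\pi_m$'s — or, more precisely, to match envelope-usage to $\pi$-usage so that $\sum_i \ell(L(i,\Pi)) \le \sum_i \ell(\pi_i)$ exactly. The cleanest route is probably induction on $k$: remove one path, say $\pi_k$, apply the inductive hypothesis to $\pi_1,\dots,\pi_{k-1}$, and then analyze how adding $\pi_k$ changes the lower envelopes. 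Adding $\pi_k$ can only ``push down'' certain envelopes along subpaths of $\pi_k$, and because $\pi_k$ is a shortest path (in the ``in particular'' case) or because we only replace a subpath of some $L(i,\cdot)$ by a subpath of $\pi_k$ with the same endpoints that bounds a region whose other side is the old envelope, the net change to $\sum_i \ell(L(i,\cdot))$ is at most $\ell(\pi_k)$ minus the length of whatever piece of $\pi_k$ is not newly exposed; a careful accounting gives $\sum_i \ell(L(i,\Pi)) - \sum_i \ell(L(i,\Pi\setminus\{\pi_k\})) \le \ell(\pi_k)$, and the inductive hypothesis finishes it.

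For the ``in particular'' clause: if every $\pi_i$ is a shortest $s_i$–$t_i$ path, then $\ell(L(i,\Pi)) \ge d_G(s_i,t_i) = \ell(\pi_i)$ for each $i$, since $L(i,\Pi)$ is a path from $s_i$ to $t_i$. Combined with the inequality $\sum_i \ell(L(i,\Pi)) \le \sum_i \ell(\pi_i) = \sum_i d_G(s_i,t_i)$, every inequality $\ell(L(i,\Pi)) \ge d_G(s_i,t_i)$ must be tight, so each $L(i,\Pi)$ is a shortest path as well.

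The main obstacle I expect is making the edge-charging argument from the region structure airtight — in particular, ruling out the possibility that a single undirected edge is used by three or more envelopes, or is used by one envelope but by no $\pi_m$, and handling the interaction of the orientations carefully. The non-conflicting hypothesis on the $\pi_i$ (hence on the $L(i,\Pi)$, via Lemma 4.1) is exactly what prevents the pathological cancellations, so the technical heart is to phrase the charging so that this hypothesis is visibly used; the induction-on-$k$ framing is my preferred way to keep that bookkeeping local and manageable.
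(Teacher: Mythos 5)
Your high-level strategy --- charge the edges of the envelopes against the edges of the $\pi_m$ --- is the paper's strategy, and your derivation of the ``in particular'' clause from the inequality is exactly right. But there is a genuine gap at the central counting step, and the observation you are missing is that a single undirected edge cannot be used by \emph{two} envelopes at all, so the factor-of-two problem you are trying to engineer around never arises. The paper's argument is: (a) because the walks $L(1,\Pi),\dots,L(k,\Pi)$ are pairwise non-crossing boundaries of distinct regions, the Jordan curve theorem gives that each \emph{arc} $e$ is used by at most one envelope; (b) by Lemma~\ref{L:envelope-nonconflicting} the envelopes are pairwise non-conflicting, so no envelope uses $rev(e)$ when another uses $e$. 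Together, each undirected edge of $G$ appears in at most one of $L(1,\Pi),\dots,L(k,\Pi)$, and since every such edge is used by at least one $\pi_m$, one gets $\sum_i \ell(L(i,\Pi)) \le \ell\bigl(\bigcup_m \pi_m\bigr) \le \sum_m \ell(\pi_m)$ directly --- no induction needed. Your proposal explicitly concedes that an edge on the common boundary of $R_i$ and $R_j$ might be counted twice; that is precisely the case that (a) and (b) jointly exclude, and the non-conflicting hypothesis you were looking to ``make visibly used'' is used exactly here, via Lemma~\ref{L:envelope-nonconflicting}.

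The fallback induction on $k$ does not repair this. The inductive step ``$\sum_i \ell(L(i,\Pi)) - \sum_i \ell(L(i,\Pi\setminus\{\pi_k\})) \le \ell(\pi_k)$'' is asserted rather than proved, and it is exactly where the difficulty lives: deleting $\pi_k$ also deletes the terminal pair $(s_k,t_k)$, its boundary arc $C_k$, and hence the entire envelope $L(k,\cdot)$, so you are comparing $k$ envelopes against $k-1$ envelopes of a different instance; and re-inserting $\pi_k$ can reshape several of the remaining envelopes simultaneously (a single subpath of $\pi_k$ can become newly exposed boundary for more than one region). Bounding the net change by $\ell(\pi_k)$ would require controlling exactly the edge multiplicities that the direct argument above dispatches in one step.
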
 
\begin{proof}
    Because the walks $L(1, \Pi), \dots, L(k, \Pi)$ are pairwise non-crossing, the Jordan Curve Theorem implies that each arc $e$ can only be used by at most one of the walks $L(1, \Pi), \dots, L(k, \Pi)$. By Lemma~\ref{L:envelope-nonconflicting}, arc $rev(e)$ can only be used by one of the walks $L(1, \Pi), \dots, L(k, \Pi)$ if $e$ is not used by any of those walks. It follows that each edge of $G$ is used by at most one of the walks $L(1, \Pi), \dots, L(k, \Pi)$. In addition, every edge used by one of $L(1, \Pi), \dots, L(k, \Pi)$ must be used by at least one of $\pi_1, \dots, \pi_k$. The lemma follows.
\end{proof}

\subsection{Algorithm}
\begin{lemma}\label{L:no-crossings}
    Let $G$ be a serial instance of the ideal orientation problem with terminal pairs $(s_1, t_1), \dots, (s_k, t_k)$. If a solution exists, then a solution exists in which the paths $P_1, \dots, P_k$ are pairwise non-crossing.
\end{lemma}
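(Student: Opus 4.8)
The plan is to deduce the lemma from the envelope machinery of Section~4.1 rather than by a direct uncrossing argument. Start from an arbitrary solution $\mathcal{P} = \{P_1, \dots, P_k\}$. First I would invoke the reduction mentioned in Section~2, adding an outer cycle of $2k$ infinite-weight edges joining the $2k$ terminals in their cyclic order; after this the only vertices on $\partial G$ are the terminals, so every $P_i$ meets $\partial G$ only at $s_i$ and $t_i$, and because the instance is serial the arc $C_i$ is just the single (infinite-weight) edge joining the consecutive terminals $s_i$ and $t_i$. Thus the envelopes $L(1,\Pi), \dots, L(k,\Pi)$ are well defined for $\Pi = \mathcal{P}$, and I claim that $\{L(1,\Pi), \dots, L(k,\Pi)\}$ is the solution we want.

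There are four things to verify, all routine given the earlier lemmas. (i) Each $L(i,\Pi)$ is by definition a directed walk from $s_i$ to $t_i$. (ii) Since $P_1, \dots, P_k$ are pairwise non-conflicting, Lemma~\ref{L:envelope-nonconflicting} gives that $L(1,\Pi), \dots, L(k,\Pi)$ are pairwise non-conflicting. (iii) Since each $P_i$ is a shortest path, Lemma~\ref{L:envelope-length} gives $\sum_i \ell(L(i,\Pi)) \le \sum_i \ell(P_i) = \sum_i d_G(s_i, t_i)$; as each $L(i,\Pi)$ is a walk from $s_i$ to $t_i$ and so has length at least $d_G(s_i, t_i)$, equality must hold term by term, so each $L(i,\Pi)$ is a shortest walk and hence (edge lengths being positive) a simple shortest path. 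Therefore $\{L(1,\Pi), \dots, L(k,\Pi)\}$ is a solution. (iv) The walks $L(1,\Pi), \dots, L(k,\Pi)$ are pairwise non-crossing, which was observed immediately after their definition.

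I do not expect a real obstacle: the statement is essentially a corollary of Lemmas~\ref{L:envelope-nonconflicting} and~\ref{L:envelope-length} together with the non-crossing property of envelopes. The one point deserving care is the legitimacy of the envelope setup --- that the solution paths may be assumed internally disjoint from $\partial G$ and that each $C_i$ is a single edge between consecutive terminals --- which is exactly what the infinite-weight outer cycle and the serial hypothesis provide.
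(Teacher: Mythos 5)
Your proposal is correct and follows essentially the same route as the paper: take the envelopes $L(1,\mathcal{P}),\dots,L(k,\mathcal{P})$ of an arbitrary solution, note they are pairwise non-crossing by construction, non-conflicting by Lemma~\ref{L:envelope-nonconflicting}, and shortest paths by Lemma~\ref{L:envelope-length}. Your added care about the boundary setup and the term-by-term length argument only spells out details the paper leaves implicit.
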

\begin{proof}
    This is straightforward using envelopes.
    Let $\mathcal{P} = \{P_1, \dots, P_k\}$ be a solution to the serial instance $G$ of the ideal orientation problem, where $P_i$ connects $s_i$ to $t_i$.
    Then the walks $L(1, \mathcal{P}), \dots, L(k, \mathcal{P})$ are pairwise non-crossing. By Lemma~\ref{L:envelope-nonconflicting}, the walks are pairwise non-conflicting, and by Lemma~\ref{L:envelope-length} they are shortest paths, so they constitute a solution.

\end{proof}
A path from $s_i$ to $t_i$ is the {\em outermost} shortest path from $s_i$ to $t_i$ if it is outside all other shortest paths from $s_i$ to $t_i$. 
The following lemma states that finding outermost shortest paths is sufficient:
\begin{lemma}
	Let $G$ be a serial instance of the ideal orientation problem with terminal pairs $(s_1, t_1), \dots, (s_k, t_k)$.
	If a solution $\mathcal{P} = \{P_1, \dots, P_k\}$ exists, then a solution exists in which $P_i$ is the outermost shortest path from $s_i$ to $t_i$ for all $i$.
\end{lemma}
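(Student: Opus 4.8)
The plan is to take an arbitrary solution, first make it pairwise non-crossing via Lemma~\ref{L:no-crossings}, and then replace \emph{all} of its paths at once by the outermost shortest paths, verifying that the new family is still a solution. So let $\mathcal P=\{P_1,\dots,P_k\}$ be a pairwise non-crossing solution (which exists by Lemma~\ref{L:no-crossings}), let $L_i$ be the outermost shortest $s_i$-to-$t_i$ path, and put $\mathcal P'=\{L_1,\dots,L_k\}$. Each $L_i$ is by definition a shortest $s_i$-to-$t_i$ path, so the only thing to show is that $L_i$ and $L_j$ are non-conflicting for every $i\neq j$.

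Before the case analysis I would fix notation. For each $i$ let $C_i$ be the arc of $\partial G$ joining $s_i$ and $t_i$ and containing no other terminal, and for a shortest $s_i$-to-$t_i$ path $Q$ let $B(Q)$ be the closed disk bounded by the Jordan curve $C_i\cup Q$. (This really is a simple closed curve: by the standing assumption, enforced by the outer $2k$-cycle of infinite-weight edges, no shortest path uses a boundary edge, and since $\partial G$ is the $2k$-cycle on the terminals, $Q$ meets $\partial G$ only at $s_i$ and $t_i$.) The defining property of the outermost shortest path $L_i$ is that $B(L_i)\subseteq B(Q)$ for every shortest $s_i$-to-$t_i$ path $Q$ -- it hugs $C_i$ as tightly as possible -- and in particular $B(L_i)\subseteq B(P_i)$. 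I would also note in passing that such an $L_i$ exists and is unique (e.g., it is obtained by starting at $s_i$ and repeatedly following, among the edges that lie on some shortest $s_i$-to-$t_i$ path, the one nearest $C_i$ in rotational order).

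Now fix $i\neq j$. If $(s_i,t_i)$ and $(s_j,t_j)$ are parallel, then Lemma~\ref{L:codirectional} already guarantees that any two parallel shortest paths -- in particular $L_i$ and $L_j$ -- are non-conflicting, and there is nothing to do. If they are opposite, I would prove the stronger fact that $L_i$ and $L_j$ are \emph{edge-disjoint}. Because $\mathcal P$ is a pairwise non-crossing solution, $P_i$ and $P_j$ are opposite non-conflicting shortest paths, so by Lemma~\ref{L:opposite} they are edge-disjoint, and by construction they do not cross. In the serial ordering the arcs $C_i$ and $C_j$ are disjoint, so the Jordan curve $C_i\cup P_i$ splits the disk bounded by $\partial G$ into $B(P_i)$ (which contains $C_i$) and a complementary closed disk $D_i$ (which contains $C_j$); since $s_j,t_j$ lie on $\partial D_i$ and $P_j$ crosses neither $P_i$ nor $\partial G\setminus\{s_j,t_j\}$, the path $P_j$ lies entirely in $D_i$, hence $B(P_j)\subseteq D_i$. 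Therefore $B(P_i)\cap B(P_j)\subseteq B(P_i)\cap D_i=C_i\cup P_i$, and by the symmetric argument (which gives $B(P_i)\subseteq D_j$, the complementary disk of $B(P_j)$) also $B(P_i)\cap B(P_j)\subseteq C_j\cup P_j$. Consequently, if $L_i$ and $L_j$ shared an edge $e$, then $e$ would lie in $B(L_i)\cap B(L_j)\subseteq B(P_i)\cap B(P_j)\subseteq (C_i\cup P_i)\cap(C_j\cup P_j)$; since $C_i$ and $C_j$ are disjoint boundary arcs that meet the other curves only at terminals (all distinct), this forces $e$ to be an edge of $P_i\cap P_j$, contradicting the edge-disjointness of $P_i$ and $P_j$. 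Hence $L_i$ and $L_j$ are edge-disjoint, hence non-conflicting, so $\mathcal P'$ is a solution.

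The step I expect to be the main obstacle is the planar bookkeeping in the opposite case, and within it the claim that $P_j$ lies entirely inside the disk $D_i$. Pinning this down carefully requires both that $P_j$ not cross $P_i$ (this is exactly where the non-crossing solution from Lemma~\ref{L:no-crossings} is used) and that $P_j$ touch $\partial G$ only at $s_j,t_j$ (this is where the infinite-weight outer $2k$-cycle is used, so that the only boundary vertices are the $2k$ terminals), together with a clean justification that the two closed disks $B(P_i)$ and $B(P_j)$ meet only along $C_i\cup P_i$ and along $C_j\cup P_j$. Once these topological facts are in place, the remainder is a direct combination of Lemmas~\ref{L:no-crossings}, \ref{L:opposite}, and \ref{L:codirectional} with the defining property of the outermost shortest path.
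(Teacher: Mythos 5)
Your proof is correct and follows essentially the same route as the paper: start from a pairwise non-crossing solution (Lemma~\ref{L:no-crossings}) and swap in the outermost shortest paths, which are squeezed between the old paths and the boundary arcs $C_i$ and therefore cannot interfere with the rest of the solution. The only difference is cosmetic: the paper exchanges one path at a time, whereas you replace all of them at once and verify non-conflict explicitly via Lemmas~\ref{L:codirectional} and~\ref{L:opposite}, which if anything gives a more careful justification of the paper's terse ``does not cross, hence does not conflict'' step.
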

\begin{proof}
    Suppose we have a solution $\mathcal{P} = \{P_1, \dots, P_k\}$ where the paths in $\mathcal{P}$ are pairwise non-crossing and some path $P_i$ is not the outermost shortest path from $s_i$ to $t_i$. Then we can exchange $P_i$ for the outermost shortest path. That is, let $P_i'$ be the outermost shortest path from $s_i$ to $t_i$, and let $\mathcal{P}' = \mathcal{P} \setminus \{P_i\} \cup \{P_i'\}$. The new path $P_i'$ is in $R_i$ so it does not cross with any other path in $\mathcal{P}'$; in particular, $P_i'$ does not conflict with any other path in $\mathcal{P}'$, so $\mathcal{P}'$ is a solution. We can keep doing this exchange until we get a solution where every path is the outermost shortest path.
\end{proof}

So the algorithm is to find all outermost shortest paths. If the outermost paths conflict, then there is no solution.
Computing the outermost paths explicitly takes $O(n)$ time per path and thus $O(n^2)$ time~\cite{HKRS97}. Alternately, Klein's algorithm computes an implicit representation of the paths in $O(n \log n)$ time~\cite{K05}. 

\section{General single-face case for fixed $k$ and non-crossing pairs}\label{S:single-face}
% Second, if there is some index $j$ such that $(s_i,t_i)$ is right-to-left when $i < j$ and $i$ is otherwise left-to-right (or if $(s_i,t_i)$ is left-to-right when $i < j$ and $i$ is otherwise right-to-left) then we can find the ``leftmost" shortest path from $s_i$ to $t_i$ for all $i$. This also takes $O(n \log \log n)$ time using two applications of Steiger's algorithm {\color{red} cite}.
% In the appendix, we describe our algorithm for the second case in more detail and prove its correctness.

In this section we describe an algorithm to solve the ideal orientation problem in planar graphs where all terminals are on a single face, the number of terminal pairs is fixed, and none of the terminal pairs cross. (Recall that two terminal pairs $(s_i, t_i)$ and $(s_j, t_j)$ cross if the four terminals are on a common face and their cyclic order on that face is $s_i, s_j, t_i, t_j$.)
For simplicity, call such instances one-face non-crossing instances. We will first show that the number of crossings between the paths in a solution is a function bounded only by $k$. This allows us to guess all crossing points and then reduce the problem to the partially non-crossing edge-disjoint paths problem (PNEPP). 
The algorithm is inspired by a result of B\'{e}rczi and Kobayashi~\cite{BK17}.

We saw in the previous section that in the serial instances of the ideal orientation problem we can assume that the paths in the solution are non-crossing. This is unfortunately not true for general one-face non-crossing instances. See Figure~\ref{F:eww}a for an example. On the other hand, we can prove that the number of crossings is small when $k$ is small. Specifically, we have the following lemma:

\begin{figure}
    \centering
    \begin{tabular}{cr@{\qquad}cr@{\qquad}cr}
    	\includegraphics[scale = 0.15]{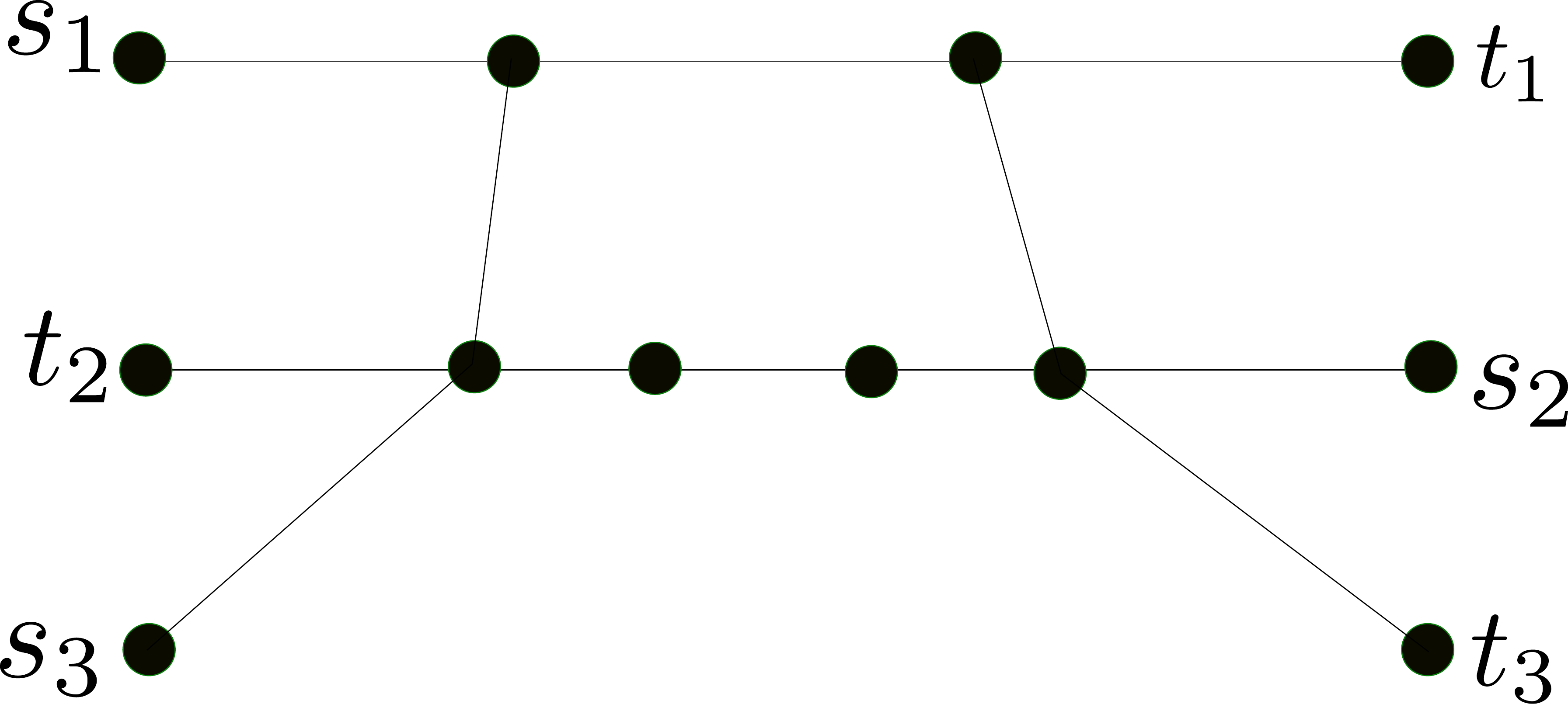} & \hspace{-0.15in} (a) &
    	\includegraphics[scale = 0.15]{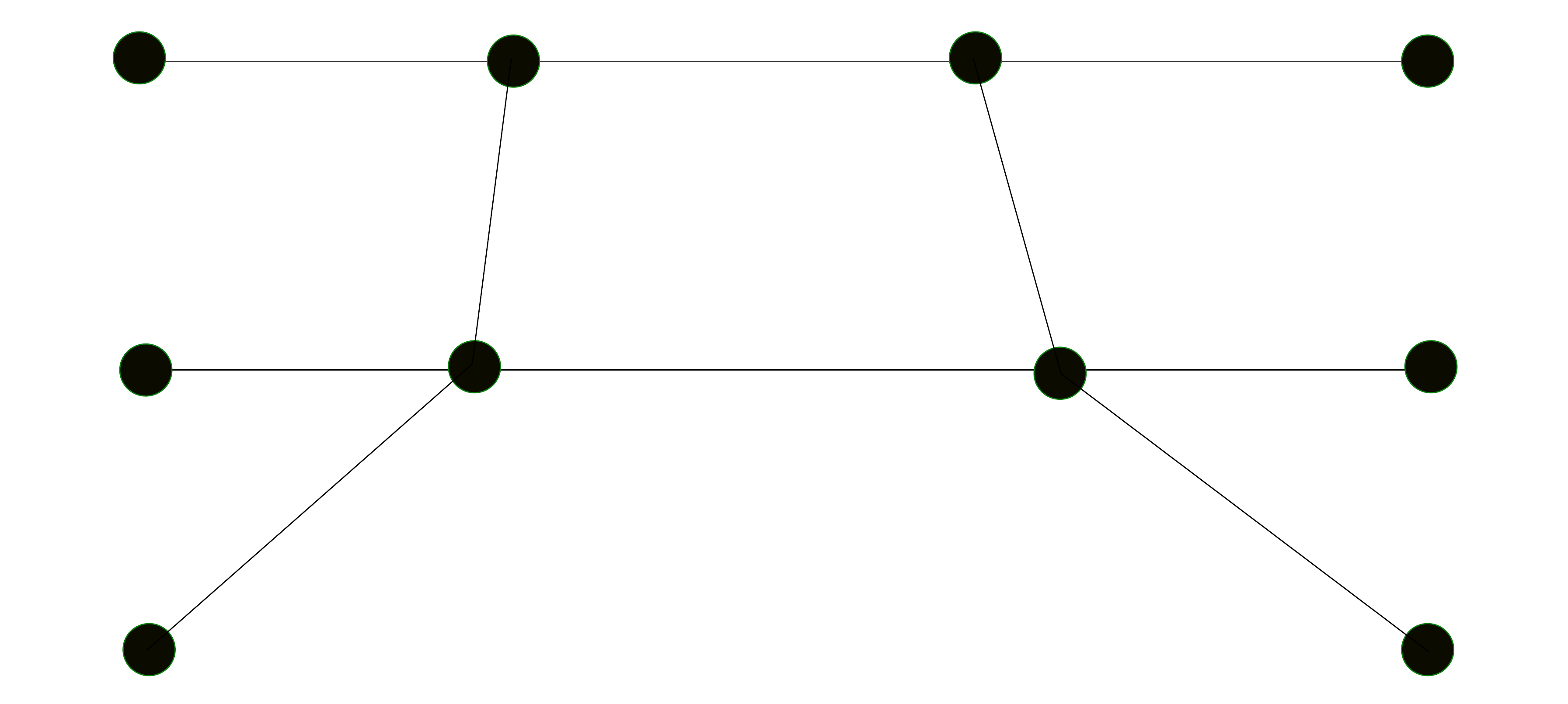} & \hspace{-0.25in} (b)
    	&\\
    	\includegraphics[scale = 0.2]{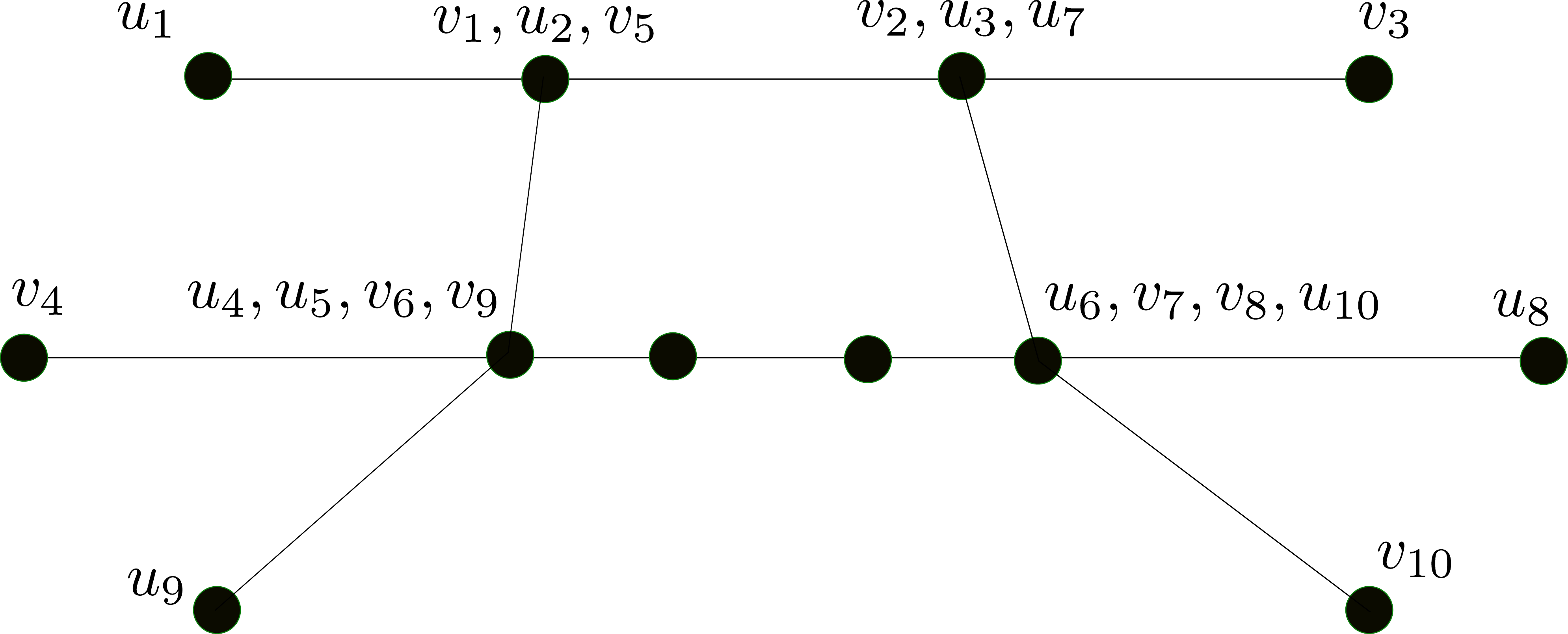} & \hspace{-0.25in} (c)
    \end{tabular}
    	
    \caption{(a) An instance of the ideal orientation problem where the unique solution has the path from $s_2$ to $t_2$ crossing the path from $s_3$ to $t_3$. All edges have unit weight. (b) overlay graph corresponding to the unique solution (c) instance of PNEPP corresponding to the overlay graph. Here $S = \{\{i,j\} | i \in \{1,2,3,5,7,9,10\}, j \in \{4,6,8\}\}$}
    \label{F:eww}
\end{figure}

\begin{lemma}\label{L:few-crossings}
	Suppose $G$ is a one-face non-crossing instance of the ideal orientation problem with terminal pairs $(s_1, t_1)$, $\dots$, $(s_k, t_k)$. If a solution exists, then a solution $\{P_1, \dots, P_k\}$ exists in which for all $i$ and $j$, path $P_i$ crosses $P_j$ a total of $O(k)$ times.
	%, and $P_i \cap P_j$ consists of at most two connected components.
\end{lemma}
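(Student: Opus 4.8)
The plan is to bound the number of crossings between any pair of paths $P_i$ and $P_j$ in a solution by iteratively reducing the crossings whenever there are ``too many.'' I would start with a solution $\mathcal{P} = \{P_1, \dots, P_k\}$, and using Lemmas~\ref{L:uncross-codirectional} and~\ref{L:connected-intersection}, I may assume parallel paths do not cross and that parallel paths intersect in connected subpaths; in particular only \emph{opposite} pairs can cross. Fix an opposite pair $(P_i, P_j)$ and suppose they cross many times, say at points $x_1, x_2, \dots$ in order along $P_i$. The key observation (analogous to the exchange argument in Lemma~\ref{L:connected-intersection}) is that between two consecutive crossings, the subpaths $P_i[x_m, x_{m+1}]$ and $P_j[x_m, x_{m+1}]$ bound a region, and since both paths are shortest paths (and non-conflicting), these two subpaths have \emph{equal length}; hence we may swap one for the other. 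Such a swap removes the two crossings at $x_m$ and $x_{m+1}$ from the pair $(P_i, P_j)$, but may create new crossings with \emph{other} paths $P_\ell$ that happen to pass through the enclosed region. The heart of the argument is to show that a swap can always be chosen so that the total number of crossings strictly decreases, which forces termination and yields the $O(k)$ bound.

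To make the counting work, I would set up a global potential function, e.g. the total number of crossings summed over all pairs, or the number of faces into which the union $\bigcup_i P_i$ cuts the disk. As in Lemma~\ref{L:connected-intersection}, the right move is to compare the two candidate swaps: exchanging $P_i[x_m,x_{m+1}]$ onto the paths carrying the $P_j$-segment, versus exchanging $P_j[x_m,x_{m+1}]$ onto the paths carrying the $P_i$-segment, and always perform the one that does not increase (and on a suitable potential, decreases) the total crossing count. The non-crossing property of parallel paths and the connectedness of parallel intersections should guarantee that the region enclosed by $P_i[x_m,x_{m+1}] \cup P_j[x_m,x_{m+1}]$ is ``clean'' enough — crossed only by opposite paths, each of which enters and leaves — so the bookkeeping is controlled. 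Once we know that between any two paths all crossings can be removed down to a bounded number, the final bound follows: with $k$ pairs and the swaps confined to opposite pairs, the per-pair crossing count that survives is $O(k)$, since each irreducible crossing of $(P_i,P_j)$ can be charged to a third path $P_\ell$ forced to intertwine there.

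The main obstacle I anticipate is the interaction between swaps: fixing the crossings of one pair $(P_i, P_j)$ may disturb another pair, so a naive induction on a single pair's crossing count does not close. The resolution should be a carefully chosen global potential — most likely the number of regions the interior of $G$ is subdivided into by all solution paths, mirroring the termination argument at the end of Lemma~\ref{L:connected-intersection} — together with a proof that the ``cheaper of the two swaps'' never increases this potential and strictly decreases it whenever a pair still has more than a constant (in terms of $k$) number of crossings. A secondary subtlety is handling crossings that coincide with terminals or with shared subpaths of parallel paths; here I would invoke the genericity/perturbation conventions set up in the preliminaries (distinct terminals, paths avoiding $\partial G$) and the connected-intersection normalization so that crossing points are isolated and the exchange operation is well-defined.
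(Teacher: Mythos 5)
Your overall strategy---normalize with Lemmas~\ref{L:uncross-codirectional} and~\ref{L:connected-intersection}, take a crossing-minimal solution, and remove excess crossings of an opposite pair by exchanging along the bigons they bound---has the same skeleton as the paper's proof, but the central exchange step as you describe it does not work, and the counting that actually yields $O(k)$ is not supplied. For an \emph{opposite} pair $P_i, P_j$ you cannot simply ``swap one subpath for the other'': if you reroute $P_i$ along the reverse of $P_j[x_{m+1},x_m]$, the new path traverses those edges in the direction opposite to the one $P_j$ still uses, so it conflicts with $P_j$ and the modified family is no longer a solution (equal length is not the issue; directions are). The paper's exchange avoids this by reversing \emph{both} bounding segments simultaneously---every path containing $A=P_i[x_p,x_{p+1}]$ is rerouted along $rev(B)$, and every path containing $B=P_j[x_{p+1},x_p]$ along $rev(A)$---and by additionally rerouting every path that crosses into the bigon so that it runs along the bigon's boundary. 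Even then, the exchange is only guaranteed to be conflict-free and strictly crossing-decreasing when the bigon is not \emph{partially overlapped} by any path (no path shares edges with $A$ or $B$ without containing the slightly extended segments $P_i[pred_i(x_p),succ_i(x_{p+1})]$, resp.\ $P_j[pred_j(x_{p+1}),succ_j(x_p)]$); establishing that no crossings are created requires the casework done in the appendix, and your proposal leaves precisely this point (``a swap can always be chosen so that the total number of crossings strictly decreases'') as an unproven hope.

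The second gap is in the count itself. Your charging claim---each irreducible crossing of $(P_i,P_j)$ is charged to a third path forced to intertwine there---is only half of what is needed; you must also bound how many times a single path can be charged. In the paper this is the statement that each path partially overlaps at most two bigons, which follows from the normalization that parallel paths have connected intersection (a path parallel to $P_i$ meeting two bigons must contain the whole stretch of $P_i$ between them), while paths opposite to both $P_i$ and $P_j$ are edge-disjoint from them by Lemma~\ref{L:opposite} and overlap no bigon at all. Combined with the complementary fact that in a crossing-minimal (normalized) solution every bigon is overlapped by some path, this pigeonhole gives at most $2(k-1)$ bigons and hence at most $2k$ crossings for each pair. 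Without both halves, your potential-function argument shows only that the exchange process terminates at some crossing-minimal configuration; it does not bound the number of surviving crossings per pair by any function of $k$.
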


We will prove this lemma at the end of the section.
For now, we will just assume the lemma is true and describe the algorithm for one-face non-crossing instances.
Let $G$ be a one-face non-crossing instance of the ideal orientation problem with terminal pairs $(s_1, t_1), \dots, (s_k, t_k)$. 
Suppose $\mathcal{P}  = \{P_1, \dots, P_k$\} is a solution with the fewest crossings. 
Our algorithm first guesses all the crossing points; by Lemma~\ref{L:few-crossings}, there are $O(k^3)$ such points.

Next, inspired by Erickson and Nayyeri \cite{EN11}, we define the overlay graph $H_{\mathcal{P}}$, whose vertices are the crossing points and terminals, and whose edges are the subwalks between consecutive crossing points and terminals. The graph $H_\mathcal{P}$ has a natural embedding. 
Our algorithm guesses the overlay graph. Given the set of $O(k^3)$ crossing points, there are $2^{O(k^6)}$ possible such graphs. See Figure~\ref{F:eww}b.
%Replace each crossing point with a cycle of new terminals so that the ``homotopy'' is correct.

The $O(k^3)$ crossing points split up $P_1, \dots, P_k$ into pairwise non-crossing directed subpaths. By Lemma~\ref{L:opposite}, subpaths of opposite paths are edge-disjoint (there is no analogous restriction for parallel paths). Furthermore, every directed subpath is a shortest path between its endpoints. Let $\{p_1, \dots, p_\beta\}$ be the set of these directed subpaths.

To compute these subpaths, we construct an instance of PNEPP as follows. The directed graph $H$ is just $G$ where every undirected edge $\{u,v\}$ is replaced with two arcs $uv$ and $vu$. The terminal pairs in $G$ are no longer terminal pairs in $H$. For each subpath $p_i$ in $\mathcal{P}$, we construct a pair of terminals $u_i, v_i$ that are just the endpoints of $p_i$. Thus the constructed terminals will not necessarily be distinct. The set $S$ consists of all pairs $\{i,j\}$ such that $p_i$ and $p_j$ are subpaths of opposite paths in $G$. For all $i \in \{1, \dots, \beta\}$, the subgraph $H_i$ is the union of all shortest paths from $u_i$ to $v_i$ in $H$. Each $H_i$ is a directed acyclic graph.

We then solve the instance $H$ of PNEPP to find subpaths $Q_i$ connecting the $u_i$ to the $v_i$, and we check if the concatenations of the appropriate found subpaths are indeed shortest paths connecting corresponding terminals in $G$. (In Figure~\ref{F:eww}c, we would need to check, for example, that the concatenation of $Q_1$, $Q_2$, and $Q_3$ is indeed a shortest path from $s_1 = u_1$ to $t_1 = v_3$.) If the concatenations are indeed shortest paths in $G$, then they form a solution to the instance $G$ of the ideal orientation problem. Clearly, if we take any solution of $G$, the subpaths formed by the crossing points are non-crossing and non-conflicting.  The following lemma implies that the algorithm is correct.

\begin{lemma}
	Let $G$ be a one-face non-crossing instance of the ideal orientation problem. The following statements are equivalent
	\begin{itemize}
	    \item There exists a solution to $G$ with crossing points $v_1, \dots, v_h$ and overlay graph $H_{\mathcal{P}}$.
	    \item There exist crossing points $v_1, \dots, v_h$, an overlay graph whose vertices are the crossing points and the terminals of $G$, and a set of shortest non-crossing partially edge-disjoint subpaths connecting the crossing points in accordance with the overlay graph, such that the paths formed from concatenating the appropriate subpaths are shortest paths. (Here when we say that two sub-paths are partially edge-disjoint we mean that they are edge-disjoint if they correspond to subpaths of opposite paths in the overlay graph.)
	\end{itemize}
\end{lemma}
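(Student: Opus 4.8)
The plan is to prove the two implications separately; both are essentially unpackings of the definitions, with Lemma~\ref{L:opposite} supplying the content of one direction and Lemma~\ref{L:codirectional} the other.

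\emph{Forward direction.} Start from a solution $\mathcal{P}=\{P_1,\dots,P_k\}$ with crossing points $v_1,\dots,v_h$ and overlay graph $H_{\mathcal{P}}$, and let $p_1,\dots,p_\beta$ be the subpaths obtained by cutting each $P_i$ at the crossing points that lie on it. By construction the $p_a$ are exactly the edges of $H_{\mathcal{P}}$, and concatenating them as prescribed by $H_{\mathcal{P}}$ reconstructs $P_1,\dots,P_k$, which are shortest paths by hypothesis. I would then verify the three conditions of the second statement. First, each $p_a$ is a shortest path between its endpoints, since a subpath of a shortest path is again a shortest path (edge lengths are positive). Second, the $p_a$ are pairwise non-crossing: two subpaths of the same $P_i$ cannot cross because $P_i$ is simple, and a crossing between subpaths of distinct $P_i$ and $P_j$ would be a crossing point of $P_i$ and $P_j$, hence one of the points at which both were cut, hence an endpoint rather than an interior point of each subpath---so no such crossing survives the cutting. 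Third, if $p_a$ lies on $P_i$ and $p_b$ lies on $P_j$ with terminal pairs $i$ and $j$ opposite, then $P_i$ and $P_j$ are opposite non-conflicting shortest paths, so Lemma~\ref{L:opposite} makes them edge-disjoint, whence $p_a$ and $p_b$ are edge-disjoint. Thus the $p_a$ witness the second statement.

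\emph{Reverse direction.} Given crossing points, an overlay graph, and subpaths $Q_1,\dots,Q_\beta$ with the stated properties, let $P_1,\dots,P_k$ be the concatenations dictated by the overlay graph; by hypothesis each $P_i$ is a shortest path from $s_i$ to $t_i$, so it remains only to show that $P_1,\dots,P_k$ are pairwise non-conflicting. Suppose some $P_i$ uses an arc $uv$ while $P_j$ uses $vu$. If $i=j$ this contradicts $P_i$ being a simple shortest path. If $i\neq j$, then because all terminals lie on one face and no pair crosses, the terminal pairs $i$ and $j$ are either parallel or opposite. If they are parallel, $P_i$ and $P_j$ are parallel shortest paths, so Lemma~\ref{L:codirectional} forbids them from conflicting---a contradiction. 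If they are opposite, let $Q_a$ be the subpath of $P_i$ containing $uv$ and $Q_b$ the subpath of $P_j$ containing $vu$; these are distinct (a single subpath, being a shortest path, cannot contain both $uv$ and $vu$), and since they are subpaths of opposite paths in the overlay graph, the partial-edge-disjointness requirement forces $Q_a$ and $Q_b$ to be edge-disjoint, contradicting that they share the edge $\{u,v\}$. Hence $\{P_1,\dots,P_k\}$ is a solution.

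I do not expect a serious obstacle. The points needing care are, first, in the reverse direction, the observation that conflicts between \emph{parallel} paths are ruled out not by the partial-edge-disjointness requirement (which constrains only subpaths of opposite paths) but by the hypothesis that the reassembled paths are shortest, through Lemma~\ref{L:codirectional}; and second, the bookkeeping that the overlay graph, together with the planar embedding it inherits, records both how the subpaths concatenate into $P_1,\dots,P_k$ and which subpaths belong to opposite terminal pairs. A minor subtlety is that one subpath may be shared among several mutually parallel $P_i$ (parallel shortest paths can overlap); but since opposite shortest paths are edge-disjoint by Lemma~\ref{L:opposite}, no subpath is shared between opposite paths, and the arguments above go through unchanged.
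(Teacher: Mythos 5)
Your proof is correct and follows essentially the same route as the paper's: the forward direction cuts the solution at the crossing points and invokes Lemma~\ref{L:opposite} to get edge-disjointness of opposite paths, and the reverse direction handles parallel pairs via Lemma~\ref{L:codirectional} and opposite pairs via the partial edge-disjointness of the subpaths. Your added remarks (subpaths of shortest paths are shortest, and the disjointness requirement only constrains opposite pairs) just make explicit what the paper leaves implicit.
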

\begin{proof}
    $\Rightarrow:$
    Let $\mathcal{P} = \{P_1, \dots, P_k\}$ be a solution to the instance $G$ of the ideal orientation problem.
 	Split the paths in $\mathcal{P}$ into subpaths using the crossing points. The subpaths are non-crossing by construction. We just need to show that the subpaths are partially disjoint. In fact we will show that the paths in $\mathcal{P}$ are partially disjoint. If $P_i$ and $P_j$ are parallel, then there is nothing to prove. If $P_i$ and $P_j$ are opposite, then they are edge-disjoint by Lemma~\ref{L:opposite}. 

    $\Leftarrow:$
    Concatenate the subpaths and assume the concatenations are shortest paths. Since the subpaths are non-crossing, 
    We just need to show that they are non-conflicting. 
    Suppose $P_1, \dots, P_k$ are the resulting paths after concatenation.
    If $P_i$ and $P_j$ are parallel, then by Lemma~\ref{L:codirectional} they are non-conflicting. 
    If $P_i$ and $P_j$ are opposite, then by construction each subpath of $P_i$ is edge-disjoint from each subpath of $P_j$. This means that $P_i$ and $P_j$ are edge-disjoint, so they don't conflict.
\end{proof}

To summarize, we first guess the crossing points, then guess an overlay graph on these crossing points and the original terminals, and finally use the overlay graph to construct and solve an instance of PNEPP. The number of possible sets of crossing points and overlay graphs depends only on $k$, while PNEPP can be solved in polynomial time for fixed $k$ (equivalently, fixed $\beta$) by reducing to PVPP and using Schrijver's algorithm. Furthermore, constructing the instance of PNEPP takes polynomial time. Thus, our algorithm runs in polynomial time for fixed $k$.

\subsection{The crossing bound}\label{SS:crossing-bound}

%For all $i$, a path $P_i$ from $s_i$ to $t_i$ splits the interior of the graph $G$ into two regions; let $R_i$ be the region containing $s_1$ and $t_1$ (or the region not containing $s_k$ and $t_k$ if $i=1$). A path $Q$ is {\em below} $P_i$ if it lies within $R_i$.
%When the terminal pairs are allowed to be in any combination of left-to-right or right-to-left,

    In this subsection we prove Lemma~\ref{L:few-crossings}, though some details have been pushed to the appendix.
	Suppose $\mathcal{P} = \{P_1, \dots, P_k\}$ is a solution to a one-face non-crossing instance $G$ of the ideal orientation problem, where $P_i$ connects $s_i$ to $t_i$.
    By Lemmas~\ref{L:uncross-codirectional} and~\ref{L:connected-intersection}we may assume that for every pair of parallel paths in $\mathcal{P}$, the paths do not cross and their intersection consists of at most one subpath; of all such solutions, we may assume without loss of generality that $\mathcal{P}$ is the solution with the fewest crossings.
    It suffices to show that for all $i,j$, path $P_i$ crosses path $P_j$ at most $2k$ times.
    
    %Suppose for the sake of argument that there exist paths $P_i$ and $P_j$ that cross each other more than $4k$ times. 
    Let $h$ be the number of times $P_i$ and $P_j$ cross; we want to show that $h \leq 2k$. 
    Since terminal pairs $(s_i, t_i)$ and $(s_j, t_j)$ are non-crossing, $h$ is even.
    Parallel paths in $\mathcal{P}$ do not cross, so assume that $P_i$ and $P_j$ are opposite. 
    %without loss of generality, suppose $P_i$ is left-to-right, $P_j$ is right-to-left, and $i < j$. 
    The path $P_i$ divides the interior of $G$ into two regions; let $\rho_i$ be the region containing $s_j$ and $t_j$, and define a path to be {\em above} $P_i$ if it lies in $\rho_i$. Likewise, the path $P_j$ divides the interior of $G$ into two regions; let $\rho_j$ be the region containing $s_i$ and $t_i$, and define a path to be {\em below} $P_j$ if it lies in $\rho_j$.
    Let $x_1, \dots, x_{h}$ be the vertices at which $P_i$ and $P_j$ cross, in order along $P_i$; by Lemma~\ref{L:opposite}, this is exactly the reverse of their order along $P_j$. Split the region $\rho_i \cap \rho_j$ into $h-1$ pairwise internally disjoint {\em bigons}, denoted by $B_1, \dots, B_{h-1}$; the bigon $B_p$ consists of the region bounded by the two subpaths $P_i[x_p,x_{p+1}]$ and $P_j[x_{p+1},x_p]$. %The two subpaths $P_i[x_p,x_{p+1}]$ and $P_j[x_{p+1},x_p]$ are the {\em bounding subpaths} of $B_p$. 
    Note that under our definition, $P_i[x_p,x_{p+1}]$ and $P_j[x_{p+1},x_p]$ may touch but they may not cross. A bigon $B_p$ is {\em odd} if $p$ is odd and {\em even} if $p$ is even.
    Note that any odd bigon is below $P_i$ and above $P_j$, and any even bigon is below $P_j$ and above $P_i$.
    See Figure~\ref{F:bigons}a.
    
        \begin{figure}
    \centering
    \begin{tabular}{cr@{\qquad}cr@{\qquad}cr}
    	\includegraphics[scale = 0.35]{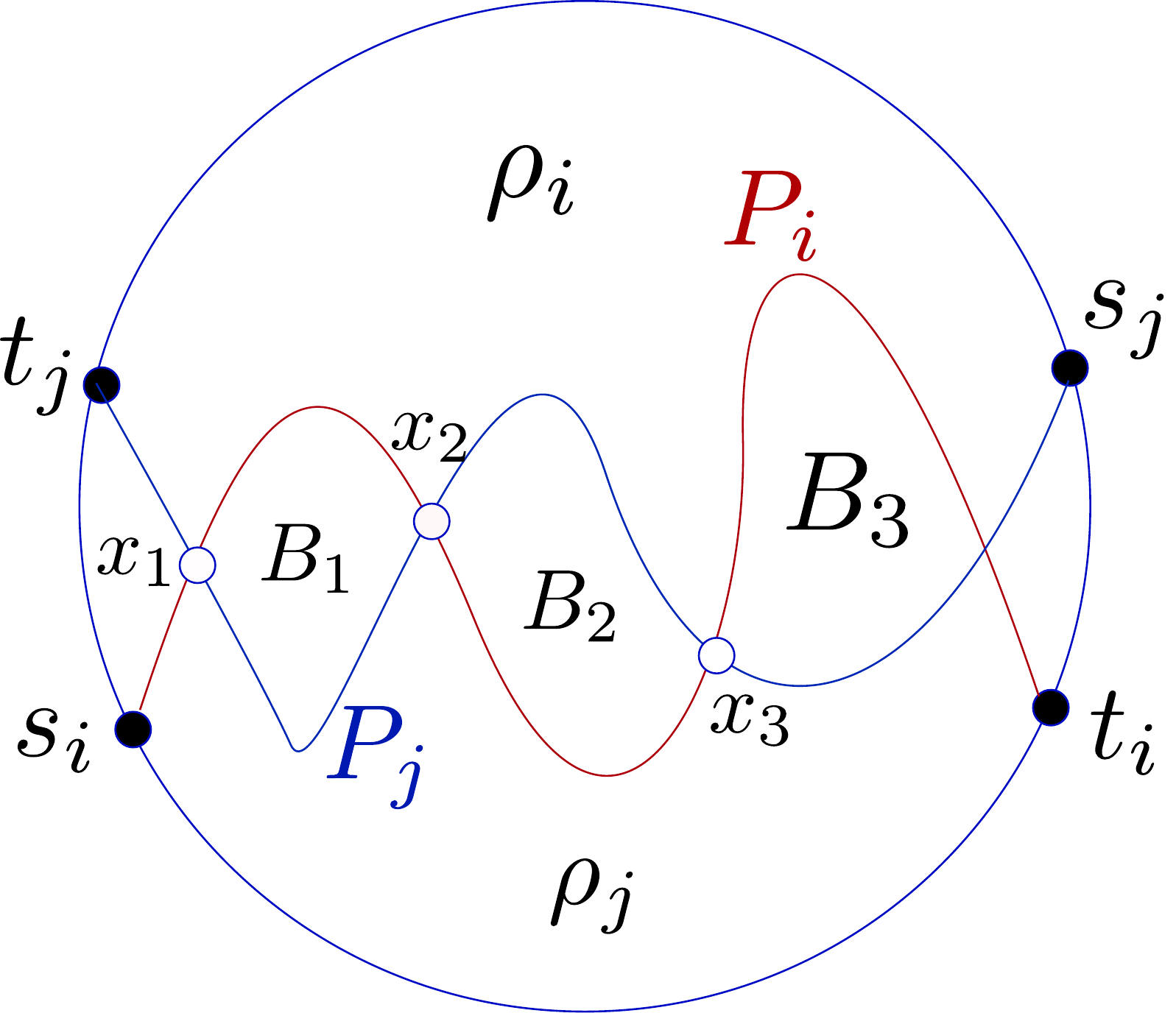} & \hspace{-0.25in} (a)
    	&
    	\includegraphics[scale = 0.35]{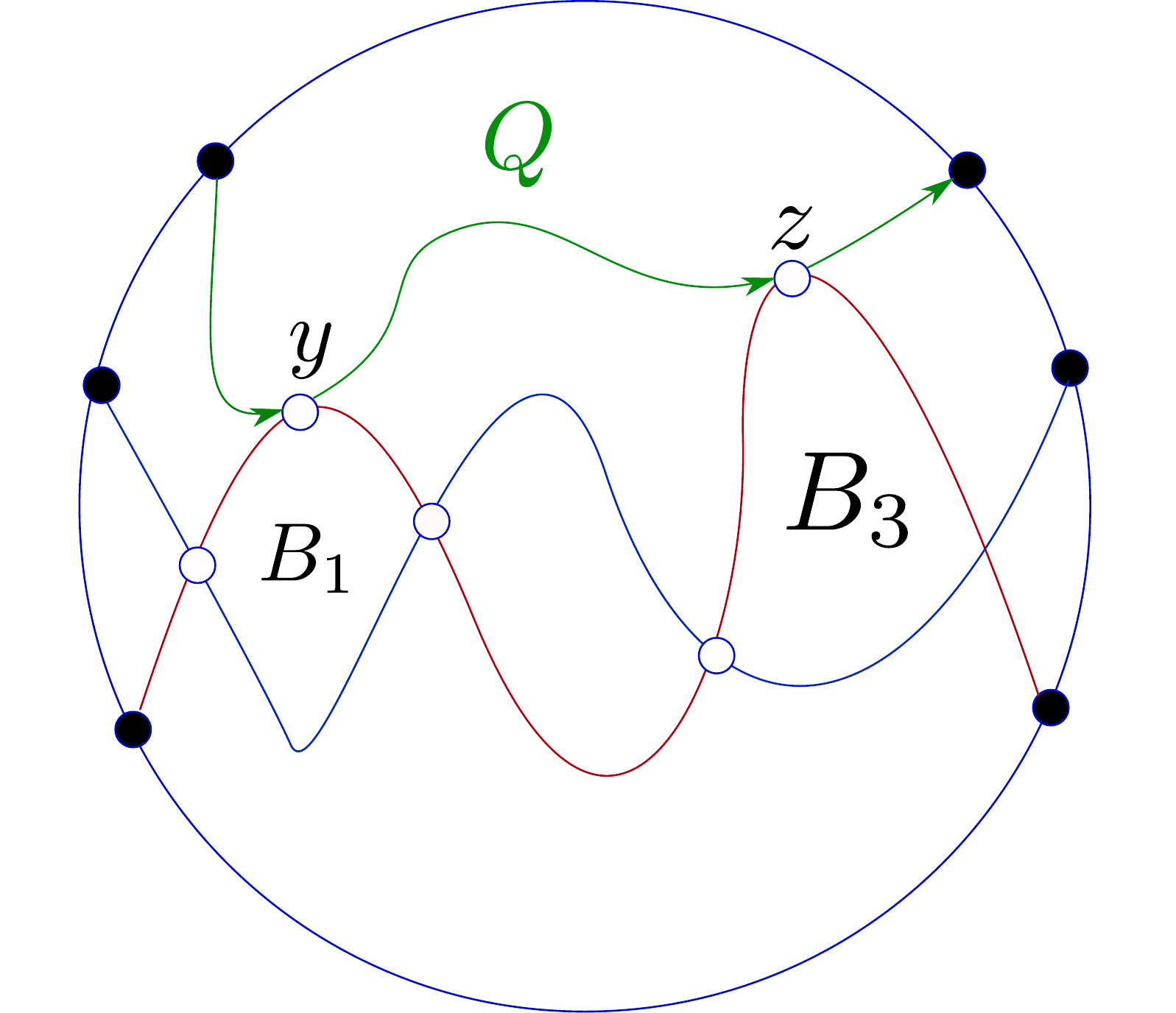} & \hspace{-0.25in} (b)
    \end{tabular}
    	
    \caption{$P_i$ is red and $P_j$ is blue. (a) Three bigons formed by $P_i$ and $P_j$. (b) An impossible configuration in the proof of Lemma~\ref{L:fact-one}. Here $Q$ is the green path, $p=1$, and $q = 3$}
    \label{F:bigons}
    \end{figure}
    
    For any vertex $x$, let $pred_i(x)$ denote the predecessor of $x$ on $P_i$, $succ_i(x)$ denote the successor of $x$ on $P_i$, $pred_j(x)$ denote the predecessor of $x$ on $P_j$, and $succ_j(x)$ denote the successor of $x$ on $P_j$.
    Suppose a path $Q$ is parallel to $P_i$. Path $Q$ and a bigon $B_p$ {\em partially overlap} each other if $Q$ shares edges with $P_i[x_p,x_{p+1}]$ and $Q$ does not contain $P_i[pred_i(x_p),succ_i(x_{p+1})]$.
    Likewise, suppose a path $Q'$ is parallel to $P_j$. Path $Q'$ and a bigon $B_p$ {\em partially overlap} each other if $Q$ shares edges with $P_j[x_{p+1},x_p]$ but $Q$ does not contain $P_j[pred_j(x_{p+1}),succ_j(x_p)]$. 
    
    For the rest of this subsection we say ``overlap'' when we mean ``partially overlap.''
    Lemma~\ref{L:few-crossings} follows if we can prove the following two lemmas:  
    \begin{lemma}\label{L:fact-one}
        Each path in $\mathcal{P}$ overlaps at most two different bigons. 
    \end{lemma}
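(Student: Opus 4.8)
The plan is to fix a single path $Q \in \mathcal{P}$ and bound the number of bigons it overlaps, by reducing everything to a one-dimensional statement about subpaths of $P_i$ (or $P_j$). First I would dispose of degenerate cases. The relation ``$Q$ overlaps $B_p$'' is defined only when $Q \sim P_i$ or $Q \sim P_j$, and $Q$ cannot be parallel to both since $\sim$ is an equivalence relation while $P_i$ and $P_j$ are opposite; so if $Q$ is parallel to neither, it overlaps no bigon. If $Q = P_i$ (resp.\ $Q = P_j$) then $Q$ contains $P_i[pred_i(x_p), succ_i(x_{p+1})]$ (resp.\ the analogous subpath of $P_j$) for every $p$, so again $Q$ overlaps nothing. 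Hence it suffices to treat $Q \sim P_i$ with $Q \neq P_i$; the case $Q \sim P_j$ is symmetric, using that by Lemma~\ref{L:opposite} the $x_p$ occur in the reverse order along $P_j$.

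By our standing assumptions on $\mathcal{P}$ (obtained from Lemmas~\ref{L:uncross-codirectional} and~\ref{L:connected-intersection}), parallel paths do not cross and meet in a single subpath, and by Lemma~\ref{L:codirectional} parallel shortest paths traverse their common edges in the same direction. Hence $Q \cap P_i$ is a single subpath $P_i[u,v]$ with $u \prec_{P_i} v$, traversed identically by $Q$ and $P_i$; if $Q$ and $P_i$ share no edge then $Q$ overlaps no bigon, so assume otherwise. Since $u$, $v$ and $x_1 \prec_{P_i} \cdots \prec_{P_i} x_h$ all lie on $P_i$, the defining conditions for overlap become linear: $Q$ shares an edge with $P_i[x_p, x_{p+1}]$ iff the subpaths $P_i[u,v]$ and $P_i[x_p, x_{p+1}]$ share an edge, and $Q$ contains $P_i[pred_i(x_p), succ_i(x_{p+1})]$ iff $P_i[u,v] \supseteq P_i[pred_i(x_p), succ_i(x_{p+1})]$.

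The crux is the claim that \emph{if $Q$ overlaps $B_p$ and $B_r$ with $p < r$, then $Q$ overlaps no $B_q$ with $p < q < r$}. To prove it: overlapping $B_p$ puts some edge of $P_i[u,v]$ inside $P_i[x_p, x_{p+1}]$, so $u \prec_{P_i} x_{p+1} \preceq_{P_i} x_q$ (as $p+1 \le q$) and hence $u \preceq_{P_i} pred_i(x_q)$; symmetrically, overlapping $B_r$ gives $v \succ_{P_i} x_r \succeq_{P_i} x_{q+1}$ (as $r \ge q+1$) and hence $succ_i(x_{q+1}) \preceq_{P_i} v$. Therefore $P_i[pred_i(x_q), succ_i(x_{q+1})] \subseteq P_i[u,v] = Q \cap P_i$, and since $Q$ and $P_i$ agree in direction on this subpath, $Q$ contains $P_i[pred_i(x_q), succ_i(x_{q+1})]$ — contradicting that $Q$ overlaps $B_q$. (The vertices $pred_i(x_q)$ and $succ_i(x_{q+1})$ exist because every crossing point is an interior vertex of $P_i$.) Finally, taking $p$ and $r$ to be the least and greatest indices of bigons that $Q$ overlaps, the claim rules out every bigon strictly between them, so $Q$ overlaps only $B_p$ and $B_r$ (just $B_p$ if $p=r$): at most two.

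I expect the only real difficulty to be bookkeeping: pulling the single-subpath and same-direction facts out of the structural lemmas exactly where needed, and keeping the strict versus non-strict inequalities straight in the linear reduction — in particular the boundary behaviour when $u$ precedes $x_1$ or $v$ follows $x_h$, and the trivial case where $Q$ meets $P_i$ in a single vertex. The one conceptually essential input is Lemma~\ref{L:connected-intersection}: without it $Q \cap P_i$ could split into several subpaths separated by gaps, and then a single path could overlap arbitrarily many bigons.
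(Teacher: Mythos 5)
Your proof is correct and takes essentially the same route as the paper: handle the parallel-to-$P_i$ (and symmetric parallel-to-$P_j$) case via the single-subpath intersection guaranteed by Lemma~\ref{L:connected-intersection}, conclude that any bigon strictly between the first and last overlapped ones has its (extended) boundary segment contained in $Q$, and dispose of the opposite/degenerate cases separately. Your bookkeeping with $pred_i$ and $succ_i$ is in fact slightly more careful than the paper's, which only notes containment of the segments $P_i[x_{p+1},x_{p+2}],\dots,P_i[x_{q-1},x_q]$.
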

    \begin{lemma}~\label{L:fact-two}
        Each bigon overlaps some path (different bigons could overlap different paths).
    \end{lemma}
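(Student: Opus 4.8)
The plan is to show that if some bigon $B_p$ overlaps no path in $\mathcal{P}$, then we can construct a solution with fewer crossings, contradicting the minimality of $\mathcal{P}$. The idea is a "swap" across the bigon $B_p$: we want to replace the subpath $P_i[x_p,x_{p+1}]$ by $P_j[x_{p+1},x_p]$ (read in the appropriate direction) in all paths that currently use $P_i[x_p,x_{p+1}]$, and simultaneously replace $P_j[x_{p+1},x_p]$ by $P_i[x_p,x_{p+1}]$ in all paths that use $P_j[x_{p+1},x_p]$. Because $P_i$ and $P_j$ are shortest paths and the two boundary arcs of the bigon have equal length (Lemma~\ref{L:opposite} applied to $P_i$ and $P_j$ at $x_p$ and $x_{p+1}$), every resulting path is still a shortest path between its terminals; and because the interior of $B_p$ is disjoint from every other path, no new conflicts or crossings are introduced in the interior. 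The key gain is that after the swap, $P_i$ and $P_j$ no longer cross at $x_p$ or $x_{p+1}$ — the two crossings are eliminated — so the number of crossings strictly decreases.

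First I would carefully set up which paths need to be modified. Let $\mathcal{A}$ be the set of paths in $\mathcal{P}$ that share at least one edge with $P_i[x_p,x_{p+1}]$, and let $\mathcal{B}$ be the set of paths that share at least one edge with $P_j[x_{p+1},x_p]$. Since $B_p$ overlaps no path, by the definition of "overlap" every path $Q \in \mathcal{A}$ that touches $P_i[x_p,x_{p+1}]$ must actually \emph{contain} $P_i[pred_i(x_p), succ_i(x_{p+1})]$ — in particular it contains the whole subpath $P_i[x_p,x_{p+1}]$ and enters and leaves it exactly at $pred_i(x_p)$ and $succ_i(x_{p+1})$, which are the unique neighbors of $x_p$ and $x_{p+1}$ on $P_i$. (Here I would need the fact, which follows from planarity and the bigon being empty of other paths, that any path sharing an edge of $P_i[x_p,x_{p+1}]$ must traverse a contiguous stretch of it, and — using that $P_i$ and $P_j$ cross transversally at $x_p, x_{p+1}$ — cannot "turn onto" $P_j$ there.) Symmetrically for $\mathcal{B}$. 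So all of $\mathcal{A}$ uses $P_i[x_p,x_{p+1}]$ as a block and all of $\mathcal{B}$ uses $P_j[x_{p+1},x_p]$ as a block.

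Next I would perform the exchange: for each $Q \in \mathcal{A}$, replace the block $P_i[x_p,x_{p+1}]$ inside $Q$ by the reverse-read copy of $P_j[x_{p+1},x_p]$ (i.e. $rev(P_j[x_{p+1},x_p]) = P_j$-arcs traversed from $x_p$ to $x_{p+1}$); for each $Q' \in \mathcal{B}$, replace $P_j[x_{p+1},x_p]$ by $P_i[x_p,x_{p+1}]$ traversed from $x_{p+1}$ to $x_p$. Call the result $\mathcal{P}'$. I would then check the three things a solution needs: (i) each modified path still connects its terminals and has the same length, since $\ell(P_i[x_p,x_{p+1}]) = \ell(P_j[x_p,x_{p+1}])$ by Lemma~\ref{L:opposite}, hence each path is still a shortest path; (ii) $\mathcal{P}'$ is non-conflicting and has no more crossings: outside $B_p$ nothing changed; on the bigon boundary, the paths in $\mathcal{A}$ now run along $P_j[x_{p+1},x_p]$ and the paths in $\mathcal{B}$ now run along $P_i[x_p,x_{p+1}]$, and since $B_p$'s interior was empty of paths, the only edges involved are the two boundary subpaths, on which all of $\mathcal{A} \cup \mathcal{B}$ now travels in the same direction as the original $P_i$ or $P_j$ did — so no conflict is created and no crossing is created inside $B_p$; (iii) crucially, $P_i$ (now rerouted along $P_j[x_{p+1},x_p]$) and $P_j$ (now rerouted along $P_i[x_p,x_{p+1}]$) share both boundary subpaths of $B_p$ instead of crossing at $x_p$ and $x_{p+1}$, so the crossing count at these two points drops to zero while no crossing elsewhere increases. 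Hence $\mathcal{P}'$ has strictly fewer crossings, contradicting minimality. I would also re-verify that parallel-path non-crossing and connected-intersection (the properties we assumed of $\mathcal{P}$ via Lemmas~\ref{L:uncross-codirectional} and~\ref{L:connected-intersection}) are not needed for the contradiction — only the total crossing count matters — so there is no circularity.

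The main obstacle I anticipate is step (ii)/(iii): making rigorous the claim that after the swap no \emph{new} crossing appears, in particular between a rerouted path in $\mathcal{A}$ and some other path $R \notin \mathcal{A}\cup\mathcal{B}$ that was crossing $P_j[x_{p+1},x_p]$ before. Such an $R$ would have to enter the bigon $B_p$ to cross $P_j[x_{p+1},x_p]$, but $B_p$'s interior is empty of paths; so $R$ can only \emph{touch} $P_j[x_{p+1},x_p]$ from the $\rho_i$-side, i.e. run alongside it without crossing — and a careful local analysis at each such touching shows the rerouted $\mathcal{A}$-paths, which also lie on the $\rho_i$-side along that subpath, do not cross $R$ either. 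Pinning down this planarity bookkeeping — exactly which side of each boundary subpath every path sits on before and after — is the delicate part, and is presumably what the excerpt means by "some details have been pushed to the appendix."
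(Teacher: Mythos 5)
Your core exchange (reverse the two boundary subpaths $A=P_i[x_p,x_{p+1}]$ and $B=P_j[x_{p+1},x_p]$ of the empty-of-overlaps bigon, for every path containing them as a block, and eliminate the crossings at $x_p$ and $x_{p+1}$) is the same as the paper's, but your argument rests on a claim that does not follow from the hypothesis: that the interior of $B_p$ is disjoint from every other path. The assumption of Lemma~\ref{L:fact-two} is only that no path \emph{partially overlaps} $B_p$, and ``overlap'' is defined solely in terms of sharing \emph{edges} with $A$ or $B$ without containing the extended block. A path parallel to $P_j$ (hence opposite to $P_i$) is merely edge-disjoint from $P_i$ by Lemma~\ref{L:opposite}; it can still cross $A$ transversally at vertices, dip into the interior of $B_p$, and exit again, all without sharing a single edge of $A$ or $B$ — and symmetrically a path parallel to $P_i$ can cross $B$ into the bigon. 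Such paths trigger none of the overlap conditions, so your statements ``the interior of $B_p$ is disjoint from every other path'' and, later, ``$R$ can only touch $P_j[x_{p+1},x_p]$ from the $\rho_i$-side'' are unjustified, and they are exactly what your verification of no-new-crossings leans on. The paper's proof handles precisely these paths: it introduces the sets $\mathcal{Q}_A$ and $\mathcal{Q}_B$ of paths that cross $A$ or $B$ and reroutes each such path along the corresponding boundary subpath (between its first and last contact) so that it no longer enters $B_p$, and only then reverses $A$ and $B$ for the block-containing paths $\mathcal{Q}_L,\mathcal{Q}_R$. Without that rerouting, reversing $A$ and $B$ can create new crossings or conflicts with the paths that enter the bigon, and the total crossing count need not strictly decrease, so the contradiction does not go through as written.

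A secondary point: the minimality you contradict is, in the paper's setup, minimality among solutions already satisfying Lemmas~\ref{L:uncross-codirectional} and~\ref{L:connected-intersection}, so you cannot simply declare those properties irrelevant to the contradiction; after the exchange you must restore the connected-intersection property without increasing the number of crossings (the paper does this by re-running the procedure of Lemma~\ref{L:connected-intersection}, which is crossing-nonincreasing). Fixing the main gap means adding the $\mathcal{Q}_A$/$\mathcal{Q}_B$ rerouting and then doing the case analysis showing no pair of modified paths gains a crossing while some pair in $\mathcal{Q}_L\times\mathcal{Q}_R$ loses two — which is the casework the paper defers to its appendix.
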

    Lemmas~\ref{L:fact-one} and~\ref{L:fact-two} together imply that there must be at most $2(k-1)$ bigons, and thus at most $2k$ crossings between $P_i$ and $P_j$. 
    
    \begin{proof}[Proof of Lemma~\ref{L:fact-one}]
        There are three cases. For the first case, suppose $Q$ is a path parallel to $P_i$ that overlaps some bigons formed by $P_i$ and $P_j$. We will show that $Q$ overlaps at most two bigons. See Figure~\ref{F:bigons}b. Let $B_p$ be the first bigon along $P_i$ that $Q$ overlaps and let $B_q$ be the last, so that $Q$ contains some vertex $y$ on $P_i[x_p,x_{p+1}]$ and some vertex $z$ on $P_i[x_q, x_{q+1}]$. We have assumed that the intersection of any two parallel paths in $\mathcal{P}$ consists of exactly one subpath. Since $P_i$ and $Q$ are parallel, this implies that $Q$ contains the subpath $P_i[y,z]$. In particular, $Q$ contains each of $P_i[x_{p+1}, x_{p+2}], \dots, P_i[x_{q-1},x_q]$, so $Q$ does not overlap any of the bigons $B_{p+1}, \dots, B_{q-1}$. It follows that $Q$ overlaps at most two bigons. 
        
        For the second case, a symmetric argument shows that if $Q$ is parallel to $P_j$ then $Q$ overlaps at most two bigons. 
        For the third case, if $Q$ is opposite to both $P_i$ and $P_j$, then by Lemma~\ref{L:opposite}, $Q$ is edge-disjoint from both $P_i$ and $P_j$, and so does not overlap any bigons.
    \end{proof} 

    \begin{proof}[Proof of Lemma~\ref{L:fact-two}]
        Suppose for the sake of argument that $B_p$ is an odd bigon that does not overlap any path. The bigon $B_p$ is below $P_i$ and above $P_j$. Specifically, it is bounded by $P_i[x_p,x_{p+1}] \cup P_j[x_{p+1}, x_p]$. 
        To lighten notation, let $A = P_i[x_p,x_{p+1}]$ and let $B = P_j[x_{p+1}, x_p]$. 
        Our goal is to reduce the number of crossings in $\mathcal{P}$ via an exchange procedure similar to those used in previous lemmas. Roughly speaking, we will do this by reversing the orientations of $A$ and $B$ and by modifying the paths that enter $B_p$ so that they no longer do so.
    
        First we describe how to reverse the orientations of $A$ and $B$.
        By assumption, all paths in $\mathcal{P}$ that use edges in $A$ must contain $A$, and all paths in $\mathcal{P}$ that use edges in $B$ must contain $B$. Let $\mathcal{Q}_L$ be the set of paths that contain $A$ and let $\mathcal{Q}_R$ be the set of paths that contain $L$. By Lemma~\ref{L:opposite}, all paths in $\mathcal{Q}_L$ are parallel to $P_i$ and all paths in $\mathcal{Q}_R$ are parallel to $P_j$. Now we simply let 
        \[
            P_l' = P_l[s_l, x_p] \circ rev(B) \circ P_l[x_{p+1}, t_l]
        \]
        for any path $P_l \in \mathcal{Q}_l$, and let 
        \[
            P_r' = P_r[s_r, x_{p+1}] \circ rev(A) \circ P_r[x_p, t_r]
        \]
        for any path $P_r \in \mathcal{Q}_R$. Let $\mathcal{Q}_L' = \{P_l' | P_l \in \mathcal{Q}_L\}$ and $\mathcal{Q}_R' = \{P_r' | P_r \in \mathcal{Q}_R\}$. Note that $P_i \in \mathcal{Q}_L$ and $P_j \in \mathcal{Q}_R$, so we have described how to modify $P_i$ and $P_j$.
        
        Now we describe how to modify the paths that enter $B_p$. 
        This is necessary so that the paths do not cross with the paths $P_l'$ and $P_r'$ described in the previous paragraph.
        By Lemma~\ref{L:opposite}, $A$ only crosses paths parallel to $P_j$, and $B$ only crosses paths parallel to $P_i$. Let $\mathcal{Q}_A$ be the set of paths that cross $A$, and let $\mathcal{Q}_B$ be the set of paths that cross $B$. For each path $P_a \in \mathcal{Q}_A$, let $u_a$ be the first vertex (of $P_a$) at which $P_a$ touches $A$ and let $v_a$ be the last. We define 
        \[
            P_a' = P_a[s_a, u_a] \circ A[u_a,v_a] \circ P_a[v_a, t_a].
        \]
        Note that $P_a'$ conflicts with $A$ but does not conflict with $rev(A)$. Furthermore, $P_a'$ no longer crosses $A$. Similarly, for each path $P_b \in \mathcal{Q}_B$, let $u_b$ be the first vertex (of $P_b$) at which $P_b$ crosses $B$, let $v_b$ be the last vertex at which $P_b$ crosses $B$, and let 
        \[
            P_b' = P_b[s_b, u_b] \circ B[u_b,v_b] \circ P_b[v_b, t_b].
        \]
        Let $\mathcal{Q}_A' = \{P_a' | P_a \in \mathcal{Q}_A\}$ and $\mathcal{Q}_B' = \{P_b' | P_b \in \mathcal{Q}_B\}$. This finishes the description of how to modify $\mathcal{P}$ to reduce the number of crossings. That is, let
        \[
            \mathcal{P}' = \mathcal{P} \setminus (\mathcal{Q}_L \cup \mathcal{Q}_R \cup \mathcal{Q}_A \cup \mathcal{Q}_B) \cup (\mathcal{Q}_L' \cup \mathcal{Q}_R' \cup \mathcal{Q}_A' \cup \mathcal{Q}_B').
        \]
        We need to show that $\mathcal{P}'$ is a solution with fewer crossings than $\mathcal{P}$.
        Paths $A$ and $B$ are shortest paths, so all subpaths of $A$ and $B$ are shortest paths and all paths in $\mathcal{P}'$ are shortest paths.
        All paths in $\mathcal{P}'$ use the edges of $A$ in the reverse direction (i.e., from $x_{p+1}$ to $x_p$), if at all. Similarly, all paths in $\mathcal{P}'$ use the edges of $B$ in the reverse direction (i.e., from $x_p$ to $x_{p+1}$), if at all. All arcs used by $\mathcal{P'}$ that are not used by $\mathcal{P}$ are in $A$ or $B$, so this implies that the paths in $\mathcal{P}'$ are non-conflicting. During the exchange procedure, we replace subpaths in or on $B_p$ with subpaths of the boundary of $B_p$, such that no paths in $\mathcal{P}'$ enter $B_p$. Tedious casework implies that no crossings are added when we go from $\mathcal{P}$ to $\mathcal{P}'$; for details, see the appendix. Without increasing the number of crossings in $\mathcal{P}'$, we can also use the procedure in the proof of Lemma~\ref{L:connected-intersection} to modify the paths in $\mathcal{P}'$ so that the intersection of any pair of parallel paths consists of a single subpath. 
        On the other hand, given any pair of paths $P_l \in \mathcal{Q}_L$ and $P_r \in \mathcal{Q}_R$, $P_l'$ and $P_r'$ have strictly fewer crossings than $P_l$ and $P_r$; for details, see the appendix again. This contradicts the fact that $\mathcal{P}$ has the fewest crossings out of all solutions that satisfy Lemma~\ref{L:connected-intersection}. We have thus proved the lemma for odd bigons.
        A symmetric argument proves the lemma for even bigons.

    \end{proof}
    
    \begin{figure}
    \centering
    \begin{tabular}{cr@{\qquad}cr@{\qquad}cr}
     	\includegraphics[scale = 0.45]{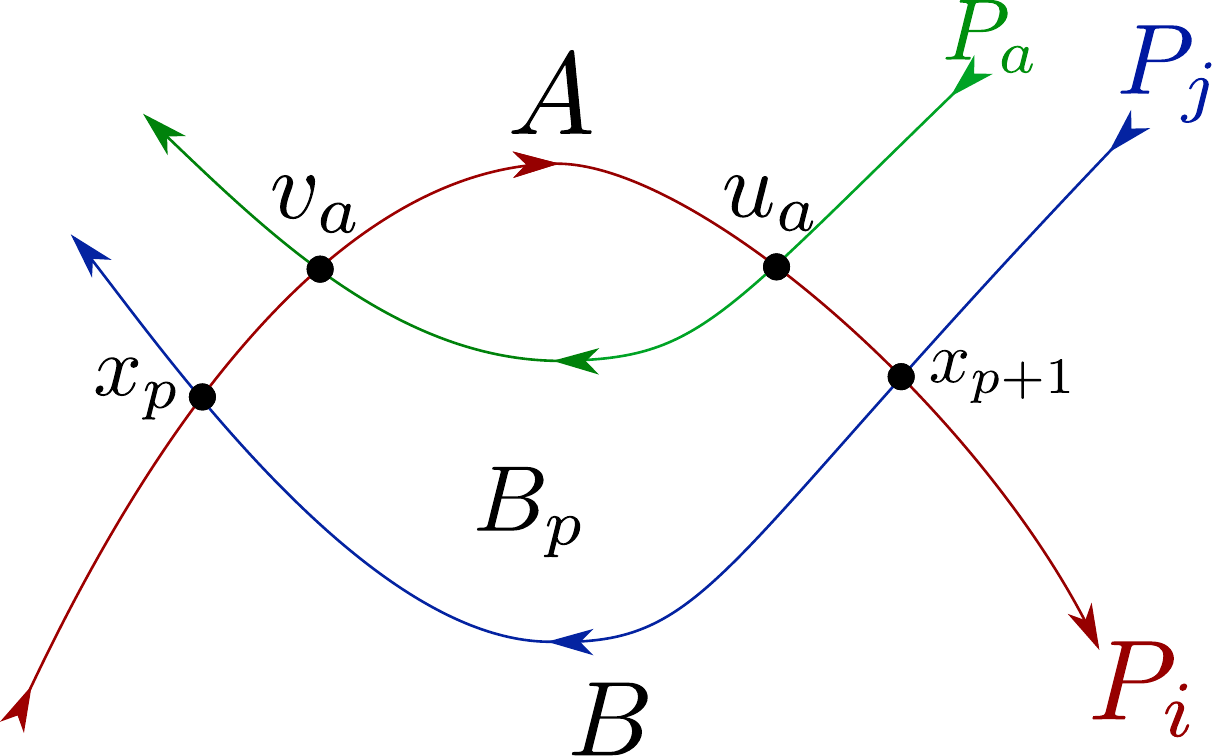} & \hspace{-0.25in} 
    % (a)
    % 	&
    % 	\includegraphics[scale = 0.35]{} & \hspace{-0.25in} (b)
    \end{tabular}
    	
    \caption{Paths before the exchange procedure, in the proof of Lemma~\ref{L:fact-two}. $P_i$ is red, $P_j$ is blue, and $P_a$ is green. $B_p$ is bounded by $A$ and $B$}
    \label{F:few-crossing-parallel}
    \end{figure}

\section{NP-hardness of Ideal Orientation}
In this section we show that the orientation problem is NP-hard in unweighted planar graphs when $k$ is part of the input. The reduction is from planar 3-SAT and is similar to reductions by Middendorf and Pfeiffer~\cite{MP} and by Eilam-Tzoreff~\cite{E-T98}. Planar 3-SAT is the special case of 3-SAT where a certain bipartite graph $G(y)$ is planar, defined as follows. Given an instance $y$ of 3-SAT, each variable of $y$ is a vertex, and each clause of $y$ is also a vertex. For every variable $x_i$ and every clause $c_j$, we add an edge between $x_i$ and $c_j$ if either $x_i$ or $\overline{x_i}$ appears in $c_j$. The resulting graph $G(y)$ is bipartite; if it is planar, then $y$ is an instance of planar 3-SAT. Lichtenstein showed that planar 3-SAT is still NP-hard~\cite{L82}.

Suppose we are given an instance $y$ of planar 3-SAT. As noted by Middendorf and Pfeiffer~\cite{MP}, we may assume that each variable appears in three clauses. To see this, fix a planar embedding of $G(y)$, and let $vC_1, \dots, vC_k$ be the edges incident to a variable $v$ in clockwise order. Introduce new variables $v_1, \dots, v_k$ and clauses $v_k \vee \neg v_1$ and $v_i \vee \neg v_{i+1}$ for all $i \in \{1, \dots, k-1\}$. In addition, replace the occurrence of $v$ in $C_i$ with $v_i$. If we do this for all variables $v$, we get an instance $y'$ of planar 3-SAT that is satisfiable if and only if $y$ is satisfiable, and every variable in $y'$ appears in exactly three clauses.

We use $y$ to construct an instance of the ideal orientation problem. We will construct a clause gadget for each clauses and a variable gadget for each variable.  
The clause gadget for a clause $C$ is shown in Figure~\ref{F:clause}. There are three terminals pairs $(s_C, t_C), (s_C', t_C'),$ and $(s_C'', t_C'')$.  Let us note some key properties of $G_C$. We have $d(s_C, t_C) = d(s_C'', t_C'') = 3$ and $d(s_C', t_C') = 4$.
There are two shortest paths from $s_C$ to $t_C$, three shortest paths from $s_C'$ to $t_C'$, and two shortest paths from $s_C''$ to $t_C''$. There exist pairwise non-conflicting shortest paths connecting $(s_C, t_C), (s_C', t_C'),$ and $(s_C'', t_C'')$ in $G_C$. These paths must use at least one of the edges $ab = e_{vC}$, $cd = e_{wC}$, and $ef = e_{xC}$. Furthermore, three such non-conflicting shortest paths exist even when two of the three edges are not to be used. 
%(or only to be used backwards). {\color{red} backwards not defined until next paragraph}

\begin{figure}
\centering
	\includegraphics[scale=0.3]{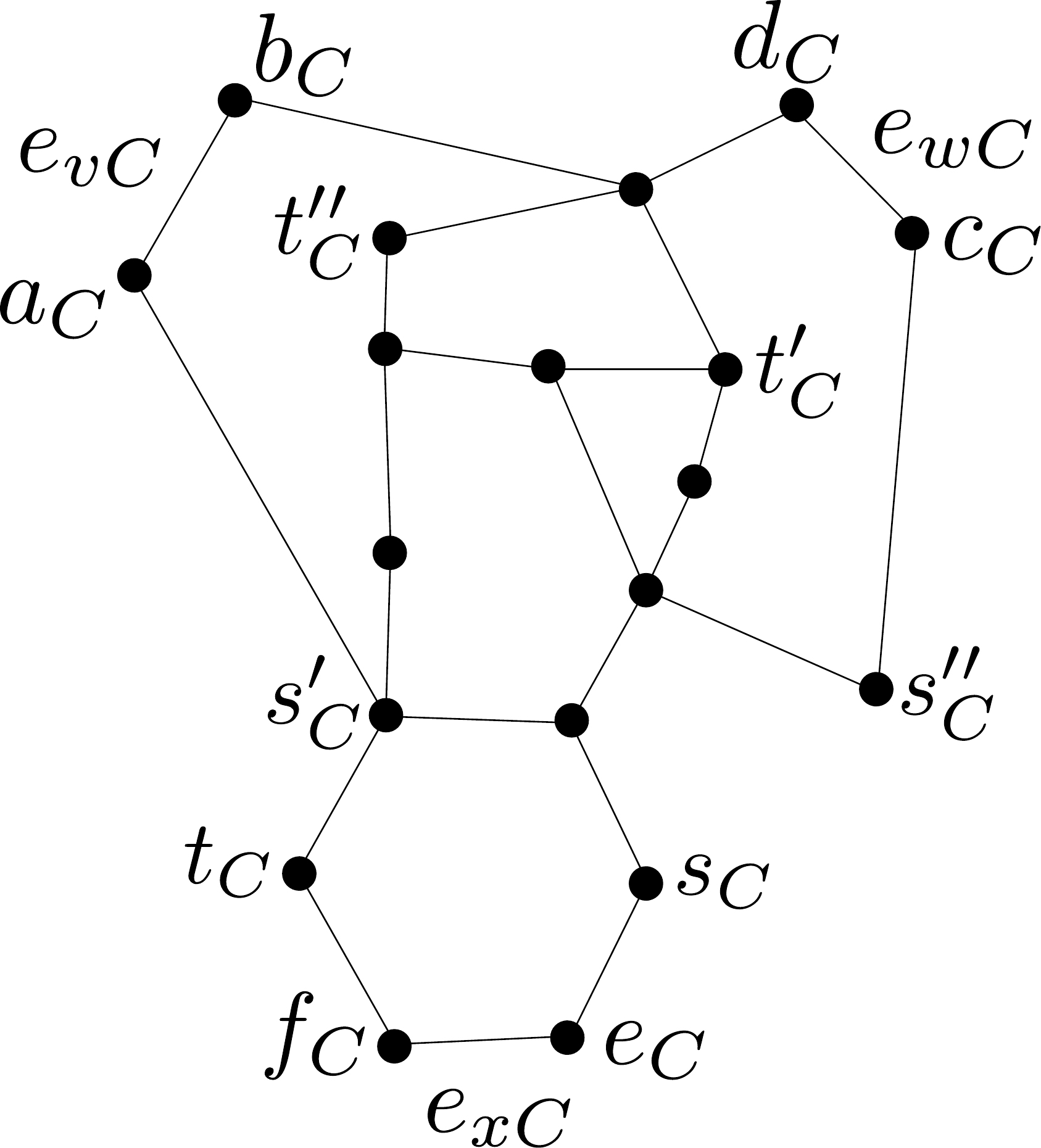}
\caption{Clause gadget $G_C$ for a clause $C$ containing variables $v$, $w$, $x$. All edges are unweighted.}
\label{F:clause}
\end{figure}

The edges $e_{vC}$, $e_{wC}$, and $e{xC}$ are part of the clause gadget associated with $C$ and will also each be in variable gadgets associated with $v$, $w$, and $x$, respectively. Before defining the variable gadgets, we need to fix some terminology regarding the orientations of the three edges.
Each of the three edges can be oriented {\em forward} or {\em backward} as follows. The forward orientation of $e_{vC}$ is from $a$ to $b$, the forward orientation of $e_{wC}$ is from $c$ to $d$, and the forward orientation of $e_{xC}$ is from $e$ to $f$. The backward orientation of an edge is simply the reverse of the forward orientation.  Intuitively, an edge must be oriented forward in order to be used in some shortest path connecting a pair of terminals; furthermore, orienting an edge $e_{vC}$ forward means that the literal $v$ or $\neg v$ (whichever one appears in $C$) is set to True.

We also give each of the three edges a {\em true} and a {\em false} orientation depending on whether the literals in $C$ are positive or negative. If $v$ is a literal in the clause $C$, then the true orientation of $e_{vC}$ is the forward orientation of $e_{vC}$ and the false orientation of $e_{vC}$ is the backward orientation. If $\neg v$ is a literal in $C$, then the true orientation of $e_{vC}$ is the backward orientation and the false orientation is the forward orientation. True and false orientations for $e_{wC}$ and $e_{xC}$ are defined analogously.
%the true orientation of $e_{wC}$ (or $e_{xC}$) is the forward orientation of $e_{wC}$ (or $e_{xC}$) if and only if $w$ (or $x$) appears as a literal in $C$.
Intuitively, the true orientation of an edge $e_{vC}$ is the direction that it would be oriented in if the variable $v$ were assigned to true.

Finally, each of the three edges has a {\em clockwise} orientation and a {\em counterclockwise} orientation. The clockwise orientation of $e_{vC}$ is its forward orientation, the clockwise orientation of $e_{xC}$ is its forward orientation, and the clockwise orientation of $e_{wC}$ is its {\em backward} orientation.
The counterclockwise orientation of an edge is the reverse of its clockwise orientation. Intuitively, an edge oriented clockwise goes clockwise around its clause gadget, and an edge oriented counterclockwise goes counterclockwise around its clause gadget. However, somewhat confusingly, we will construct the variable gadgets such that a clockwise-oriented edge goes {\em counterclockwise} around its {\em variable} gadget and a counterclockwise-oriented edge goes {\em clockwise} around its {\em variable} gadget.

For each variable $v$, we construct a variable gadget $G_v$ as follows. Suppose $v$ appears in clauses $C, D,$ and $E$; suppose further that $vC, vD,$ and $vE$ are the edges incident to $v$ in clockwise order in $G(y)$. For each of the three edges $e_{vC}$, $e_{vD}$, and $e_{vE}$ (in the clause gadgets), we check whether or not the true orientation of the edge is the counterclockwise orientation of that edge. There are four cases:
\begin{itemize}
    \item If for each of the three edges the true orientation is the counterclockwise orientation, then we construct the variable gadget in Figure~\ref{F:variable}a. 
    \item If for exactly two of the three edges (without loss of generality, $(v,C)$ and $(v,D)$) the true orientation is the counterclockwise orientation, then we construct the variable gadget in Figure~\ref{F:variable}b. 
    \item If for exactly one of the three edges (without loss of generality, $(v,E)$) the true orientation is the counterclockwise orientation, then again we still construct the variable gadget in Figure~\ref{F:variable}b. 
    \item If for each of the three edges the true orientation is the clockwise orientation, then we still construct the variable gadget in Figure~\ref{F:variable}a. 
\end{itemize}
Finally, for every variable $v$ and every clause $C$ we identify the edge $e_{vC}$ in both $G_v$ and $G_C$. The resulting graph is still planar and is $G_1(y)$.

\begin{figure}
\centering
\begin{tabular}{cr@{\qquad}cr@{\qquad}cr}
	\includegraphics[scale = 0.25]{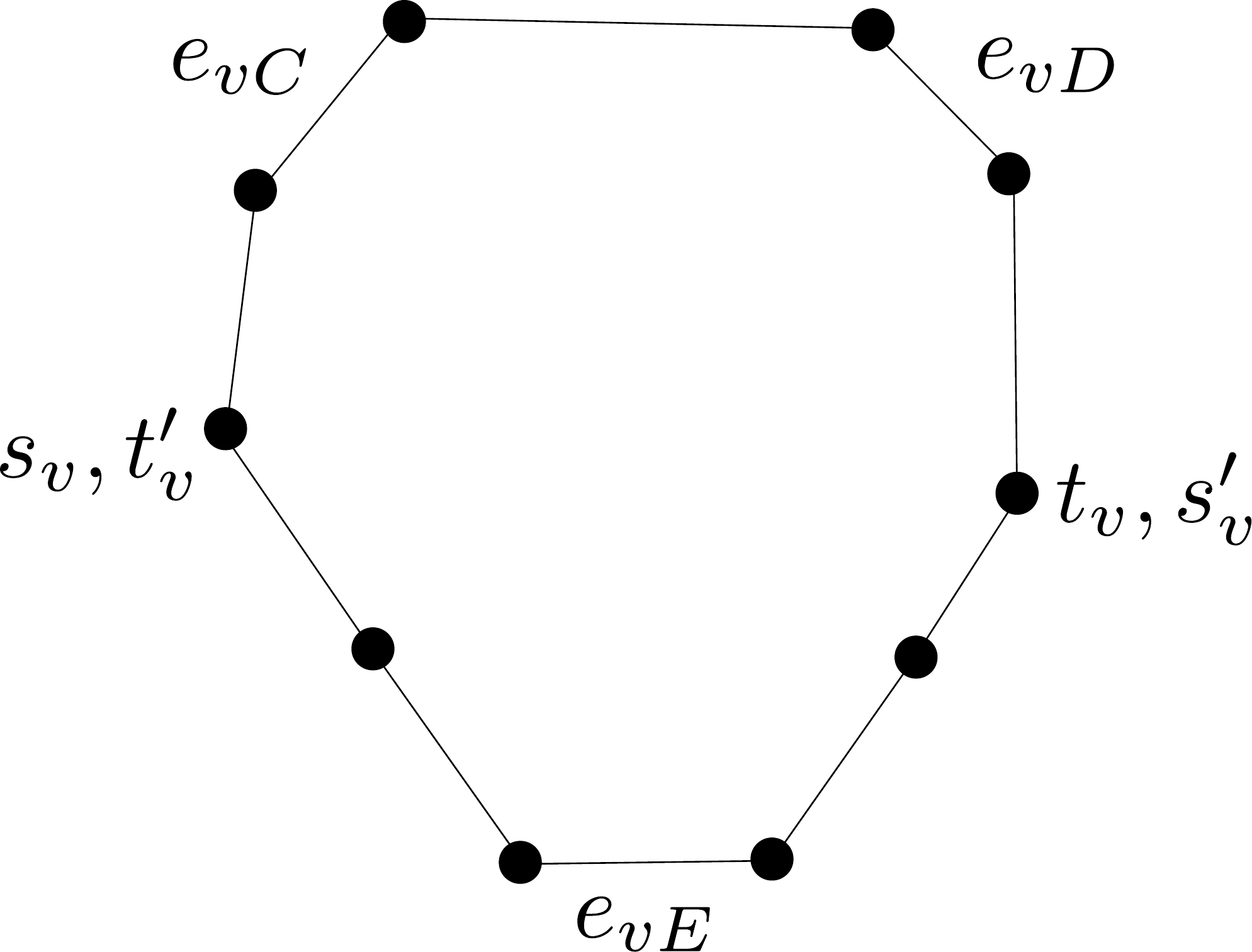} & \hspace{-0.25in} (a)
	&
	\includegraphics[scale = 0.25]{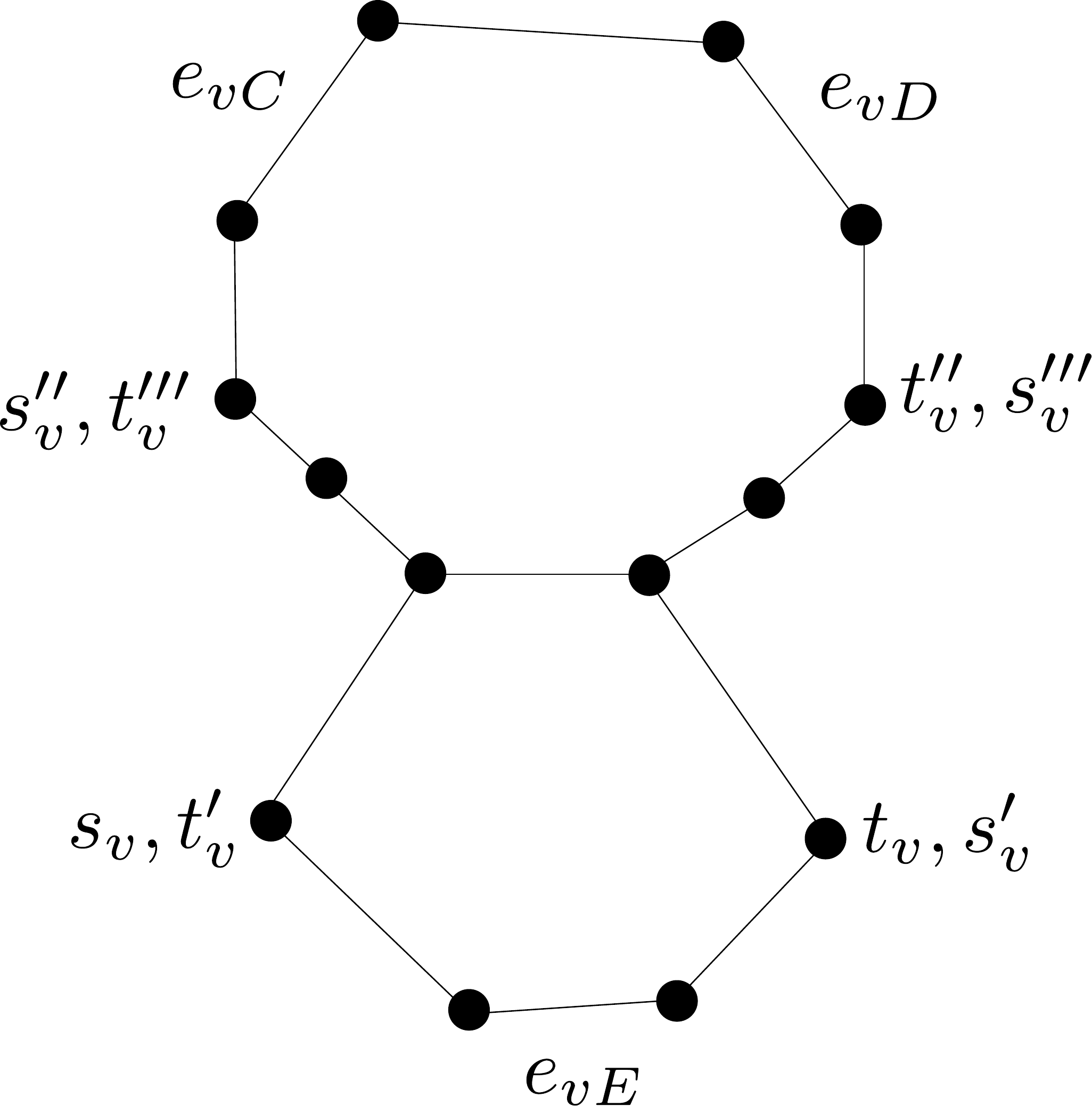} & \hspace{-0.25in} (b)
\end{tabular}
\caption{possible variable gadgets $G_v$ for a variable $v$ appearing in three clauses $C, D,$ and $E$}
\label{F:variable}
\end{figure}

$G_v$ is constructed so that there are only two ways to orient the edges. In one orientation, all edges $vC, vD,$ and $vE$ are oriented in the true direction, and in the other orientation, the three edges are oriented in the false direction. Orienting the three edges in the true direction corresponds to setting the variable to True, and orienting them in the false direction corresponds to setting them to False. The reduction clearly takes polynomial time, and the following lemma implies its correctness.

\begin{lemma}
    A planar 3-SAT formula $y$ is satisfiable if and only if there exists an ideal orientation in $G_1(y)$.
\end{lemma}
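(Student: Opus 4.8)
The plan is to prove the two implications separately, taking the local properties of the gadgets quoted above as established facts and reducing the lemma to bookkeeping about the shared edges $e_{vC}$.

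\emph{Satisfying assignment $\Rightarrow$ ideal orientation.} Suppose $\phi$ satisfies $y$. I would orient each variable gadget $G_v$ in its ``true'' configuration when $\phi$ sets $v$ to true and in its ``false'' configuration when $\phi$ sets $v$ to false; by the construction of $G_v$ this is a legal orientation that realizes every distance internal to $G_v$, and, by the definition of the true/false orientations, it orients $e_{vC}$ forward exactly when the literal of $v$ occurring in $C$ is satisfied by $\phi$. Now fix a clause $C$ on literals of $v,w,x$. Since $\phi$ satisfies $C$, at least one of $e_{vC}, e_{wC}, e_{xC}$ has been oriented forward, so by the stated property of the clause gadget I can orient the remaining (non-shared) edges of $G_C$ so that $(s_C,t_C)$, $(s_C',t_C')$, $(s_C'',t_C'')$ receive pairwise non-conflicting shortest paths using only that one forward edge. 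Orienting all still-unassigned edges of $G_1(y)$ arbitrarily yields an orientation $\vec G$. To see that $\vec G$ is ideal I would check that every terminal pair still has a path of the right length: each pair has a shortest path lying entirely inside its own gadget and consistent with $\vec G$, so it remains only to verify that the distance between a gadget's terminals inside the gadget equals their distance in $G_1(y)$ --- a ``no shortcut'' statement that follows from the fact that distinct gadgets overlap only in a single edge $e_{vC}$, together with the unit edge lengths and the small diameters ($3$ and $4$) involved.

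\emph{Ideal orientation $\Rightarrow$ satisfying assignment.} Conversely, let $\vec G$ be an ideal orientation of $G_1(y)$. Then every terminal pair has a shortest path in $\vec G$, and all of these coexist in the single orientation $\vec G$, hence are automatically non-conflicting. Restricting to a variable gadget $G_v$, preserving the distances of the terminal pairs of $G_v$ forces $\vec G$ to restrict to one of the two admissible orientations of $G_v$ --- this is exactly the assertion that $G_v$ has ``only two ways to orient the edges'' --- so I define $\phi(v)$ according to which one occurs, and then, as in the first direction, $e_{vC}$ is oriented forward precisely when the literal of $v$ in $C$ is satisfied by $\phi$. Now fix a clause $C$: the terminal pairs of $G_C$ have shortest paths in $\vec G$, which by the same no-shortcut observation may be taken inside $G_C$, and by the stated clause-gadget property such non-conflicting shortest paths must traverse at least one of $e_{vC}, e_{wC}, e_{xC}$ in its forward direction. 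Hence $C$ contains a literal satisfied by $\phi$, and since $C$ was arbitrary $\phi$ satisfies $y$.

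\emph{Where the difficulty lies.} Almost all of the combinatorial content has been pushed into the two gadget facts cited above (each variable gadget admits exactly two distance-preserving orientations; a clause gadget admits non-conflicting shortest paths iff at least one of its three special edges points forward), so what remains in the lemma is twofold and essentially routine but fiddly: (i) the ``no shortcut'' claim, i.e.\ that gluing gadgets along the single edges $e_{vC}$ changes none of the relevant terminal-to-terminal distances --- this is where I expect to genuinely need the gadget figures and the unit-weight bounds; and (ii) keeping the three overlaid labelings of each shared edge --- forward/backward, true/false, clockwise/counterclockwise --- consistent between its clause side and its variable side, a point the construction explicitly flags as confusing because a clockwise-oriented edge runs counterclockwise around its variable gadget. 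The orientation bookkeeping in (ii), rather than any single hard argument, is the main obstacle.
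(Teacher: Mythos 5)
Your proposal is correct and follows essentially the same route as the paper's proof: orient each variable gadget according to the assignment and each clause gadget using its one forward edge for the forward direction, and read the assignment off the (two-state) variable gadgets and invoke the clause-gadget property for the converse. You are in fact slightly more careful than the paper, which does not explicitly address the ``no shortcut'' point about gluing gadgets along the shared edges $e_{vC}$, though it does spell out the forward/backward versus true/false consistency check as four cases.
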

\begin{proof}
    \underline{$\Rightarrow: $} Suppose $y$ is satisfiable, and fix a satisfying assignment. For each clause $C$, we orient the edges in $G_C$ as follows. For each of the three literals, we do the following. Let $v$ or $\neg v$ be some literal in $C$. Orient the edge $e_{vC}$ forwards if $v$ or $\neg v$ is in $C$ and set to True; otherwise, the edge is oriented backwards. We know that exactly one of the three edges $e_{vC}, e_{wC}, e_{xC}$ is oriented forwards. It is possible to orient the rest of the edges in $G_C$ such that distances between the terminal pairs $(s_C, t_C), (s_C', t_C')$, and $(s_C'', t_C'')$ are preserved. 
    
    In each variable gadgets $G_v$, we orient the edges as follows. If $v$ is set to True, then each of $e_{vC}, e_{vD}, e_{vE}$ are oriented in the true direction; otherwise, the three edges are oriented in the false direction. It is possible to orient the rest of the edges in $G_v$ such that the distances between the terminal pairs $(s_v, t_v), (s_v', t_v'$, $(s_v'', t_v'')$, and $(s_v''', t_v''')$ (if they exist) are preserved. To show that orientations are consistent, recall that in the clause gadgets, there are four cases:
    \begin{itemize}
        \item $v$ appears in $C$ and is set to True. Then $e_{vC}$ is oriented forward and so is oriented in the true direction.
        \item $\neg v$ appears in $C$ and $v$ is set to False. Then $e_{vC}$ is oriented forward and in the false direction.
        \item $v$ appears in $C$ and is set to False. Then $e_{vC}$ is oriented backward and so is- oriented in the false direction.
        \item $\neg v$ appears in $C$ and $v$ is set to True. Then $e_{vC}$ is oriented backward and so is oriented in the true direction.
    \end{itemize}
    In all cases we see that the orientation in the clause gadget is consistent with the orientation in the variable gadget.
    
    \underline{$\Leftarrow$:} Suppose an ideal orientation exists. If $e_{vC}, e_{vD},$ and $e_{vE}$ are all oriented in the true direction, then set $v$ to True; otherwise they are all oriented in the false direction and we set $v$ to False. We need to show that this is a satisfying assignment. Consider a clause $C$. Since an ideal orientation exists, at least one of the edges $e_{vC}$, $e_{wC}$, and $e_{xC}$  must be oriented forward. Say $e_{vC}$ is oriented forward. This means that either $e_{vC}$ is oriented in the true direction with $v$ appearing positively, or $e_{vC}$ is oriented in the false direction with $v$ appearing negatively. In the first case, $v$ is set to True, so $C$ is satisfied. In the second case, $v$ is set to False, so $C$ is satisfied.
\end{proof}

\section{Serial Case for $k$-min-sum orientations}
In this section, we describe an algorithm to solve serial instances of the $k$-min-sum orientation problem.
Recall that every terminal pair $(s_i, t_i)$ is either clockwise (i.e., a clockwise traversal of the outer face will visit $s_i$ and then immediately visit $t_i$) or counterclockwise. 
Given a set $\Pi$ of arbitrary directed paths $\pi_1, \dots, \pi_k$ such that $\pi_i$ connects $s_i$ to $t_i$, we define ``lower envelopes'' $L(1, \Pi), \dots, L(k, \Pi)$ in the same way as in section~\ref{S:serial-ideal}.
To simplify our presentation, we assume that our given instance has a unique solution, if it exists.  If necessary, this uniqueness assumptions can be enforced with high probability using the isolation lemma of Mulmuley, Vazirani, and Vazirani~\cite{MVV}. 
%For details, see the appendix.

Before we describe the algorithm, we prove an analog of Lemma~\ref{L:no-crossings}.

\begin{lemma}\label{L:no-crossing-sum}
    Let $G$ be a serial instance of the $k$-min-sum orientation problem with terminal pairs $(s_1, t_1)$, $\dots$, $(s_k, t_k)$. If a solution exists, then the paths in the solution are pairwise non-crossing.
\end{lemma}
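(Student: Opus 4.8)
The plan is to mimic the envelope argument from Lemma~\ref{L:no-crossings}, but now working with the lower-envelope construction applied to an optimal min-sum solution rather than to a solution made of shortest paths. Let $\mathcal{P} = \{P_1, \dots, P_k\}$ be a solution to the serial instance of the $k$-min-sum orientation problem, so $P_i$ connects $s_i$ to $t_i$ and the paths are pairwise non-conflicting (this is exactly what it means to be a solution to an orientation problem, as reformulated in Section 2). First I would invoke the (assumed) uniqueness of the solution, so that in particular any construction I perform that produces \emph{another} solution of total length at most $\sum_i \ell(P_i)$ must in fact reproduce $\mathcal{P}$ itself.

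Next I would pass to the lower envelopes $L(1,\mathcal{P}), \dots, L(k,\mathcal{P})$ exactly as defined in Section~\ref{S:serial-ideal}. By construction these walks are pairwise non-crossing, and by Lemma~\ref{L:envelope-nonconflicting} they are pairwise non-conflicting; each $L(i,\mathcal{P})$ connects $s_i$ to $t_i$. Hence $\{L(1,\mathcal{P}), \dots, L(k,\mathcal{P})\}$ is itself a valid set of non-conflicting directed walks connecting the terminals — i.e.\ a feasible solution to the $k$-min-sum instance (after deleting cycles from any walk that is not already a simple path, which only shortens it and preserves non-conflicting-ness, as noted in Section 2). By Lemma~\ref{L:envelope-length} we have $\sum_{i=1}^k \ell(L(i,\mathcal{P})) \le \sum_{i=1}^k \ell(\pi_i)$ for any non-conflicting family $\pi_i$ from $s_i$ to $t_i$; applying this with $\pi_i = P_i$ gives $\sum_i \ell(L(i,\mathcal{P})) \le \sum_i \ell(P_i)$. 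Since $\mathcal{P}$ is optimal, equality holds, so the envelope family is also an optimal solution.

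Finally I would combine the two facts: the envelope family is an optimal solution, and by uniqueness it must coincide with $\mathcal{P}$, so $P_i = L(i,\mathcal{P})$ for every $i$. But the walks $L(1,\mathcal{P}), \dots, L(k,\mathcal{P})$ are pairwise non-crossing by construction (each is a sub-walk of the boundary of a region $R_i$, and distinct $R_i$ have disjoint interiors). Therefore the paths of $\mathcal{P}$ are pairwise non-crossing, which is the claim.

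The main obstacle — really the only subtle point — is making sure the envelope family genuinely qualifies as a \emph{solution} to the min-sum problem, i.e.\ that each $L(i,\mathcal{P})$ is an honest directed path (not merely a walk) from $s_i$ to $t_i$: one must check that reducing walks to simple paths does not destroy the non-conflicting property or increase length, which is already handled by the remarks in Section 2, and that the length-minimality from Lemma~\ref{L:envelope-length} still applies after any such reduction (it does, since reduction only shortens). A secondary point to state carefully is why $L(i,\mathcal{P})$ actually starts at $s_i$ and ends at $t_i$ with the correct orientation — this follows because $\partial R_i = C_i \cup (\partial R_i \setminus C_i)$ and $C_i$ is the terminal-free boundary arc between $s_i$ and $t_i$, so the complementary arc runs from $s_i$ to $t_i$; here the orientation consistency uses that every edge of $L(i,\mathcal{P})$, or its reverse, lies on some $\pi_p = P_p$ and the non-conflicting hypothesis pins down the direction. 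Once these bookkeeping items are dispatched, the uniqueness-plus-optimality squeeze does the rest.
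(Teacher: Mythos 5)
Your proposal is correct and follows essentially the same route as the paper: pass to the envelopes $L(1,\mathcal{P}),\dots,L(k,\mathcal{P})$, use Lemmas~\ref{L:envelope-nonconflicting} and~\ref{L:envelope-length} to see they form a feasible solution of total length at most that of $\mathcal{P}$, and invoke the uniqueness assumption. The only difference is presentational: the paper argues by contradiction (a crossing solution would yield a strictly shorter or distinct optimal solution), while you argue directly that the envelope family must coincide with $\mathcal{P}$, which is an equivalent formulation.
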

\begin{proof}
    The proof is similar to that of Lemma~\ref{L:no-crossings}. Let $\mathcal{P} = \{P_1, \dots, P_k\}$ be the unique solution to the serial instance $G$ of the $k$-min-sum orientation problem, where $P_i$ connects $s_i$ to $t_i$. The walks $L(1, \mathcal{P}), \dots, L(k, \mathcal{P})$ are pairwise non-crossing. By Lemma~\ref{L:envelope-nonconflicting} they are pairwise non-conflicting. By Lemma~\ref{L:envelope-length} and our uniqueness assumption, their total length is strictly less than that of the paths in $\mathcal{P}$. This contradicts the fact that $\mathcal{P}$ was the optimal solution to $G$.
\end{proof}

    Let $\mathcal{P} = \{P_1, \dots, P_k\}$ be the unique solution to the instance $G$ of the $k$-min-sum solution.
    By the Jordan Curve Theorem, non-crossing opposite paths must be edge-disjoint.
    This suggests the following algorithm, which occurs in two phases.
    
    \begin{enumerate}
        \item In the first phase, we re-index the terminals so that $(s_1,t_1), \dots, (s_\alpha, t_\alpha)$ are clockwise and $(s_{\alpha+1}, t_{\alpha+1})$ are counterclockwise. We split the instance of the $k$-min-sum problem into two sub-instances. One of the sub-instances consists of the original graph $G$ with the clockwise terminal pairs, while the other sub-instance consists of $G$ with the counterclockwise terminal pairs. We solve each sub-instance separately. In the clockwise sub-instance, we are finding $\alpha$ non-crossing edge-disjoint directed paths of minimum total length such that the $i$-th path connects $s_i$ to $t_i$ for $i \in \{1, \dots, \alpha\}$ (We will describe later how to find such paths). Likewise, in the counterclockwise sub-instance we are finding $k - \alpha$ non-crossing edge-disjoint directed paths of minimum total length such that the $(i-\alpha)$-th path connects $s_i$ to $t_i$ for $i \in \{\alpha + 1, \dots, k\}$. We then let $\Pi = \{\pi_1, \dots, \pi_k\}$ be the set of all $k$ paths, where $\pi_i$ connects $s_i$ to $t_i$. By Lemma~\ref{L:no-crossing-sum}, the sum of the lengths of $\pi_1, \dots, \pi_k$ is at most the sum of the lengths of the paths in $\mathcal{P}$. 
        \item Any two opposite paths in $\Pi$ are edge-disjoint and so are non-conflicting. However, parallel paths (i.e., a clockwise path and a counterclockwise path) found by the first phase may conflict with each other; the purpose of phase 2 is to remove these conflicts. In phase 2, we simply output $L(1, \Pi), \dots, L(k, \Pi)$. By Lemma~\ref{L:no-crossing-sum}, the sum of the lengths of the output paths is no greater than the sum of the lengths of the paths in $\mathcal{P}$. Since the output paths are envelopes, they non-crossing; the Jordan Curve Theorem then implies that two output paths can conflict only if they are opposite. On the other hand, Lemma~\ref{L:envelope-nonconflicting} implies that opposite paths are non-conflicting.
        Thus the output paths are indeed non-conflicting paths of minimum total length that connect the terminals.
    \end{enumerate}
    
    To finish the description of the algorithm we just need to show how to find the non-crossing edge-disjoint directed paths in Phase 1. Before doing this, we define the {\em $k$-min-sum non-crossing edge-disjoint paths problem ($k$-NEPP)} and the {\em $k$-min-sum vertex-disjoint paths problem ($k$-VPP)} as follows. In $k$-NEPP we are given a plane graph $G$ with $k$ pairs of terminals $(s_1, t_1), \dots, (s_k, t_k)$, and we wish to find $k$ paths $P_1, \dots, P_k$ such that $P_i$ connects $s_i$ to $t_i$ and the $k$ paths are pairwise non-crossing and edge-disjoint. (Note that under our definition of ``edge-disjoint,'' finding edge-disjoint directed paths in undirected graphs is the same as finding edge-disjoint undirected paths in undirected graphs. Thus for the rest of this section all paths will be undirected.) $k$-VPP is similar except that the paths $P_1, \dots, P_k$ are to be vertex-disjoint instead of non-crossing edge-disjoint.
    It is known that $k$-VPP can be solved in serial instances in $O(kn^5)$ time when edge lengths are non-negative~\cite{BNZ15}.
    %We are specifically interested in the case where the terminals are on a single face in cyclic order $s_1, t_1, \dots, s_k, t_k$. 
    
    In order to find the paths in Phase 1 we need to solve serial instances of $k$-NEPP. We will solve such instances by reducing to serial instances of $k$-VPP; this will finish the description of the algorithm for serial instances of the $k$-min-sum orientation problem.
    
    The reduction is as follows. Starting with $G$, we replace each vertex $v$ in $G$ with an undirected cycle $C_v$ of $\deg(v)$ vertices $v_1, \dots, v_{\deg(v)}$. Each edge in the cycle has length zero. 
    We make every edge that was incident to $v$ incident to some vertex $v_i$ instead, such that each edge is connected to a different vertex $v_i$, the clockwise order of the edges is preserved, and the graph remains planar. 
    The resulting graph $G^\circ$ has $O(n)$ vertices and arcs.
See Figure~\ref{F:cycle-reduction}.
    Furthermore, if $G$ has all terminals on the outer face, then so does $G^\circ$.    
    
    \begin{figure}
\centering
\begin{tabular}{crcr}
	\includegraphics[scale=0.38]{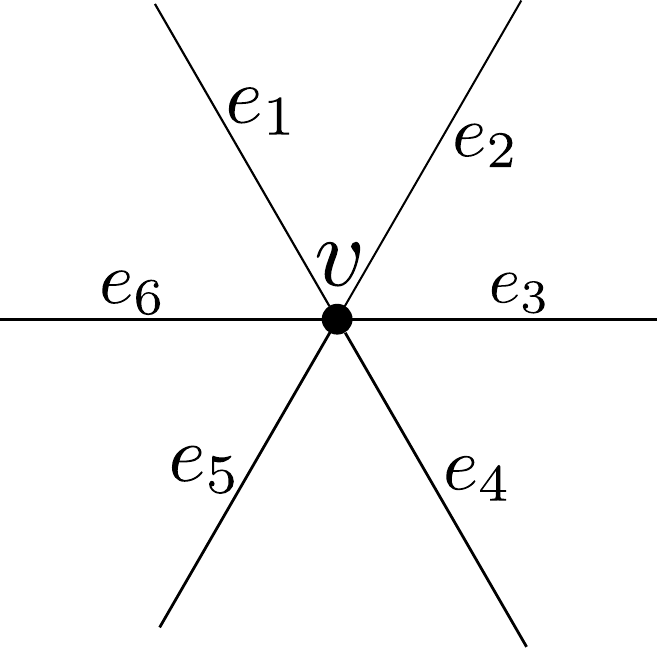} & \hspace{-0.25in}(a)
	&
	\includegraphics[scale=0.38]{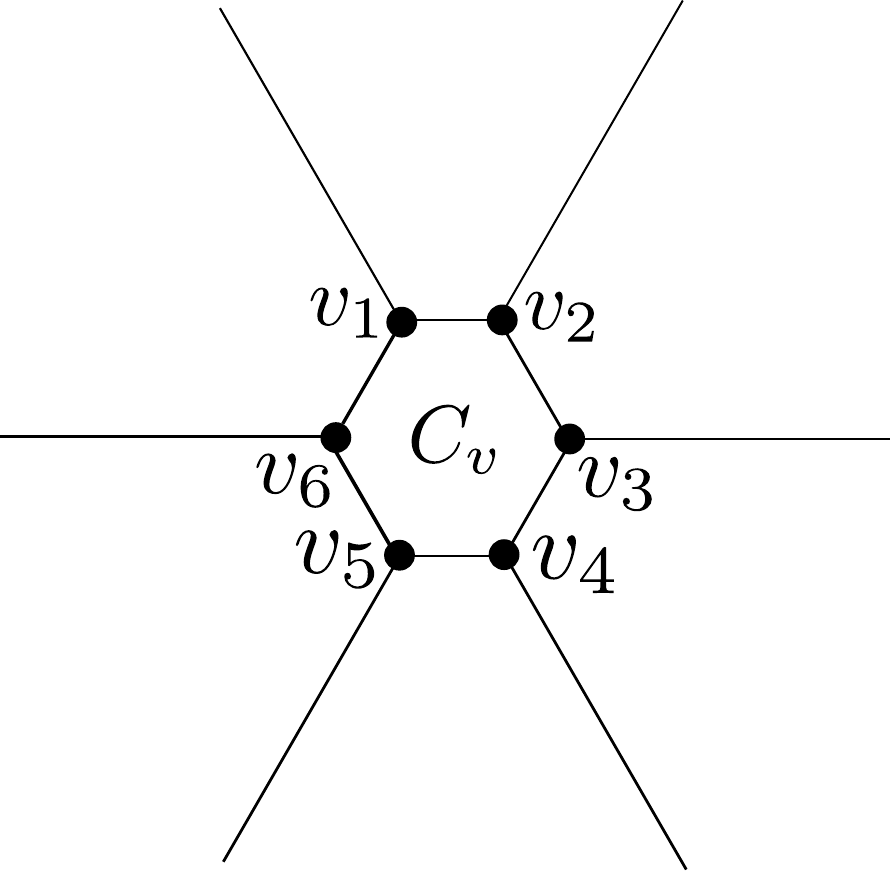} & \hspace{-0.25in}(b)
\end{tabular}
\caption{(a) vertex $v \in V(G)$ with incident edges $e_1, \dots, e_6$ (b) corresponding cycle $C_v$ in $G^\circ$; each edge in $C_v$ has zero length}
\label{F:cycle-reduction}
\end{figure}
    
    \begin{lemma}
        Suppose $G$ is serial instance with terminal pairs $(s_1, t_1), \dots, (s_k, t_k)$. The following statements are equivalent:
        \begin{enumerate}
            \item There exist pairwise non-crossing edge-disjoint paths $P_1, \dots, P_k$ of total length $L$ in $G$ such that $P_i$ connects $s_i$ and $t_i$ for all $i$.
            \item There exist pairwise vertex-disjoint paths $Q_1, \dots, Q_k$ of total length $L$ in $G^\circ$ such that $Q_i$ connects $s_i$ and $t_i$ for all $i$.
        \end{enumerate}
    \end{lemma}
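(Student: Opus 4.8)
The plan is to prove the two directions separately, using the vertex-to-cycle transformation $G \mapsto G^\circ$ that locally replaces each vertex by a zero-length cycle whose vertices host the formerly-incident edges in their cyclic order.

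\medskip

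\noindent\textbf{From (2) to (1).} Suppose we have pairwise vertex-disjoint paths $Q_1, \dots, Q_k$ in $G^\circ$ of total length $L$. First I would project each $Q_i$ back to $G$: an arc of $G$ (i.e., one of the non-cycle edges of $G^\circ$) is kept in $P_i$ if and only if it is used by $Q_i$; the cycle edges have length zero, so this projection does not change lengths, giving $\ell(P_i) = \ell(Q_i)$ and total length $L$. Each $P_i$ is a walk from $s_i$ to $t_i$ in $G$, which we can shortcut to a simple path without increasing its length (zero-length cycles cannot make this decrease). For edge-disjointness: if $P_i$ and $P_j$ shared an edge $e$ of $G$, then since in $G^\circ$ that edge has a single fixed pair of endpoints on the two relevant cycles, both $Q_i$ and $Q_j$ would use $e$ and hence share those endpoints, contradicting vertex-disjointness. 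For non-crossing: the paths $Q_i$ and $Q_j$ are vertex-disjoint curves in the plane, so they do not cross; when we contract each cycle $C_v$ back to the point $v$, two vertex-disjoint curves can become curves that meet at $v$ but still do not cross there, precisely because the cyclic order of edges around $v$ was preserved by the construction. So $P_1,\dots,P_k$ are pairwise non-crossing and edge-disjoint.

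\medskip

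\noindent\textbf{From (1) to (2).} This is the direction I expect to be the main obstacle, since it requires actually routing the paths through the cycles $C_v$. Given pairwise non-crossing edge-disjoint paths $P_1, \dots, P_k$ in $G$, I would process each vertex $v$ of $G$ independently. Say $P_{i_1}, \dots, P_{i_m}$ are the paths passing through $v$; each such path $P_{i_\ell}$ enters $v$ along some edge $a_\ell$ and leaves along some edge $b_\ell$, and edge-disjointness means the $2m$ edges $a_1,b_1,\dots,a_m,b_m$ are distinct (a path could also start or end at $v$, if $v$ is a terminal, in which case it uses only one incident edge there; I would handle this as a degenerate sub-case). In $G^\circ$, the edges $a_\ell$ and $b_\ell$ are attached to specific vertices of the cycle $C_v$. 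The key claim is that because the $P_{i_\ell}$ are pairwise non-crossing at $v$, the chords $\{a_\ell, b_\ell\}$ drawn inside the disk bounded by $C_v$ are pairwise non-crossing, so they partition the cycle into nested arcs; I can therefore route $P_{i_\ell}$ through $C_v$ along the arc of $C_v$ between (the attachment vertices of) $a_\ell$ and $b_\ell$ that is consistent with this nesting, choosing the arcs so the resulting sub-paths are pairwise vertex-disjoint. The serial/single-face hypothesis and the outer-face positioning of terminals ensure the terminal vertices themselves are handled cleanly. Since all cycle edges have length zero, this routing adds nothing to the length, so $\ell(Q_i) = \ell(P_i)$ and the total is $L$; the $Q_i$ are vertex-disjoint globally because they are vertex-disjoint within each $C_v$ and the non-cycle arcs of $G^\circ$ inherit disjointness from edge-disjointness of the $P_i$ together with distinctness of attachment vertices.

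\medskip

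\noindent The main obstacle is making precise the ``non-crossing chords nest, hence can be routed disjointly around the cycle'' argument — essentially a small planar routing lemma analogous to the grid-routing argument in the proof of Lemma~\ref{L:grid-correctness}, but now in a cycle rather than a grid. I would prove it by induction on $m$: among the chords there is always an ``innermost'' one cutting off an arc of $C_v$ containing no other chord endpoints, route that path along the short arc, delete its vertices, and recurse; since $C_v$ has $\deg(v) \ge 2m$ vertices (every $a_\ell, b_\ell$ is a distinct incident edge) there is always enough room. Everything else is a routine check that lengths, disjointness, and the non-crossing property are preserved under projection and lifting.
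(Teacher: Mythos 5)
Your direction (2) $\Rightarrow$ (1) is fine and matches the paper: project, use zero-length cycle edges to preserve length, get edge-disjointness from the uniqueness of the lifted endpoints of each edge of $G$, and get non-crossingness by contracting the cycles $C_v$ back to points.

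The gap is in direction (1) $\Rightarrow$ (2), precisely at the ``small planar routing lemma'' you defer to. As a purely local statement about a cycle, that lemma is false: non-crossing chords on a cycle can \emph{not} always be routed as vertex-disjoint arcs of the cycle. Take $\deg(v)=6$ with cycle vertices $v_1,\dots,v_6$ in order and three paths through $v$ whose chords are $\{v_1,v_6\}$, $\{v_2,v_5\}$, $\{v_3,v_4\}$ (a nested, hence non-crossing, family). The path for $\{v_2,v_5\}$ must use one of the two arcs between $v_2$ and $v_5$; one arc contains $v_3,v_4$ and the other contains $v_6,v_1$, and in either case it passes through the endpoints needed by another path, so no vertex-disjoint routing exists. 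Your innermost-first recursion fails at exactly this point (after routing $\{v_3,v_4\}$ and deleting its vertices, the chord $\{v_2,v_5\}$ is forced through $v_1$ and $v_6$), and the counting remark $\deg(v)\ge 2m$ does not help, since the obstruction is topological rather than a matter of room: subdividing the cycle further changes nothing. Contrast this with the grid gadget of Lemma~\ref{L:grid-correctness}, where interior vertices provide the extra dimension that makes the greedy peeling argument work; a bare cycle has no interior.

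So the missing ingredient is a proof that nesting of depth three or more cannot occur at a vertex $v$, i.e.\ that every path $P_i$ through $v$ has one entire side at $v$ free of the other paths' edges. This is where the serial hypothesis genuinely enters, and it is the step the paper proves: since the terminals appear serially on the outer face and paths neither cross one another nor use outer-face edges, the Jordan Curve Theorem applied to $P_i$ (together with the boundary arc $C_i$ joining $s_i$ to $t_i$ and containing no other terminals) shows that all other paths lie weakly on one side of $P_i$; hence if $P_i$ enters $v$ by $e_x$ and leaves by $e_y$, then either no other path uses $e_{x+1},\dots,e_{y-1}$ or no other path uses $e_{y+1},\dots,e_d,e_1,\dots,e_{x-1}$. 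Once each path has a free side, routing it along the corresponding arc of $C_v$ gives the vertex-disjoint paths $Q_i$, which is the paper's argument. Your write-up invokes the serial/single-face hypothesis only to handle terminal vertices, not to exclude deep nesting, so as written the key claim is unsupported (and, stated locally, false); adding the Jordan-curve argument above would repair it and essentially reproduce the paper's proof.
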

    \begin{proof}
        \underline{$\Rightarrow:$} Suppose there exist pairwise non-crossing edge-disjoint paths $P_1, \dots, P_k$ of total length $L$ in $G$ such that $P_i$ connects $s_i$ and $t_i$. We construct the paths $Q_1, \dots, Q_k$ as follows. For any edge $e$ in $P_i$, we add $e$ to $Q_i$. This defines the portions of the paths $Q_1, \dots, Q_k$ outside the cycles $C_v$; these portions are vertex-disjoint because by construction the endpoints of edges of $G$ are all distinct in $G^\circ$. 
        
        We route the portions of $Q_1, \dots, Q_k$ inside the cycles $C_v$ in $G^\circ$ as follows. Let $v$ be a vertex of $G$, and suppose $P_i$ go through $v$. Suppose the cyclic order of the edges around $v$ is $e_1, \dots, e_d$, where $d = \deg(v)$. Say $P_i$ goes into $v$ through $e_x$ and leaves through $e_y$, where $x<y$. By the Jordan Curve Theorem, either no other path uses $e_{x+1}, \dots, e_{y-1}$ or no other path uses $e_{y+1}, \dots, e_d, e_1, \dots, e_{x-1}$. Suppose the first case holds (the second case is symmetric). Route the path $Q_i$ through vertices $v_x, \dots, v_y$. The resulting paths $Q_1, \dots, Q_k$ are vertex-disjoint from because none of the paths $P_1, \dots, P_k$ use $e_{x+1}, \dots, e_{y-1}$ (except possibly $P_i$). Clearly $Q_1, \dots, Q_k$ have the same length as $P_1, \dots, P_k$.
        
        \underline{$\Leftarrow:$} Suppose there exist pairwise vertex-disjoint paths $Q_1, \dots, Q_k$ of total length $L$ in $G^\circ$. Trivially, the paths $Q_1, \dots, Q_k$ are pairwise non-crossing edge-disjoint too. Each path $P_i$ can be defined by ``projecting'' $Q_i$ into $G$ in the obvious way: an edge of $G$ is in $P_i$ if and only if $e$ was in the original path $Q_i$. The resulting paths $P_1, \dots, P_k$ are pairwise non-crossing because the original paths $Q_1, \dots, Q_k$ were pairwise non-crossing. 
        By similar reasoning as in the second half of the proof of Lemma~\ref{L:grid-correctness}, we can see that the paths $P_1, \dots, P_k$ are pairwise non-crossing and edge-disjoint. Clearly $P_1, \dots, P_k$ are the same length as $Q_1, \dots, Q_k$.
    \end{proof}
    We can use the algorithm of Borradaile, Nayyeri, and Zafarani~\cite{BNZ15} to solve serial instances of $k$-min-sum vertex-disjoint paths. Since $G^\circ$ has $k$ pairs of terminals and $O(n)$ vertices and edges, the algorithm of Borradaile, Nayyeri, and Zafarani still takes $O(kn^5)$ time to compute $\Pi$. Given $\Pi$, computing the envelopes $L(1, \Pi), \dots, L(k, \Pi)$ takes $O(n)$ time, so our entire algorithm still takes $O(kn^5)$ time.

\bibliographystyle{plainurl}
\bibliography{references}{}

\appendix
\section{Omitted proofs}\label{A:LL17}
\begin{proof}[Proof of Lemma~\ref{L:uncross-codirectional}]  
    Let $\mathcal{P} = \{P_1, \dots, P_k\}$ be a solution to the 
    Re-index the terminal pairs such that $\{P_1, \dots, P_h\}$ are pairwise parallel paths and in fact form an equivalence class.
    It is straightforward to verify that the clockwise order of the terminals is then $s_1, \dots, s_h, t_h, \dots t_1$.
    
    We uncross the paths inductively. Suppose paths $P_1, \dots, P_{i-1}$ are uncrossed but $P_i$ crosses $P_{i-1}$.
    Let $x$ be the first vertex of $P_i$ on $P_{i-1}$ and let $y$ be the last. see Figure~\ref{F:uncross-serial}a. 
	Now we exchange $P_i[x,y]$ for $P_{i-1}[x,y]$. 
	In other words, let $P_i' = P_i[s_i, x] \circ P_{i-1}[x,y] \circ P_i[y, t_i]$, and let $\mathcal{P}' = \mathcal{P} \setminus \{P_i\} \cup \{P_i'\}$. 
	Since $P_i[x,y]$ and $P_{i-1}[x,y]$ are shortest paths, $P_i'$ is still a shortest path connecting $s_i$ to $t_i$. Since $P_i'$ only uses arcs in $P_i$ and $P_{i-1}$, $P_i'$ does not conflict with any other path in $\mathcal{P}'$.
	Thus $\mathcal{P}'$ is another set of $k$ non-conflicting shortest paths. Furthermore, paths $\{P_1, \dots, P_{i-1}, P_i'\}$ are pairwise non-crossing.
	
	    \begin{figure}
    \centering
    \begin{tabular}{cr@{\qquad}cr@{\qquad}cr}
    	\includegraphics[scale = 0.35]{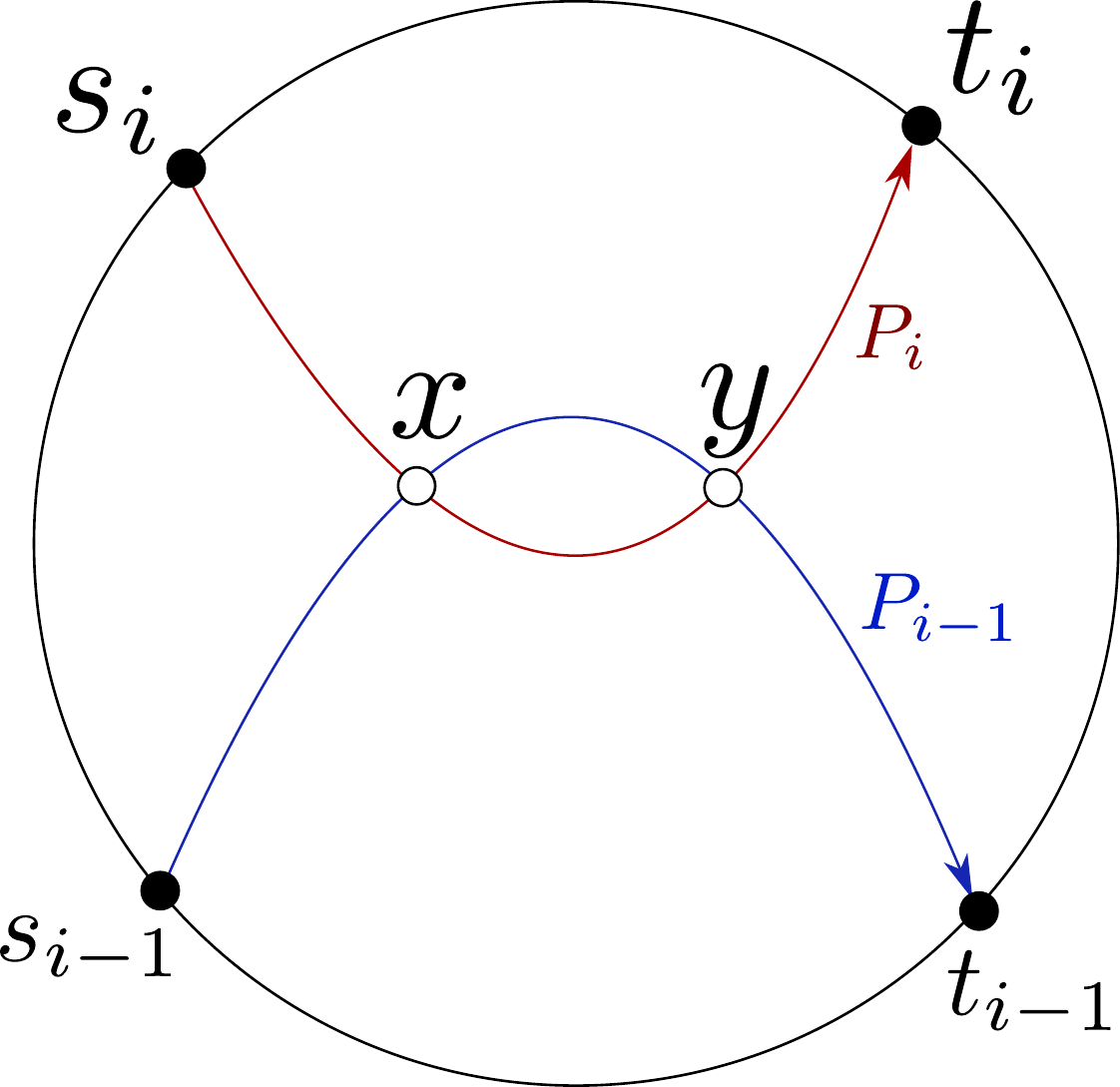} & \hspace{-0.25in} 
    	%(a)
    	%&
    	%\includegraphics[scale = 0.35]{} & \hspace{-0.25in} (b)
    \end{tabular}
    \caption{Uncrossing $P_i$ and $P_{i-1}$: replace $P_i$ with $P_i[s_i, x] \circ P_{i-1}[x, y] \circ P_i[y,t_i]$}
    \label{F:uncross-serial}
    \end{figure}
	
	This shows how to uncross paths of one equivalence class. Simply repeat for all other equivalence classes.
\end{proof}
\begin{proof}[Rest of the proof of Lemma~\ref{L:fact-two}]
        We need to extend the definitions of ``below'' and ``above'' introduced in subsection~\ref{SS:crossing-bound}.
        Suppose $P$ and $Q$ are paths in $G$ whose endpoints are on $\partial G$. Suppose further that the endpoints of any two of $P, Q,$ and $P_i$ do not cross. Let $C_i$ be the portion of $\partial G$ from $s_i$ to $t_i$ that does not contain $s_j$ or $t_j$. There are two cases.
        \begin{enumerate}
            \item Suppose the endpoints of $Q$ are not in $C_i$. The path $Q$ divides the interior of $G$ into two regions. If $P$ lies entirely in the region whose closure contains $s_i$ and $t_i$, then $P$ is below $Q$.
            \item Suppose the endpoints of $Q$ are in $C_i$. The path $Q$ divides the interior of $G$ into two regions. If $P$ lies entirely in the region whose closure does not contain $s_j$ and $t_j$, then $P$ is below $Q$. 
        \end{enumerate}

        Now let $P$ and $Q$ be paths in $\mathcal{P}$. 
        To simplify notation, let $P' = P$ if $P \notin \mathcal{Q}_L \cup \mathcal{Q}_R \cup \mathcal{Q}_A \cup \mathcal{Q}_B$, so that $\mathcal{P}' = \{p' | p \in \mathcal{P}\}$. 
        
        First we will show that $P'$ and $Q'$ do not cross more times than $P$ and $Q$ cross. There are eight different cases (not counting symmetric cases). For the first four cases, suppose $P$ and $Q$ are both parallel to $P_j$, so that $P$ and $Q$ do not cross by Lemma~\ref{L:uncross-codirectional}:
        \begin{enumerate}
            \item Suppose $P \notin \mathcal{Q}_L \cup \mathcal{Q}_R \cup \mathcal{Q}_A \cup \mathcal{Q}_B$. We have $P' = P$. By Lemma~\ref{L:uncross-codirectional}, none of the edges of $P'$ are in $B_p$ or on $A$. By Lemma~\ref{L:opposite}, none of the edges of $P'$ are in $B$, so none of the edges of $P$ are on the boundary of $B_p$. On the other hand, $Q' \bigoplus Q$ consists only of edges in $B_p$ or on its boundary. It follows that if $P$ is below $Q$, then $P'$ is below $Q'$. Similarly, if $P$ is above $Q$, then $P'$ is above $Q'$. In both cases, $P'$ and $Q'$ do not cross.
            \item Suppose $P, Q \in \mathcal{Q}_L$. In both $P$ and $Q$ we replace $A$ with $rev(B)$ to get $P'$ and $Q'$. It follows that if $P$ is below $Q$, then $P'$ is below $Q'$. Similarly, if $P$ is above $Q$, then $P'$ is above $Q'$. In both cases $P'$ and $Q'$ do not cross
            \item Suppose $P \in \mathcal{Q}_L, Q \in \mathcal{Q}_B$. In $P$, we replace $A$ with $rev(B)$ to get $P'$. On the other hand, $Q$ must be below $P$. Since $B$ is below $P$ and $Q' \setminus Q$ consists of edges in $B$, we see that $Q'$ is below $P$ as well. By construction, $Q'$ is also on or below $B$, so $Q'$ is below $P'$ and does not cross it.
            \item Suppose $P, Q \in \mathcal{Q}_B$, and suppose without loss of generality that $P$ is below $Q$. Let $u$ be the first vertex at which $P$ crosses $B$ and let $v$ be the last. Let $s_P$ and $t_P$ be the endpoints of $P$. Then $P[s_P, u]$ and $P[v,t_P]$ are below $Q'$, so $P'$ is below $Q'$. 
        \end{enumerate}
        For the remaining four cases, suppose $P$ is left-to-right but $Q$ is right-to-left. We need to show that $P$ and $Q'$ do not cross each other more than $P$ and $Q$ cross each other:
        \begin{enumerate}
            \item[5.] Suppose $P \notin \mathcal{Q}_L \cup \mathcal{Q}_R \cup \mathcal{Q}_A \cup \mathcal{Q}_B$. 
            We have $P' = P$. As in case 1, none of the edges of $P$ are in $B_p$ or on the boundary of $B_p$. On the other hand, $Q' \bigoplus Q$ consists only of edges in $B_p$ or on its boundary. It follows that $P'$ and $Q'$ do not cross each other more than $P$ and $Q$ do.
            \item[6.] Suppose $P \in \mathcal{Q}_L$ and $Q \in \mathcal{Q}_R$. Path $P$ replaces $A$ with $rev(B)$ to get $P'$, and $Q$ replaces $B$ with $rev(A)$ to get $Q'$. Path $P$ contains $A$, so Lemma~\ref{L:opposite} implies that $P$ does not cross $B$ and so does not enter the interior of $B_p$. Similarly, $Q$ contains $B$ but does not enter the interior of $B_p$. This means that when we replace $P$ and $Q$ with $P'$ and $Q'$, the only vertices that could become crossing points or stop being crossing points are $x_p$ and $x_{p+1}$. But in fact $P$ contains $P_i[pred_i(x_p), succ_i(x_{p+1})] \supsetneq A$ and $Q$ contains $P_j[pred_j(x_{p+1}), succ_j(x_p)] \supsetneq B$, so both $x_p$ and $x_{p+1}$ are points at which $P$ and $Q$ cross and $P'$ and $Q'$ do not cross. Furthermore, no new crossings are added when we replace $P$ and $Q$ with $P'$ and $Q'$.
            \item[7.] Suppose $P \in \mathcal{Q}_L, Q \in \mathcal{Q}_A$. Note that $Q$ and $Q'$ are above $P_j$, while $P' \setminus P$ consists of edges in $B$, which is a subpath of $P_j$. 
            \item[8.] Suppose $P \in \mathcal{Q}_B, Q \in \mathcal{Q}_A$.
        \end{enumerate}
        All other cases are symmetric to these eight cases.
        Note that in case 6, $P'$ and $Q'$ cross each other fewer times than $P$ and $Q$, which is part of what we wanted to show.

\end{proof}

\end{document}